\keywords{probabilistic automata, behavioural pseudometrics, stochastic games}
\tikzstyle{vertex} = [draw, circle, minimum size=0.6cm]
\tikzstyle{bullet} = [fill, color=black, inner sep=0pt, minimum size=0.1cm, circle]
\tikzstyle{vertexbase} = [minimum size=6.5mm, rounded corners=1mm]
\tikzstyle{min} = [draw, fill=red!20, rectangle, vertexbase]
\tikzstyle{max} = [draw, fill=blue!20, rectangle, vertexbase]
\tikzstyle{rnd} = [draw, rectangle, densely dotted, thick, vertexbase]
\tikzstyle{onesink} = [draw, fill=black!20, rectangle, vertexbase]
\tikzstyle{zerosink} = [draw, rectangle, vertexbase]
\theoremstyle{plain} %\crefname{satz}{Satz}{S\"atze}
\def\eg{\textit{e.g.}}
\def\cf{\textit{cf.}}
\providecommand*{\ifempty}[3]{\ifthenelse{\isempty{#1}}{#2}{#3}}
\newcommand{\parensmathoper}[2]{\ensuremath{#1\ifempty{#2}{}{(#2)}}}
\newcommand{\Distr}[1]{\mathcal{D}(#1)} % chktex 36
\newcommand{\argmin}{\mathop{\text{argmin}}}
\newcommand{\C}{\mathcal{C}}
\newcommand{\bdist}[1][\lambda]{\mathbf{d}_{#1}}
\renewcommand{\H}[1]{\parensmathoper{\mathcal{H}}{#1}}
\newcommand{\K}[1]{\parensmathoper{\mathcal{K}}{#1}}
\newcommand{\supp}{\mathop{supp}}
\newcommand{\dirac}{\mathbf{1}}
\newcommand{\tv}{\parensmathoper{\mathop{TV}}} % total variation distance
\renewcommand{\Pr}[2][]{\parensmathoper{\mathrm{Pr}_{#2}^{#1}}} %% probability wrt scheduler
\newcommand{\change}[1]{#1}%{{\color{blue}#1}}
\newcommand{\franck}[1]{#1}%{{\color{red}#1}}
\newcommand{\lopen}[1]{(#1]} % chktex 9
\begin{document}

\title[Computing Probabilistic Bisimilarity Distances for PAs]{Computing Probabilistic Bisimilarity Distances \texorpdfstring{\\}{} for Probabilistic Automata}
\titlecomment{{\lsuper*} This paper is an extended version of an earlier conference paper~\cite{BacciBLMTB19} presented at CONCUR 2019.}

\author[G.~Bacci]{Giorgio Bacci\rsuper{a}}	%required
\author[G.~Bacci]{Giovanni Bacci\rsuper{a}}	%required
\author[K.~G.~Larsen]{Kim G.~Larsen\rsuper{a}}	%required
\author[R.~Mardare]{\texorpdfstring{\\}{}Radu Mardare\rsuper{b}}	%required
\author[Q.~Tang]{Qiyi Tang\rsuper{c}}	%required
\author[F.~van Breugel]{Franck van Breugel\rsuper{d}\texorpdfstring{\vspace{-25mm}}{}}	%required

\address{\lsuper{a}Dept.~of Computer Science, Aalborg University, Denmark}	%required
\email{\{grbacci,giovbacci,kgl\}@cs.aau.dk}  %optional

\address{\lsuper{b}Dept.~of Computer and Information Sciences, University of Strathclyde, Glasgow, UK}	%required
\email{r.mardare@strath.ac.uk}  %optional

\address{\lsuper{c}Dept.~of Computer Science, Oxford University, UK}	%required
\email{qiyi.tang@cs.ox.ac.uk}  %optional

\address{\lsuper{d}DisCoVeri Group, Dept.~of Electrical Engineering and Computer Science, York University, Canada}	%required
\email{franck@eecs.yorku.ca}  %optional

\thanks{Giovanni Bacci and Kim G.~Larsen are supported by the ERC-Project LASSO}	%optional

\thanks{Franck van Breugel is supported by Natural Sciences and Engineering Research Council of Canada}

%% required for running head on odd and even pages, use suitable
%% abbreviations in case of long titles and many authors:

%%%%%%%%%%%%%%%%%%%%%%%%%%%%%%%%%%%%%%%%%%%%%%%%%%%%%%%%%%%%%%%%%%%%%%%%%%%

%% the abstract has to PRECEDE the command \maketitle:
%% be sure not to issue the \maketitle command twice!

\begin{abstract}
  \noindent The probabilistic bisimilarity distance of Deng et al.~has been proposed as a robust quantitative generalization of Segala and Lynch's probabilistic bisimilarity for probabilistic automata.
In this paper, we present a characterization of the bisimilarity distance as the solution of a simple stochastic game. The characterization gives us an algorithm to compute the distances by applying Condon's simple policy iteration on these games. The correctness of Condon's approach, however, relies on the assumption that the games are \emph{stopping}. Our games may be non-stopping in general, yet we are able to prove termination for this extended class of games. Already other algorithms have been proposed in the literature to compute these distances, with complexity in $\textbf{UP} \cap \textbf{coUP}$ and \textbf{PPAD}. Despite the theoretical relevance, these algorithms are inefficient in practice. To the best of our knowledge, our algorithm is the first practical solution.

The characterization of the probabilistic bisimilarity distance mentioned above crucially uses a dual presentation of the Hausdorff distance due to M\'emoli. As an additional contribution, in this paper we
show that M\'emoli's result can be used also to prove that the bisimilarity distance bounds the difference in the
maximal (or minimal) probability of two states to satisfying arbitrary $\omega$-regular properties, expressed, eg., as LTL formulas.
\end{abstract}

\maketitle

%% start the paper here:
\section{Introduction}%
\label{sec:intro}

In~\cite{GiacaloneJS90}, Giacalone et al.\ observed that for reasoning about the behaviour of probabilistic systems, rather than equivalences, a notion of distance is more reasonable in practice since it permits to
capture the degree of difference between two states. This observation motivated the study of
behavioural pseudometrics, that generalize behavioural equivalences in the sense that, when the
distance is zero then the two states are behaviourally equivalent.

The systems we consider in this paper are labelled \emph{probabilistic automata}.
This model was introduced by Segala~\cite{Segala95} to capture both nondeterminism (hence, concurrency)
and probabilistic behaviours.
The labels on states are used to express that certain properties of interest hold in particular states.

\change{In Figure~\ref{fig:gamblers} we consider an example of a probabilistic automaton describing two gamblers,
$f$ and $b$, deciding on which team to bet in a football match.}
\begin{figure}[b]
\centering
\begin{tikzpicture}

\node[vertex] (a) at (0,1) {$f$};
\node[vertex, fill=blue!20] (d) at (2,0) {$t$};
\node[vertex] (b) at (4,1) {$b$};
\node[vertex,fill=red!50] (c) at (2,2) {$h$};

\node[bullet] (e) at (1,1) {};
\node[bullet] (f) at (3,1) {};

\draw[-] (a)--(e);
\draw[->] (e)--(c) node [pos=0.2, above] {$\frac{1}{2}$};
\draw[->] (e)--(d) node [pos=0.2, below] {$\frac{1}{2}$};

\draw[-] (b)--(f);
\draw[->] (f)--(c) node [pos=0.2, xshift=1ex, above] {$\frac{51}{100}$};
\draw[->] (f)--(d) node [pos=0.2, xshift=1ex, below] {$\frac{49}{100}$};

\path[->] (c) edge [loop above, min distance=10mm] node[bullet, above=-0.6mm] {}
 node[below right, xshift=1mm] {$1$} (c);
\path[->] (d) edge [loop below, min distance=10mm] node[bullet, above=-0.6mm] {}
node [above left, xshift=-1mm] {$1$} (d);

\draw[->] (a) to[bend left] node[pos=0.7, above] {$1$} node[pos=0.5,bullet] {} (c);
\draw[->] (b) to[bend right] node[pos=0.7, above] {$1$} node[pos=0.5,bullet] {} (c);
\draw[->] (a) to[bend right] node[pos=0.7, below] {$1$} node[pos=0.5,bullet] {} (d);
\draw[->] (b) to[bend left] node[pos=0.7, below] {$1$} node[pos=0.5,bullet] {} (d);

\end{tikzpicture}

\caption{A probabilistic automaton describing two gablers.}%
\label{fig:gamblers}
\end{figure}
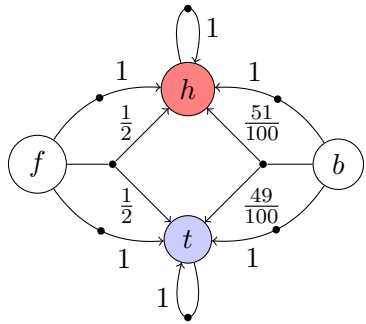
Typically the two gamblers know on which team to bet, but occasionally they prefer to toss a coin to make a decision.
This is represented by the three probabilistic transitions in the state $f$. The first two take $f$ to state $h$ (head) or $t$ (tail) with probability one, the last takes $f$ to states $h$ and $t$ with probability $\frac{1}{2}$ each.
The difference between $f$ and $b$ is that the former uses a fair coin while the latter uses a biased coin landing on heads with slightly higher probability.
Once the decision is taken, it is not changed anymore. This is seen on states $h$ and $t$ which have a single probabilistic transition taking the state to itself with probability one. The states $h$ and $t$ have distinct labels, here represented by colours.

A behavioural pseudometric for probabilistic automata capturing this difference is the
\emph{probabilistic bisimilarity distance} by Deng et al.~\cite{DengCPP06}, introduced as a robust
generalization of Segala and Lynch's probabilistic bisimilarity~\cite{SegalaL94}.
The key ingredients of this pseudometric are the Hausdorff metric~\cite{Hausdorff14} and the Kantorovich metric~\cite{Kantorovich42}, respectively used to capture nondeterministic and probabilistic behaviour.
%The use of the Hausdorff metric dates back to the work of De Bakker and Zucker~\cite{DeBakkerZ82}.
%The Kantorovich metric was first used by van Breugel and Worrell~\cite{BreugelW01}.
In the example above, the behaviours of the states $h$ and $t$ are very different since their labels are different.
As a result, their probabilistic bisimilarity distance is one. On the other hand, the behaviours of the states
$f$ and $b$ are very similar, which is reflected by the fact that their probabilistic bisimilarity distance is
$\frac{1}{100}$.

The first attempt to compute the above distance is due
to Chen et al.~\cite{ChenHL07}, who proposed a doubly exponential-time procedure to
approximate the distances up to any degree of accuracy. The complexity was later improved to
$\textbf{PSPACE}$ by Chattarjee et al.~\cite{ChatterjeeAMR08,ChatterjeeAMR10}. Their solutions exploit
the decision procedure for the existential fragment of the first-order theory of \franck{the} reals. It is worth noting that~\cite{ChenHL07,ChatterjeeAMR08} consider the pseudometric that does not discount the future (a.k.a.\ \emph{undiscounted} distance) which
entails additional algorithmic challenges. Later, Fu~\cite{Fu12} showed that the distances have rational values and that computing the discounted distance can be done in polynomial time by using a value-iteration procedure in combination with the continued fraction algorithm~\cite[Section 6]{Schrijver}.
As for the undiscounted distance, he showed that the threshold problem, i.e., deciding whether the distance is smaller than a given rational, is in $\textbf{NP} \cap \textbf{coNP}$. The same proof can be adapted to show that the decision problem is in $\textbf{UP} \cap \textbf{coUP}$~\cite{Fu13}, where $\textbf{UP}$ is the subclass of $\textbf{NP}$-problems with a unique accepting computation.
Van Breugel and Worrell~\cite{BreugelW14} have later shown that the problem is in $\textbf{PPAD}$, which is short for polynomial parity argument on directed graphs.
Notably, their proof exploits a characterization of the distance as a simple stochastic game.
The above algorithms were presented with the purpose of understanding the complexity of computing bisimilarity distances and, to the best of our knowledge, they have never been implemented. Their implementation would involve either an enumeration of possibly exponentially many fixed-points~\cite{Fu13}, or the use of SMT solvers over the existential fragment of the first order theory of the reals~\cite{ChatterjeeAMR08,ChatterjeeAMR10}. An earlier attempt of approximating the bisimilarity distance for the more specific case of labelled Markov chains by expressing the problem in the existential fragment of the first order theory of the reals was proposed in~\cite{BreugelSW08}. Its latest implementation using CVC4~\cite{BCD+11} is able to handle chains with 82 states in approximately 66 hours\footnote{The code is available at \url{bitbucket.org/discoveri/first-order}.}.
%\textcolor{red}{Despite the theoretical relevance of the above algorithms, their implementations are inefficient in practice since they require the use of exact computer arithmetic making it difficult to handle models with more than five states.}
In this paper, we propose an alternative approach that is inspired by the successful implementations of similar pseudometrics on labelled Markov chains~\cite{BacciLM:tacas13,TangB16,TangB18}.

Our solution is based on a novel characterization of the probabilistic bisimilarity distance as the solution of a
simple stochastic game. Stochastic games were introduced by Shapley~\cite{Shapley53}. A simplified version of these games, called \emph{simple stochastic games}, were studied by Condon~\cite{Condon92}. Several algorithms have been proposed to compute the value function of a simple stochastic game, many using policy iteration.
%As long as there are choices in the strategy of the min or max player
%that are not locally optimal, switch them for one that is locally optimal.
%Hoffman and Karp~\cite{} introduced a policy iteration algorithm for stochastic games in which all non-optimal choices are switched in each iteration.
Condon~\cite{Condon90} proposed an algorithm, known as \emph{simple policy iteration}, that switches
only one non-optimal choice per iteration.
The correctness of Condon's algorithm, however, relies on the assumption that the game is \emph{stopping}.

It turns out that the simple stochastic games characterizing the probabilistic bisimilarity
distances are stopping only when the distances discount the future.
In the case the distance is non-discounting, the corresponding games may not be stopping.
To recover correctness of the policy iteration procedure we adapt Condon's simple policy iteration
algorithm by adding a non-local update of the strategy of the min player and an extra termination condition  based on a notion of ``self-closed'' relation due to Fu~\cite{Fu12}.
The practical efficiency of our algorithm has been evaluated on a significant set of randomly generated
probabilistic automata. The results show that our algorithm performs better than the corresponding
iterative algorithms proposed for the discounted distances in~\cite{Fu12}, even though the theoretical
complexity of our proposal is exponential in the worst case (\cf~\cite{TangB16}) whereas Fu's is polynomial.
%\textcolor{red}{Add here some consideration on the experiments for the undiscounted case}.
The implementation of the algorithms exploits a coupling structure
characterization of the distance that allows us to skip the construction of the simple stochastic
game which may result in an exponential blow up of the memory required for storing the game.

The two characterizations of probabilistic bisimilarity distances proposed in this paper (either via simple stochastic games or coupling structures) crucially use a dual presentation of
the Hausdorff distance due to M\'emoli~\cite{Memoli11}. Still using M\'emoli's result, as an additional contribution to this paper we show that the (undiscounted) bisimilarity distance can be used to bound the difference of the maximal (or minimal) probability of two states satisfying arbitrary $\omega$-regular specifications, expressed, \franck{\eg}, as LTL formulas.
Notably, this result allows us to relate the probabilistic bisimilarity pseudometric of Deng et al.\ to probabilistic model checking of probabilistic automata against linear-time specifications.

\subsubsection*{Synopsis} Section~\ref{sec:prelim} introduces the notation and some preliminary results used in the paper. In Section~\ref{sec:pautomata} we recall the definition of probabilistic bisimilarity distances of Deng et al.\ for probabilistic automata;  then, in Section~\ref{sec:ssg} we propose a characterisation of the probabilistic bisimilarity distances as the values of a simple stochastic game constructed from the automaton, here called probabilistic bisimilarity game. Towards an algorithmic solution for computing bisimilarity distances, in Section~\ref{sec:couplingdistance} we provide an alternative characterisation of the distances in terms of \emph{coupling structures}. Section~\ref{sec:algorithm} describes a procedure for computing the bisimilarity distances based on Condon's simple policy iteration algorithm. In Section~\ref{sec:linearProperties} we discuss the relation between the notion of bisimilarity distance and probabilistic model checking of $\omega$-regular linear-time specifications against probabilistic automata. Finally, Section~\ref{sec:conclusion} concludes with some remarks and future work directions.

\section{Preliminaries and Notation}%
\label{sec:prelim}
%Giorgio -outdated?: we use in many place the notation f(A) for the image of a function f: X -> Y on a subset A of X. We should introduce this notation in the preliminaries.

The set of functions $f$ from $X$ to $Y$ is denoted by $Y^X$. We denote by $f[x/y] \in Y^X$
the \emph{update of $f$} at $x \in X$ with $y \in Y$, defined by $f[x/y](x') = y$ if $x' = x$,
otherwise $f[x/y](x') = f(x')$.

A (1-bounded) \emph{pseudometric} on a set $X$ is a function $d \colon X \times X \to [0,1]$ such that,
%for any $x,y,z \in X$,
$d(x,x) = 0$, $d(x,y) = d(y,x)$, and $d(x,y) \leq d(x,z) + d(z,y)$ for all $x, y, z \in X$.
%The pair $(X,d)$ is called \emph{pseudometric space}.

\subparagraph{Kantorovich lifting.}
%\noindent
%\textbf{Kantorovich lifting.}
A (discrete) probability distribution
on $X$ is a function $\mu \colon X \to [0,1]$ such that $\sum_{x \in X} \mu(x) = 1$, and its support is $\supp(\mu) = \{ x \in X \mid \mu(x) > 0 \}$.  We denote by
$\Distr{X}$ the set of probability distributions on $X$.
A pseudometric $d$ on $X$ can be %systematically
lifted to a pseudometric on probability distributions in $\Distr{X}$ by means of the Kantorovich lifting~\cite{Villani}.

The \emph{Kantorovich lifting} of $d \in {[0,1]}^{X \times X}$ on distributions $\mu, \nu \in \Distr{X}$ is defined by
\change{\begin{equation}
\K{d}(\mu,\nu) =
\sup \left\{ \sum_{x \in X} f(x) \cdot \franck{(\mu(x) - \nu(x))}  \mid f \in L_d  \right\} \,,
\tag{\sc Kantorovich lifting}
\end{equation}
where $L_d$ denotes the set of non-expansive $[0,1]$-valued functions over $X$, i.e., functions
$f \colon X \to [0,1]$ such that, for all $x,y \in X$, $|f(x) - f(y)| \leq d(x,y)$.

The Kantorovich distance has the following well know dual formulation
\begin{equation*}
\K{d}(\mu,\nu) = \min \left \{ \sum_{x,y \in X} d(x,y) \cdot \omega(x,y) \mid  \omega \in \Omega(\mu,\nu)  \right \} \,,
\end{equation*}}
where $\Omega(\mu,\nu)$ denotes the set of \emph{measure-couplings} for the pair $(\mu, \nu)$, i.e., distributions $\omega \in \Distr{X \times X}$ such that, for all $x \in X$, $\sum_{y \in X} \omega(x, y) = \mu(x)$ and $\sum_{y \in X} \omega(y, x) = \nu(x)$.
It is a well known fact that this dual characterisation can be equivalently stated
by ranging $\omega$ over the set of vertices $V(\Omega(\mu, \nu))$ of the polytope $\Omega(\mu, \nu)$, \change{as a concave function on a convex polytope attains its minimum at a vertex of the polytope (see~\cite[page~260]{KleeW67}).} Furthermore, if the set $X$ is finite, the set $V(\Omega(\mu, \nu))$ is finite too\change{~\cite[page~259]{KleeW67}}.

\subparagraph{Hausdorff lifting.}
%\noindent
%\textbf{Hausdorff lifting.}
A pseudometric $d$ on $X$ can be lifted to nonempty subsets of $X$ by means
of the Hausdorff lifting. The \emph{Hausdorff lifting} of $d \in {[0,1]}^{X \times X}$ on nonempty subsets $A, B \subseteq X$
is defined by
\begin{equation}
\H{d}(A,B) = \max \left\{  \adjustlimits \sup_{a \in A} \inf_{b \in B} d(a,b), \,
\adjustlimits \sup_{b \in B} \inf_{a \in A} d(a,b) \right\} \, .
\tag{\sc Hausdorff lifting}
\end{equation}

%%%%%%%%%% Memoli
Following M\'emoli~\cite[Lemma~3.1]{Memoli11}, the Hausdorff lifting has a dual characterization
in terms of \emph{set-couplings}%
\footnote{M\'emoli uses the terminology ``correspondence.'' To avoid confusion, we adopted the same terminology used in~\cite[Section~10.6]{PeyreC19}.}. Given $A,B \subseteq X$, a set-coupling for $(A,B)$ is a relation
$R \subseteq X \times X$ with left and right projections respectively equal to $A$ and $B$, i.e., $\{ a \mid \exists b \in X.\, a \mathrel{R} b \} = A $ and $\{ b \mid \exists a \in X. \, a \mathrel{R} b \} = B$. We write $\mathcal{R}(A,B)$ for the set of the set-couplings for $(A,B)$.
\begin{thmC}[\cite{Memoli11}]%
\label{th:hausdorff}
$\H{d}(A,B) = \inf \{ \sup_{(a,b) \in R} d(a,b) \mid R \in \mathcal{R}(A,B) \}$.
\end{thmC}
Clearly, for finite $A,B$, $\inf$ and $\sup$ in Theorem~\ref{th:hausdorff} can be replaced by $\min$ and $\max$, respectively.
%%%%%%%%%%

\section{Probabilistic Automata and Probabilistic Bisimilarity Distance}%
\label{sec:pautomata}

\change{In this section we recall some definitions and results from the literature. In particular, we introduce the models of interest, \emph{probabilistic automata}, its best known behavioural equivalence, Segala and Lynch's \emph{probabilistic bisimilarity}~\cite{SegalaL94}, and its quantitative generalization due to
Deng et al.~\cite{DengCPP06}.

A probabilistic automaton is a model of computation that combines nondeterministic and probabilistic
behaviours. Similarly to a standard nondeterministic automaton, states are labelled to express that certain properties of interest
hold in that state. A probabilistic automaton in a current state $s \in S$ can nondeterministically proceed to a next probabilistic state $\mu \in \Distr{S}$, representing the probability distribution with which the automaton will move to the next state. This can be formalised as follows:}
\begin{defi}%[probabilistic automaton]
A \emph{probabilistic automaton} (PA) is a tuple $\mathcal{A} = (S, L, \to, \ell)$ consisting of a nonempty finite set $S$ of states, a finite set of labels $L$, a finite total transition relation ${\to} \subseteq S \times \Distr{S}$, and a labelling function $\ell \colon S \to L$.
\end{defi}
For simplicity we assume the transition relation $\to$ to be total, that is, for all $s \in S$, there exists
a $\mu \in \Distr{S}$ such that $(s,\mu) \in {\to}$.
For the remainder of this paper we fix a probabilistic automaton $\mathcal{A} = (S, L, \to, \ell)$. We write $s \to \mu$ to denote $(s,\mu) \in {\to}$ and use $\delta(s)$ to denote the set $\{ \mu \mid s \to \mu\}$ of successor distributions of $s$.

Next we recall the notion of probabilistic bisimilarity due to Segala and Lynch~\cite{SegalaL94} for probabilistic automata. Their definition exploits the notion of lifting of a relation $R \subseteq S \times S$ on states to a relation   $\tilde{R} \subseteq \Distr{S} \times \Distr{S}$ on probability distributions on states,
originally introduced by Jonsson and Larsen~\cite{JonssonL91}, and defined by
$\mu \mathrel{\tilde{R}} \nu$ if there exists a measure-coupling $\omega \in \Omega(\mu,\nu)$ such that
$\supp(\omega) \subseteq R$.
\begin{defi}%
\label{def:bisimulation}
A relation $R \subseteq S \times S$ is a \emph{probabilistic bisimulation} if whenever $s \mathrel{R} t$,
\begin{itemize}
\item $\ell(s) = \ell(t)$,
\item if $s \to \mu$ then there exists $t \to \nu$ such that $\mu \mathrel{\tilde{R}} \nu$, and
\item if $t \to \nu$ then there exists $s \to \mu$ such that $\mu \mathrel{\tilde{R}} \nu$.
\end{itemize}
Two states $s,t \in S$ are \emph{probabilistic bisimilar}, written $s \sim t$, if they are related by some
probabilistic bisimulation.
\end{defi}
Intuitively, two states are probabilistic bisimilar if they have the same label and each transition of the one state
to a distribution $\mu$ can be matched by a transition of the other state to a distribution $\nu$ assigning the same probability to states that behave the same, and vice versa.
Probabilistic bisimilarity is an equivalence relation and the largest probabilistic bisimulation.
%(a proof can~\cite[Proposition~4.3]{Breugel17}).

%As advocated by Giacalone, Jou and Smolka~\cite{GiacaloneJS90}, the notion of bisimilarity can be relaxed by means of a 1-bounded pseudometric measuring with a value in the interval $[0,1]$ the degree of dissimilarity of two states. Intuitively, states at distance zero from each other are bisimilar, while states at distance one are totally incomparable.
%In this spirit,
Deng et al.~\cite{DengCPP06} proposed a family of 1-bounded pseudometrics $\bdist$, parametric on a \emph{discount factor} $\lambda \in \lopen{0,1}$, called \emph{probabilistic bisimilarity pseudometrics}. The pseudometrics $\bdist$ are defined as the least fixed-point of the functions $\Delta_\lambda \colon {[0,1]}^{S \times S} \to {[0,1]}^{S \times S}$
\begin{equation*}
\Delta_\lambda(d)(s,t) =
\begin{cases}
1 & \text{if $\ell(s) \neq \ell(t)$} \\
\lambda \cdot \H{\K{d}}(\delta(s),\delta(t)) & \text{otherwise} \,.
\end{cases}
\end{equation*}
The well-definition of $\bdist$ follows by Knaster-Tarski's fixed point theorem, given the fact that
$\Delta_\lambda$ is a monotone function on the complete partial order of $[0,1]$-valued functions
on $S \times S$ ordered point-wise by $d \sqsubseteq d'$ iff for all $s,t \in S$, $d(s,t) \leq d'(s,t)$.

The fact that probabilistic bisimilarity distances provide a quantitative generalization of
bisimilarity is captured by the following theorem due to Deng et al.~\cite[Corollary~2.14]{DengCPP06}.
\begin{thm}%
\label{th:behaviouraldist}
For all $\lambda \in \lopen{0,1}$, $\bdist(s,t) = 0$ if and only if $s \sim t$.
\end{thm}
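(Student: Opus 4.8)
The plan is to prove the two implications separately, in both cases relating the kernel of $\bdist$ to the largest probabilistic bisimulation $\sim$. The single technical fact driving both directions is that a Kantorovich distance $\K{d}(\mu,\nu)$ vanishes precisely when $\mu$ and $\nu$ admit a measure-coupling supported on the kernel of $d$; this is read off from the dual (coupling) formulation of $\K{d}$ recalled in Section~\ref{sec:prelim}, using that $S$ is finite so the minimum over $\Omega(\mu,\nu)$ is attained and the suprema/infima in the Hausdorff lifting become maxima/minima.

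For the implication $\bdist(s,t)=0 \Rightarrow s \sim t$, I would consider the relation $R = \{(s,t) \mid \bdist(s,t) = 0\}$ and show it is a probabilistic bisimulation; since $\sim$ is the largest one, this yields $R \subseteq {\sim}$. Take $(s,t) \in R$. As $\bdist$ is a fixed point of $\Delta_\lambda$ and $\lambda > 0$, the equation $\bdist(s,t) = \Delta_\lambda(\bdist)(s,t) = 0$ forces $\ell(s) = \ell(t)$ (otherwise the value would be $1$) and $\H{\K{\bdist}}(\delta(s),\delta(t)) = 0$. By finiteness the Hausdorff suprema and infima are attained, so for every $\mu \in \delta(s)$ there is $\nu \in \delta(t)$ with $\K{\bdist}(\mu,\nu) = 0$, and symmetrically. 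The dual formulation then gives a coupling $\omega \in \Omega(\mu,\nu)$ with $\sum_{x,y} \bdist(x,y)\,\omega(x,y) = 0$; since all summands are nonnegative, $\omega(x,y) > 0$ implies $\bdist(x,y) = 0$, i.e.\ $\supp(\omega) \subseteq R$, which is exactly $\mu \mathrel{\tilde{R}} \nu$. This verifies the three clauses of Definition~\ref{def:bisimulation}, so $R$ is a bisimulation.

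For the converse I would define $e \in {[0,1]}^{S \times S}$ by $e(s,t) = 0$ if $s \sim t$ and $e(s,t) = 1$ otherwise, and show that $e$ is a pre-fixed point of $\Delta_\lambda$, i.e.\ $\Delta_\lambda(e) \sqsubseteq e$. Because $\bdist$ is the least fixed point, hence the least pre-fixed point, of the monotone map $\Delta_\lambda$ (Knaster-Tarski), this yields $\bdist \sqsubseteq e$, whence $s \sim t$ implies $\bdist(s,t) \le e(s,t) = 0$. The inequality $\Delta_\lambda(e)(s,t) \le e(s,t)$ is immediate when $s \not\sim t$, the right-hand side being $1$. When $s \sim t$ bisimilarity gives $\ell(s) = \ell(t)$, so $\Delta_\lambda(e)(s,t) = \lambda \cdot \H{\K{e}}(\delta(s),\delta(t))$ and it suffices to show this Hausdorff distance is $0$. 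For each $\mu \in \delta(s)$ the bisimulation conditions for $\sim$ supply $\nu \in \delta(t)$ with $\mu \mathrel{\tilde{\sim}} \nu$, i.e.\ a coupling $\omega \in \Omega(\mu,\nu)$ with $\supp(\omega) \subseteq {\sim}$; then $\sum_{x,y} e(x,y)\,\omega(x,y) = 0$ since $e$ vanishes on $\sim$, so $\K{e}(\mu,\nu) = 0$, and symmetrically. Hence both terms of the Hausdorff lifting vanish and $\H{\K{e}}(\delta(s),\delta(t)) = 0$.

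I expect the only delicate point to be the repeated translation between ``distance zero'' and ``existence of a coupling supported on the relation,'' which rests entirely on the dual characterization of the Kantorovich distance together with the finiteness of $S$; the remaining steps are a routine check of Definition~\ref{def:bisimulation} and an application of Knaster-Tarski. The argument is uniform in $\lambda \in \lopen{0,1}$, the discount factor entering only through the assumption $\lambda > 0$.
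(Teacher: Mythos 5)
Your proof is correct, but there is nothing in the paper to compare it against: the paper does not prove this statement, it imports it verbatim from Deng et al.\ (cited as Corollary~2.14 of \cite{DengCPP06}). Your argument is the standard, self-contained one for results of this kind, and both directions are sound. The forward direction (kernel of $\bdist$ is a probabilistic bisimulation, hence contained in $\sim$) correctly exploits that $\bdist$ is a fixed point of $\Delta_\lambda$, that $\lambda>0$, and that finiteness of $S$ and of $\delta(s),\delta(t)$ turns the Hausdorff $\sup/\inf$ and the Kantorovich $\inf$ into attained minima, so a coupling witnessing $\K{\bdist}(\mu,\nu)=0$ exists and its support lies in the kernel. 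The converse (the $\{0,1\}$-valued function $e$ induced by $\sim$ is a pre-fixed point of $\Delta_\lambda$, hence above the least fixed point by Knaster--Tarski) is equally fine; note that $e$ is indeed a pseudometric because $\sim$ is an equivalence, so the two formulations of $\K{e}$ agree and your use of the coupling form is legitimate --- this matches the paper's own practice, since its later proofs (e.g.\ of Lemma~\ref{lem:prefixfpoint}) also take the coupling form as the working definition of $\K{}$. The only point worth flagging is implicit reliance, in the forward direction, on the fact that the coupling form of $\K{\bdist}$ coincides with whichever form $\Delta_\lambda$ is read with; this is harmless here because $\bdist$ is a pseudometric (as asserted in Section~\ref{sec:pautomata}), but it deserves a sentence if one wants the argument fully self-contained.
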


%%%%%%%%%%%%%%%%%%%%%%%%%%%%%%%%%%%%%%%%%%%%%%%%%%%%

\section{Probabilistic Bisimilarity Distance as a Simple Stochastic Game}%
\label{sec:ssg}

%We show that the bisimilarity distance for probabilistic automata is related to the value function of a
%simple stochastic game defined as in~\cite{}.
%
%\subparagraph*{Simple Stochastic Games. }
A \emph{simple stochastic game} (SSG) consists of a finite directed graph whose vertices are partitioned into sets of \emph{0-sinks}, \emph{1-sinks}, \emph{max vertices}, \emph{min vertices}, and \emph{random vertices}.  The game is played by two players, the \emph{max player} and the \emph{min player}, with a single token.  At each step of the game, the token is moved from a vertex to one of its successors.  At a min vertex the min player chooses the successor, at a max vertex the max player chooses the successor, and at a random vertex the successor is chosen randomly according to a prescribed probability distribution.  The max player wins a play of the game if the token reaches a 1-sink and the min player wins if the play reaches a 0-sink or continues forever without reaching a sink.  Since the game is stochastic, the max player tries to maximize the probability of reaching a 1-sink whereas the min player tries to minimize that probability.

\begin{defi}
A \emph{simple stochastic game} is a tuple $(V, E, P)$ consisting
of
\begin{itemize}
\item
a finite directed graph $(V, E)$ such that
\begin{itemize}
\item
$V$ is partitioned into the sets: $V_0$ of \emph{0-sinks}, $V_1$ of \emph{1-sinks},
$V_{\max}$ of \emph{max vertices}, $V_{\min}$ of \emph{min vertices}, and $V_{\mathrm{rnd}}$ of \emph{random vertices};
\item
the vertices in $V_0$ and $V_1$ have outdegree zero and all other
vertices have outdegree at least one, and
\end{itemize}
\item
a function $P \colon V_{\mathrm{rnd}} \to \Distr{V}$ such that for all $v \in V_{\mathrm{rnd}}$ and $w \in V$, $P(v)(w) > 0$ iff $(v, w) \in E$.
\end{itemize}
\end{defi}

\noindent
\change{The above definition is slightly more general than the one given by Condon in~\cite[Section~2]{Condon92}. Note that the outdegree of min, max and random vertices is at least one (instead of exactly two),
there may be multiple 0-sinks and 1-sinks (rather than exactly one). However, a simple stochastic game
as defined above can be transformed in polynomial-time into a simple stochastic game as defined in~\cite{Condon92}, as shown by Zwick and Paterson~\cite{ZwickP96}.}

A \emph{strategy}, also known as \emph{policy}, for the min player is a function $\sigma_{\min} \colon V_{\min} \to V$ that assigns the target of an outgoing edge to each min vertex, that is, for all $v \in V_{\min}$,
$(v, \sigma_{\min}(v)) \in E$.  Likewise, a strategy for the max player is a function
$\sigma_{\max} \colon V_{\max} \to V$ that assigns the target of an outgoing edge to each max vertex.
These strategies are known as \emph{pure stationary} strategies.  We can restrict ourselves to these strategies \change{since the optimal among all strategies for both players are} of this type (see, for example,~\cite{LL69}).

Such strategies determine a sub-game in which each max vertex and each min vertex has outdegree one (see~\cite[Section~2]{Condon92} for details).  Such a game can naturally be viewed as a Markov chain.  We write $\phi_{\sigma_{\min},\sigma_{\max}} \colon V \to [0,1]$ for the function that gives the probability
of a vertex in this Markov chain to reach a 1-sink.

The \emph{value function} $\phi \colon V \to [0,1]$ of a SSG is defined as $\min_{\sigma_{\min}} \max_{\sigma_{\max}} \phi_{\sigma_{\min},\sigma_{\max}}$.  It is folklore that the value function of a simple stochastic game can be characterised as the least fixed point of the following function (see, for example,~\cite[Section~2.2 and 2.3]{J05}).

\begin{defi}%
\label{definition:Phi}
The function $\Phi \colon {[0,1]}^{V} \to {[0,1]}^{V}$ is defined by
\begin{equation*}
\Phi(f)(v) = \begin{cases}
\; 0 & \text{if $v \in V_0$}\\
\; 1 & \text{if $v \in V_1$}\\
\; \max_{(v, w) \in E} f(w) & \text{if $v \in V_{\max}$}\\
\; \min_{(v, w) \in E} f(w) & \text{if $v \in V_{\min}$}\\
\; \sum_{(v, w) \in E} P(v)(w)\, f(w) & \text{if $v \in V_{\mathrm{rnd}}$}
\end{cases}
\end{equation*}
\end{defi}

\change{The $[0,1]$-valued functions on $V$ can be ordered point-wise by $f \sqsubseteq g$ iff for all $v \in V$,
$f(v) \leq g(v)$. This partial order is complete in ${[0,1]}^V$, with meet and join respectively given by
the point-wise infimum and supremum.

Then, the existence the least fixed point of $\Phi$ is ensured by Knaster-Tarski's fixed point theorem
and the following result.
\begin{prop}%
\label{prop:monotonePhi}
The function $\Phi$ is monotone.
\end{prop}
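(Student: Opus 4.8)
The plan is to prove monotonicity of $\Phi$ directly from its case-based definition, by verifying that each branch is monotone and that the five branches partition $V$ coherently. Suppose $f \sqsubseteq g$, that is, $f(v) \leq g(v)$ for all $v \in V$. I must show $\Phi(f)(v) \leq \Phi(g)(v)$ for every $v \in V$, and I would argue by cases on which block of the partition $v$ belongs to.

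First, the two sink cases are immediate: if $v \in V_0$ then $\Phi(f)(v) = 0 = \Phi(g)(v)$, and if $v \in V_1$ then $\Phi(f)(v) = 1 = \Phi(g)(v)$, so in both cases equality (hence $\leq$) holds regardless of $f$ and $g$. For the remaining three cases the argument rests on elementary monotonicity of the operations $\max$, $\min$, and nonnegatively-weighted sums. For $v \in V_{\max}$, since $f(w) \leq g(w)$ for each successor $w$ with $(v,w) \in E$, taking the maximum over the same finite nonempty index set preserves the inequality, giving $\max_{(v,w) \in E} f(w) \leq \max_{(v,w) \in E} g(w)$. The case $v \in V_{\min}$ is symmetric: pointwise domination is preserved under $\min$ over a fixed index set. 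For $v \in V_{\mathrm{rnd}}$ the summation $\sum_{(v,w) \in E} P(v)(w)\, f(w)$ is a convex combination with fixed nonnegative coefficients $P(v)(w)$, so $f(w) \leq g(w)$ for all $w$ yields the desired inequality term by term.

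Two small points deserve care rather than being genuine obstacles. First, the finite nonempty outdegree of every non-sink vertex (guaranteed by the SSG definition, where $V_0$ and $V_1$ have outdegree zero and all others outdegree at least one) ensures that the $\max$ and $\min$ are taken over nonempty finite sets and are therefore well defined; I would note this explicitly so the case analysis is legitimate. Second, I should confirm that each operation keeps values within $[0,1]$, so that $\Phi$ indeed maps ${[0,1]}^V$ to itself: $\max$ and $\min$ of values in $[0,1]$ remain in $[0,1]$, and a convex combination of values in $[0,1]$ (using that $P(v)$ is a probability distribution, so $\sum_w P(v)(w) = 1$) again lies in $[0,1]$.

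I do not expect any genuinely hard step here; the result is a routine ``monotonicity is inherited from monotone building blocks'' argument. If anything, the only thing to be mildly attentive to is that the index set $\{\, w \mid (v,w) \in E \,\}$ over which the $\max$, $\min$, or sum is taken is the \emph{same} for $f$ and for $g$ (it depends only on $v$ and the graph, not on the function), which is exactly what makes pointwise domination transfer through these operations. With that observation the proof is a clean five-way case split, and monotonicity of $\Phi$ follows.
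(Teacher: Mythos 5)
Your proof is correct and follows exactly the paper's argument: the same five-way case split on $V_0$, $V_1$, $V_{\max}$, $V_{\min}$, $V_{\mathrm{rnd}}$, with the same elementary observations that $\max$, $\min$, and nonnegatively-weighted sums over a fixed index set preserve pointwise domination. Your additional remarks on well-definedness (nonempty outdegree) and on $\Phi$ mapping ${[0,1]}^V$ into itself are harmless refinements, not a different approach.
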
}
\begin{proof}
Let $f,g \in {[0,1]}^{V}$ and $f \sqsubseteq g$.
Let $v \in V$. It suffices to show that $\Phi(f)(v) \leq \Phi(g)(v)$.  We distinguish the following cases.
\begin{itemize}
\item
If $v \in V_0$ then $\Phi(f)(v) = 0 = \Phi(g)(v)$.
\item
If $v \in V_1$ then $\Phi(f)(v) = 1 = \Phi(g)(v)$.
\item
If $v \in V_{\max}$ then $\Phi(f)(v) = \max_{(v, w) \in E} f(w) \leq \max_{(v, w) \in E} g(w) = \Phi(g)(v)$.
\item
If $v \in V_{\min}$ then $\Phi(f)(v) = \min_{(v, w) \in E} f(w) \leq \min_{(v, w) \in E} g(w) = \Phi(g)(v)$.
\item
If $v \in V_{\mathrm{rnd}}$ then
    \[
    \Phi(f)(v) = \sum_{(v, w) \in E} P(v)(w) \, f(w) \leq \sum_{(v, w) \in E} P(v)(w) \, g(w) = \Phi(g)(v).
    \qedhere
    \]
\end{itemize}
\end{proof}

\noindent
\change{The set ${[0,1]}^V$ can be turned into a Banach space by means of the supremum norm $\| f \| = \max_{v \in V} f(v)$. Recall that a function $F \colon {[0,1]}^V \to {[0,1]}^V$ is non-expansive if for all $f, g \in {[0,1]}^V$, $\| f - g \| \geq \| F(f) - F(g) \|$.

\begin{prop}%
\label{prop:nonexpPhi}
The function $\Phi$ is nonexpansive.
\end{prop}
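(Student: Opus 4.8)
The plan is to prove nonexpansiveness of $\Phi$ by reducing the global bound to a pointwise bound: it suffices to show that for every $v \in V$ we have $|\Phi(f)(v) - \Phi(g)(v)| \leq \|f - g\|$, since then taking the maximum over $v$ gives $\|\Phi(f) - \Phi(g)\| \leq \|f-g\|$, which is exactly the non-expansiveness inequality. So first I would fix $f, g \in {[0,1]}^V$ and an arbitrary vertex $v$, and proceed by the same case analysis on the five vertex types used in the proof of Proposition~\ref{prop:monotonePhi}.

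The sink cases are immediate: if $v \in V_0$ or $v \in V_1$ then $\Phi(f)(v)$ and $\Phi(g)(v)$ coincide (both $0$, respectively both $1$), so their difference is $0 \leq \|f-g\|$. The random case follows from convexity: since $\Phi(f)(v) - \Phi(g)(v) = \sum_{(v,w)\in E} P(v)(w)\,(f(w) - g(w))$ and the weights $P(v)(w)$ are nonnegative and sum to one, the absolute value is bounded by $\sum_{(v,w)\in E} P(v)(w)\,|f(w)-g(w)| \leq \max_w |f(w) - g(w)| = \|f-g\|$.

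The only genuinely non-trivial cases are the max and min vertices, and here I expect the one recurring technical point to be the elementary lemma that for nonempty finite sets of reals indexed by the same successor set, $|\max_w f(w) - \max_w g(w)| \leq \max_w |f(w) - g(w)|$, and similarly for $\min$. The standard argument is symmetric: let $w^\ast$ attain $\max_{(v,w)\in E} f(w)$; then $\Phi(f)(v) = f(w^\ast) \leq g(w^\ast) + |f(w^\ast)-g(w^\ast)| \leq \Phi(g)(v) + \|f-g\|$, so $\Phi(f)(v) - \Phi(g)(v) \leq \|f-g\|$, and exchanging the roles of $f$ and $g$ yields the matching lower bound, hence $|\Phi(f)(v)-\Phi(g)(v)| \leq \|f-g\|$. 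The $\min$ case is completely analogous, choosing $w^\ast$ to attain the minimum for whichever function is larger. Since $\|f-g\| = \max_v |f(v) - g(v)|$ dominates every pointwise difference appearing in these estimates, the bound $\max_w |f(w)-g(w)| \leq \|f-g\|$ is free, and assembling the five cases completes the argument.
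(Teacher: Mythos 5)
Your proposal is correct and follows essentially the same route as the paper's proof: a pointwise case analysis over the five vertex types, with the sink cases trivial, the random case by convexity of the probability weights, and the max/min cases by the standard inequality $|\max_w f(w) - \max_w g(w)| \leq \max_w |f(w)-g(w)|$ (the paper phrases this via a without-loss-of-generality assumption and a maximizer/minimizer $x$, whereas you argue both one-sided bounds symmetrically, which is only a cosmetic difference).
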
}
\begin{proof}
Let $f,g \in {[0,1]}^{V}$. Let $v \in V$.  It suffices to show that $| \Phi(f)(v) - \Phi(g)(v) | \leq \| f - g \|$.  We distinguish the following cases.
\begin{itemize}
\item
If $v \in V_0$ then $|\Phi(f)(v) - \Phi(g)(v)| = |0-0| = 0 \leq \| f - g \|$.
\item
If $v \in V_1$ then $|\Phi(f)(v) - \Phi(g)(v)| = |1-1| = 0 \leq \| f - g \|$.
\item
Let $v \in V_{\max}$.  Without loss of generality, assume that $\max_{(v, w) \in E} f(w) \geq \max_{(v, w) \in E} g(w)$.  Then
\begin{align*}
|\Phi(f)(v) - \Phi(g)(v)|
& =  \left |\max_{(v, w) \in E} f(w) - \max_{(v, w) \in E} g(w) \right |\\
& = \max_{(v, w) \in E} f(w) - \max_{(v, w) \in E} g(w)\\
& = f(x) - \max_{(v, w) \in E} g(w)\\
& \leq f(x) - g(x)\\
& \leq \| f - g \| \,,
\end{align*}
\change{where $x \in V$ realises the maximum of $\{ f(w) \mid (v,w) \in E \}$.}
\item
Let $v \in V_{\min}$.  Without loss of generality, assume that $\min_{(v, w) \in E} f(w) \geq \min_{(v, w) \in E} g(w)$.  Then
\begin{align*}
|\Phi(f)(v) - \Phi(g)(v)|
& =  \left |\min_{(v, w) \in E} f(w) - \min_{(v, w) \in E} g(w) \right |\\
& = \min_{(v, w) \in E} f(w) - \min_{(v, w) \in E} g(w)\\
& = \min_{(v, w) \in E} f(w) - g(x)\\
& \leq f(x) - g(x)\\
& \leq \| f - g \| \,,
\end{align*}
\change{where $x \in V$ realises the minimum of $\{ g(w) \mid (v,w) \in E \}$.}
\item
If $v \in V_{\mathrm{rnd}}$ then
\begin{align*}
|\Phi(f)(v) - \Phi(g)(v)|
& =  \left |\sum_{(v, w) \in E} P(v)(w) \, f(w) - \sum_{(v, w) \in E} P(v)(w) \, g(w) \right |\\
& =  \left |\sum_{(v, w) \in E} P(v)(w) \, (f(w) - g(w)) \right |\\
& \leq \sum_{(v, w) \in E} P(v)(w) \, \| f - g \|\\
& \leq \| f - g \| \,.
\qedhere
\end{align*}
\end{itemize}
\end{proof}

\subsection{A Probabilistic Bisimilarity Game.}

Fix a probabilistic automaton $\mathcal{A}$ and $\lambda \in \lopen{0,1}$. We will characterise the probabilistic bisimilarity distances as values of a simple stochastic game, which we call the \emph{probabilistic bisimilarity game}, where the min player tries to show that two states are probabilistic bisimilar, while the max player tries to prove the opposite.

In our probabilistic bisimilarity game, there is a vertex $(s,t)$ for each pair states $s$ and $t$ in $\mathcal{A}$.  If $\ell(s) \neq \ell(t)$ then the vertex $(s, t)$ is a 1-sink.  Otherwise, $(s, t)$ is a min vertex.  In this vertex, the min player selects a set $R \in \mathcal{R}(\delta(s), \delta(t))$ of pairs of transitions.
%Each pair consists of a transition starting in state~$s$ and another  starting in state $t$.
This set $R$ captures potential matchings of transitions from state $s$ and state $t$.  Subsequently, the max player chooses a pair of transitions from the set $R$.  Once the max player has chosen a pair $(\mu, \nu)$ from the set $R$ corresponding to the transitions $s \rightarrow \mu$ and $t \rightarrow \nu$, the min player can choose a measure-coupling $\omega \in \Omega(\mu,\nu)$.  To ensure that the game graph is finite, we restrict our attention to the vertices $V(\Omega(\mu, \nu))$ of the polytope $\Omega(\mu, \nu)$.  Such a measure-coupling $\omega$ captures a matching of the probability distributions $\mu$ and $\nu$.  Recall that a measure-coupling is a probability distribution on $S \times S$.  From a random vertex $\omega$, the game proceeds to vertex $(u, v)$ with probability $\lambda \cdot \omega(u, v)$ and
to the $0$-sink vertex $\bot$ with probability $1-\lambda$.
Intuitively, the choices of $R \in \mathcal{R}(\delta(s), \delta(t))$ and then $(\mu, \nu) \in R$, performed respectively by the min and the max player, correspond to the $\min$ and $\max$ of Theorem~\ref{th:hausdorff}; analogously, the selection of $\omega \in V(\Omega(\mu, \nu))$ by the min player models the $\min$ in the definition of the Kantorovich lifting.

\medskip
Formally, our probabilistic bisimilarity game for the automaton $\mathcal{A}$ is defined as follows.
\begin{defi}%
\label{def:bssg}
Let $\lambda \in \lopen{0,1}$. The \emph{probabilistic bisimilarity game} $(V, E, P)$ is defined by
\begin{itemize}
\item $V_0 = \{ \bot \}$,
\item $V_1 = \big\{ (s, t) \in S \times S \mid \ell(s) \not= \ell(t) \big\}$,
\item $V_{\max} = \bigcup \big\{ \mathcal{R}(\delta(s), \delta(t)) \mid (s, t) \in V_{\min} \big\}$,
\item $V_{\min} = \big\{ (s, t) \in S \times S \mid \ell(s) = \ell(t) \big\} \cup \bigcup \big\{ R \mid R \in V_{\max} \big\}$,
\item $V_{\mathrm{rnd}} = \bigcup \big\{ V(\Omega(\mu, \nu)) \mid (\mu, \nu) \in V_{\min} \big\}$,
\end{itemize}
\begin{align*}
%V_0 &= \{ \bot \} \,,
%\tag{$0$-sinks} \\
%V_1 &= \big\{ (s, t) \in S \times S \mid \ell(s) \not= \ell(t) \big\} \,,
%\tag{$1$-sinks} \\
%V_{\max} &= \bigcup \big\{ \mathcal{R}(\delta(s), \delta(t)) \mid (s, t) \in V_{\min} \big\} \,,
%\tag{max vertices} \\
%V_{\min} &= \big\{ (s, t) \in S \times S \mid \ell(s) = \ell(t) \big\} \cup \bigcup \big\{ R \mid R \in V_{\max} \big\} \,,
%\tag{min vertices} \\
%V_{\mathrm{rnd}} &= \bigcup \big\{ V(\Omega(\mu, \nu)) \mid (\mu, \nu) \in V_{\min} \big\} \,,
%%V_{\mathrm{rnd}} &= \big\{ \lambda \omega + (1-\lambda) \dirac_{\bot} \mid
%%(\mu, \nu) \in V_{\min} \land \omega \in V(\Omega(\mu, \nu)) \big\} \,,
%\tag{random vertices}  \\
E &=
 \big\{ ((s, t), R) \mid (s, t) \in V_{\min} \land R \in \mathcal{R}(\delta(s), \delta(t)) \big\} \cup {}  \\
&\phantom{{} = {}} \big\{ (R, (\mu, \nu)) \mid R \in V_{\max} \land (\mu, \nu) \in R \big\} \cup {} \\
&\phantom{{} = {}} \big\{ ((\mu, \nu), \omega) \mid (\mu, \nu) \in V_{\min} \land \omega \in V(\Omega(\mu, \nu)) \big\} \cup {} \\
&\phantom{{} = {}} \big\{ (\omega, (u, v)) \mid \omega \in V_{\mathrm{rnd}} \land (u, v) \in \supp(\omega) \big\} \cup \big\{ (\omega, \bot) \mid \omega \in V_{\mathrm{rnd}} \big\}
\,,
\end{align*}
and, for all $\omega \in V_{\mathrm{rnd}}$ and $(s,t) \in \supp(\omega)$, $P(\omega)((s,t)) = \lambda \cdot \omega(s,t)$ and $P(\omega)(\bot) = 1-\lambda$.
\end{defi}

%%%%%%%%%%%%%%%%%%%%%%%%%%%%%%%%
%%%%%%%%%%%%%%%%%%%%%%%%%%%%%%%%
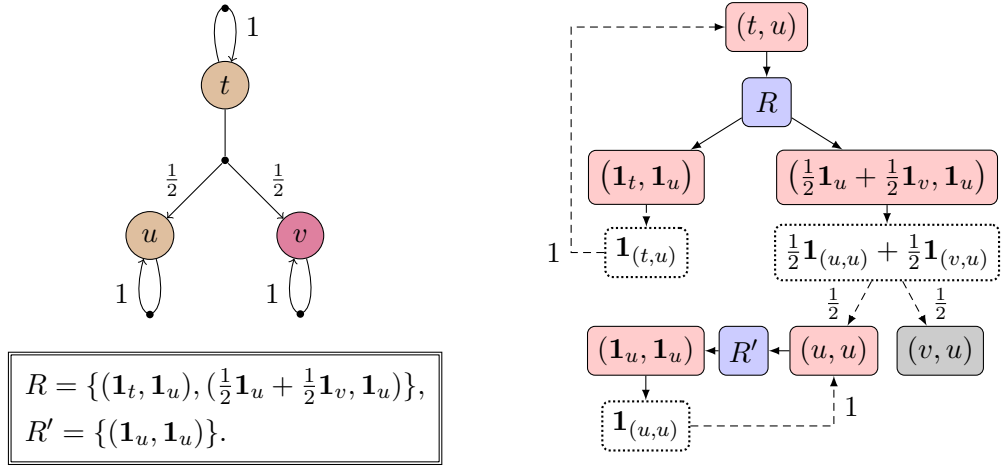
\begin{figure}[t]
\centering

%Probabilistic Automaton
\tikzstyle{vertexbrown} = [vertex, fill=brown!50]
\tikzstyle{squarepurple} = [vertex, fill=purple!50]
\tikzstyle{distribution} = [densely dashed]
\begin{tikzpicture}[baseline=(current bounding box.center)]

\node[vertexbrown] (t) at (2,2) {$t$};
\node[vertexbrown] (u) at (1,0) {$u$};
\node[squarepurple] (v) at (3,0) {$v$};

\node[bullet] (b) at (2,1){};

\draw[-] (t) to node [midway, left] {}  (b);
\draw[->] (b) to node [pos=0.3, left=0.2] {\small $\frac{1}{2}$}  (u);
\draw[->] (b) to node [pos=0.3, right=0.2] {\small $\frac{1}{2}$}  (v);

\path[->] (t) edge [loop above, min distance=10mm] node[bullet, above=-2pt] {}
  node [right, pos=0.8] {$1$} (t); %node [left] {$a$}
\path[->] (u) edge [loop below, min distance=10mm] node[bullet, above=-2pt] {}
node [left, pos=0.8] {$1$} (u); %node [right] {$b$}
\path[->] (v) edge [loop below, min distance=10mm] node[bullet, above=-2pt] {}
node [left, pos=0.8] {$1$} (v); %node [right] {$c$}

\node[draw, double, align=left, inner sep=2mm] at (2,-2.3)
{$\phantom{}R = \{ (\dirac_t, \dirac_u), (\frac{1}{2}\dirac_u + \frac{1}{2}\dirac_v, \dirac_u) \}$, \\[1ex]
$R' = \{ (\dirac_u, \dirac_u) \}$.};
\end{tikzpicture}
\hspace{1cm}
%
%Bisimilarity Game
\begin{tikzpicture}[node distance=1cm, baseline=(current bounding box.center)]

\node[min] (tu) at (0,0) {$(t,u)$};
\node[max, below of=tu] (R) {$R$};
\node[min, below of=R, xshift=-1.6cm] (r1) {$\big(\dirac_t, \dirac_u\big)$};
\node[min, below of=R, xshift=1.6cm] (r2) {$\big(\frac{1}{2}\dirac_u + \frac{1}{2}\dirac_v, \dirac_u\big)$};

\node[rnd, below of=r1] (omega1)  {$\dirac_{(t,u)}$};
\node[rnd, below of=r2] (omega2)  {$\frac{1}{2}\dirac_{(u,u)} + \frac{1}{2}\dirac_{(v,u)}$};
\node[min, below left of=omega2, yshift=-0.6cm] (uu) {$(u,u)$};
\node[onesink, below right of=omega2, yshift=-0.6cm] (vu) {$(v,u)$};
\node[max, left of=uu, xshift=-2mm] (Rprime) {$R'$};
\node[min, left of=Rprime, xshift=-3mm] (rp) {$(\dirac_u, \dirac_u)$};
\node[rnd, below of=rp] (omega3) {$\dirac_{(u,u)}$};

%\node (dots) at ($(uu)+(down:1cm)$) {$\dots$};

\path[-latex]
(tu) edge (R)
(R) edge (r1)
      edge (r2)
(r1) edge[distribution] (omega1)
(r2) edge[distribution] (omega2)
(uu) edge (Rprime)
(Rprime) edge (rp)
(rp) edge (omega3);
\draw[-latex, distribution] (omega1) -- +(left:1cm) node[left] {$1$} |- (tu);
\draw[-latex, distribution] (omega3) -| node[above right] {$1$} (uu);
\draw[-latex, distribution] (omega2) -- node[left=1mm] {\small $\frac{1}{2}$} (uu);
\draw[-latex, distribution] (omega2) -- node[right=1mm] {\small $\frac{1}{2}$} (vu);
\end{tikzpicture}

\caption{(Top left:) A probabilistic automaton and (Right:) the associated simple stochastic game constructed
as in Definition~\ref{def:bssg} for $\lambda = 1$ (only the portion reachable
from $(t,u)$ is shown), where $\dirac_x$ denotes the Dirac distribution concentrated at $x$.}%
\label{fig:bisimilaritygame}
\end{figure}
%%%%%%%%%%%%%%%%%%%%%%%%%%%%%%%%
%%%%%%%%%%%%%%%%%%%%%%%%%%%%%%%%

By construction of the probabilistic bisimilarity game, there is a direct correspondence between the function $\Phi$ from Definition~\ref{definition:Phi} associated to the probabilistic bisimilarity game and the function $\Delta_\lambda$ from Section~\ref{sec:pautomata} associated to the probabilistic automaton.  From this correspondence it is straightforward that the respective least fixed points of $\Phi$ and $\Delta_\lambda$ agree, that is, the probabilistic bisimilarity distances of a probabilistic automaton are the values of the corresponding vertices of the probabilistic bisimilarity game.

\begin{thm}%
\label{thm:game}
For all $\lambda \in \lopen{0,1}$ and $s,t \in S$, $\bdist(s,t) = \phi(s,t)$.
\end{thm}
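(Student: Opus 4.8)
The plan is to exploit the observation made just before the statement—that $\Phi$ and $\Delta_\lambda$ are essentially the same map, read through the four successive layers of game vertices—and then to match their least fixed points by a sandwiching argument. First I would unfold $\Phi$ along a state pair $(s,t)$. If $\ell(s)\neq\ell(t)$ then $(s,t)\in V_1$ and $\Phi(f)(s,t)=1$, matching the first clause of $\Delta_\lambda$. If $\ell(s)=\ell(t)$ then $(s,t)$ is a min vertex whose successors are the set-couplings $R\in\mathcal{R}(\delta(s),\delta(t))$; each such $R$ is a max vertex with successors the pairs $(\mu,\nu)\in R$; each $(\mu,\nu)$ is a min vertex with successors the measure-couplings $\omega\in V(\Omega(\mu,\nu))$; and each $\omega$ is a random vertex leading to $(u,v)\in\supp(\omega)$ with probability $\lambda\cdot\omega(u,v)$ and to $\bot\in V_0$ with probability $1-\lambda$.

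Now suppose $f$ is any fixed point of $\Phi$. Since $\bot\in V_0$, the fixed-point equation forces $f(\bot)=0$, so at every random vertex $f(\omega)=\lambda\sum_{(u,v)}\omega(u,v)\,f(u,v)$. Writing $d(s,t)=f(s,t)$ for the values on $S\times S$ and composing the four layers, I obtain, for $\ell(s)=\ell(t)$,
\[
f(s,t)=\min_{R\in\mathcal{R}(\delta(s),\delta(t))}\ \max_{(\mu,\nu)\in R}\ \min_{\omega\in V(\Omega(\mu,\nu))}\ \lambda\sum_{(u,v)}\omega(u,v)\,d(u,v).
\]
The innermost minimisation is exactly the dual Kantorovich formula: since $\sum_{(u,v)}\omega(u,v)\,d(u,v)$ is linear in $\omega$, its minimum over the polytope $\Omega(\mu,\nu)$ is attained at a vertex, so it equals $\K{d}(\mu,\nu)$. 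The surrounding $\min_{R}\max_{(\mu,\nu)\in R}$ is precisely the dual (inf--sup, hence min--max for the finite sets $\delta(s),\delta(t)$) presentation of the Hausdorff lifting from Theorem~\ref{th:hausdorff} applied to the pseudometric $\K{d}$. Pulling the positive constant $\lambda$ outside all the $\min$/$\max$ gives $f(s,t)=\lambda\cdot\H{\K{d}}(\delta(s),\delta(t))$, so that $d=\Delta_\lambda(d)$: the restriction of every $\Phi$-fixed point to $S\times S$ is a $\Delta_\lambda$-fixed point.

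Conversely, given any fixed point $d$ of $\Delta_\lambda$, I would define $f$ on $V$ by $f(s,t)=d(s,t)$, $f(\bot)=0$, $f(\mu,\nu)=\K{d}(\mu,\nu)$ on distribution-pair min vertices, $f(R)=\max_{(\mu,\nu)\in R}\K{d}(\mu,\nu)$ on max vertices, and $f(\omega)=\lambda\sum_{(u,v)}\omega(u,v)\,d(u,v)$ on random vertices, and then check directly against Definition~\ref{definition:Phi}—using the same two duality facts in the opposite direction—that $f=\Phi(f)$. Thus every $\Delta_\lambda$-fixed point extends to a $\Phi$-fixed point. These two constructions close the argument by least-fixed-point sandwiching: the extension of $\bdist$ is a $\Phi$-fixed point, so minimality of $\phi$ gives $\phi(s,t)\leq\bdist(s,t)$ for all $(s,t)\in S\times S$; while the restriction of $\phi$ is a $\Delta_\lambda$-fixed point, so minimality of $\bdist$ gives $\bdist\sqsubseteq\phi$ on $S\times S$. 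Together these yield $\bdist(s,t)=\phi(s,t)$.

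The step I expect to be the crux is the middle one: verifying that a single application of $\Delta_\lambda$ is reproduced exactly by the four-layer unfolding of $\Phi$, i.e. that the nested $\min$--$\max$--$\min$ over couplings collapses to $\lambda\cdot\H{\K{d}}$. This is where the dual characterisations—M\'emoli's Theorem~\ref{th:hausdorff} for the outer Hausdorff layer and the vertex form of the Kantorovich dual for the inner layer—must be invoked in precisely the right places, and where finiteness of $\delta(s),\delta(t)$ and of $V(\Omega(\mu,\nu))$ is used to turn $\inf$/$\sup$ into $\min$/$\max$ so the game is well defined. Once this identity is established, the fixed-point correspondence and the sandwiching are routine.
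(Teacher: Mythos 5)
Your strategy is genuinely different from the paper's. The paper never reasons about arbitrary fixed points: it invokes Propositions~\ref{prop:monotonePhi} and~\ref{prop:nonexpPhi} together with~\cite[Corollary~1]{Breugel12} to write $\phi$ and $\bdist$ as least upper bounds of the iterates of $\Phi$ and $\Delta_\lambda$ from $\mathbf{0}$, and then proves $\Phi^{4n}(\mathbf{0})(s,t)=\Delta_\lambda^{n}(\mathbf{0})(s,t)$ by induction on $n$ --- the same four-layer collapse you single out as the crux, but applied only to those approximants. Your restriction/extension correspondence plus sandwiching by leastness needs only Knaster--Tarski on both sides, so it dispenses with non-expansiveness and the closure-ordinal citation; that is a real economy. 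However, two of your steps do not survive as written.

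First, your extension of a $\Delta_\lambda$-fixed point $d$ is not a $\Phi$-fixed point when $\lambda<1$. With your values, $\Phi(f)(\mu,\nu)=\min_{\omega\in V(\Omega(\mu,\nu))}f(\omega)=\lambda\K{d}(\mu,\nu)$, which differs from your $f(\mu,\nu)=\K{d}(\mu,\nu)$ whenever $\K{d}(\mu,\nu)>0$; likewise $\Phi(f)(s,t)=\min_{R}f(R)=\H{\K{d}}(\delta(s),\delta(t))$, whereas $f(s,t)=d(s,t)=\lambda\H{\K{d}}(\delta(s),\delta(t))$. The discount enters the game only at the random vertices, but in $\Delta_\lambda$ it multiplies the whole Hausdorff--Kantorovich expression, so it must be threaded through the intermediate layers: set $f(\mu,\nu)=\lambda\K{d}(\mu,\nu)$ and $f(R)=\lambda\max_{(\mu,\nu)\in R}\K{d}(\mu,\nu)$, keeping your other clauses. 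With that correction all cases of Definition~\ref{definition:Phi} check out; the repair is mechanical, and for $\lambda=1$ your original values are fine.

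Second, and more subtly, your Step 1 asserts that the restriction $d$ of \emph{any} $\Phi$-fixed point satisfies $d=\Delta_\lambda(d)$, via the identity $\min_{\omega\in\Omega(\mu,\nu)}\sum_{u,v}d(u,v)\,\omega(u,v)=\K{d}(\mu,\nu)$. The preliminaries state this duality for arbitrary $d$, but with $\K{}$ defined through non-expansive functions it genuinely requires $d$ to be a pseudometric (it already fails whenever $d(u,u)>0$), and restrictions of arbitrary $\Phi$-fixed points need not be pseudometrics. Concretely, in the game of Figure~\ref{fig:bisimilaritygame} with $\lambda=1$, the fixed-point equations leave the value of the cycle through $(u,u)$ unconstrained, so for any $c\in(0,1]$ there is a $\Phi$-fixed point $f$ with $f(u,u)=c$; its restriction $d$ has $\K{d}(\dirac_u,\dirac_u)=0$, hence $\Delta_1(d)(u,u)=0\neq c$, so Step 1 is false as stated. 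The paper is immune because it applies the duality only to the approximants $\Delta_\lambda^{n}(\mathbf{0})$, which are pseudometrics. Your sandwich can still be saved, because it only uses the restriction of the \emph{least} fixed point $\phi$, and for that it suffices to show this restriction is a \emph{prefix} point of $\Delta_\lambda$: weak duality, $\K{d}(\mu,\nu)\leq\sum_{u,v}d(u,v)\,\omega(u,v)$ for every $d\in[0,1]^{S\times S}$ and every $\omega\in\Omega(\mu,\nu)$, holds unconditionally, and plugged into your unfolding it gives $\Delta_\lambda(d)\sqsubseteq d$ for $d$ the restriction of $\phi$, whence $\bdist\sqsubseteq d$ by Knaster--Tarski. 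The extension direction is untouched by this issue, since there the duality is applied to $\bdist$ itself, which is a pseudometric. With these two repairs your argument is complete.
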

\begin{proof}%[proof of Theorem~\ref{thm:game}]
The proof is similar to that of~\cite[Theorem~14]{BreugelW14}.

Let $\phi$ be the value function of the probabilistic bisimilarity game.
\change{Since $\Phi$ is monotone and non-expansive (Propositions~\ref{prop:monotonePhi} and~\ref{prop:nonexpPhi}),} we conclude from~\cite[Corollary~1]{Breugel12} that the closure ordinal of $\Phi$ is $\omega$, that is, $\phi$ is the least upper bound of $\{\, \Phi^n(\mathbf{0}) \mid n \in \mathbb{N} \,\}$, where the function $\mathbf{0}$ maps every vertex to zero.  Similarly, $\bdist$ is the least upper bound of $\{\, \Delta_\lambda^n(\mathbf{0}) \mid n \in \mathbb{N} \,\}$, where the function $\mathbf{0}$ maps every state pair to zero.  Therefore, it suffices to show that for all $s,t \in S$ and $n \in \mathbb{N}$,
\begin{equation*}
\Phi^{4n}(\mathbf{0})(s, t) = \Delta_\lambda^n(\mathbf{0})(s, t)
\end{equation*}
by induction on $n$. Obviously, the above holds if $n=0$. Let $n > 0$.  We distinguish the following cases.
\begin{itemize}
\item
If $\ell(s) \not= \ell(t)$ then the vertex $(s, t)$ is a 1-sink and, hence,
$
\Phi^{4n}(\mathbf{0})(s, t)
= 1
= \Delta_\lambda^n(\mathbf{0})(s, t).
$
\item
If $\ell(s) = \ell(t)$ then
\begin{align*}
&\Phi^{4n}(\mathbf{0})(s, t) = \\
&= \min_{R \in \mathcal{R}(\delta(s), \delta(t))} \Phi^{4n-1}(\mathbf{0})(R)\\
&= \min_{R \in \mathcal{R}(\delta(s), \delta(t))} \max_{(\mu, \nu) \in R} \Phi^{4n-2}(\mathbf{0})(\mu, \nu)\\
&= \min_{R \in \mathcal{R}(\delta(s), \delta(t))} \max_{(\mu, \nu) \in R} \min_{\omega \in V(\Omega(\mu, \nu))} \Phi^{4n-3}(\mathbf{0})(\omega)\\
&= \min_{R \in \mathcal{R}(\delta(s), \delta(t))} \max_{(\mu, \nu) \in R} \min_{\omega \in V(\Omega(\mu, \nu))}
 \sum_{(u, v) \in \supp(\omega)} \lambda \omega(u, v) \, \Phi^{4n-4}(\mathbf{0})(u, v) + (1 - \lambda) \Phi^{4n-4}(\mathbf{0})(\bot) \\
&= \min_{R \in \mathcal{R}(\delta(s), \delta(t))} \max_{(\mu, \nu) \in R} \min_{\omega \in V(\Omega(\mu, \nu))}
\lambda \sum_{u, v \in S} \omega(u, v) \, \Phi^{4n-4}(\mathbf{0})(u, v)
\tag{$\bot$ is a $0$-sink}\\
&= \min_{R \in \mathcal{R}(\delta(s), \delta(t))} \max_{(\mu, \nu) \in R} \min_{\omega \in V(\Omega(\mu, \nu))}
\lambda \sum_{u, v \in S} \omega(u, v) \, \Delta_\lambda^{n-1}(\mathbf{0})(u, v)
\tag{by induction}\\
& = \lambda\cdot \min_{R \in \mathcal{R}(\delta(s), \delta(t))} \max_{(\mu, \nu) \in R} \K{\Delta_\lambda^{n-1}(\mathbf{0})}(\mu, \nu)\\
& = \lambda \cdot \H{ {\K{ \Delta_\lambda^{n-1}(\mathbf{0})} } }(\delta(s), \delta(t))
\tag{Theorem~\ref{th:hausdorff}} \\
& = \Delta_\lambda^{n}(\mathbf{0})(s, t) \,. \qedhere
\end{align*}
\end{itemize}
\end{proof}

\noindent
Consider a state pair $(s, t)$ with $s \sim t$.  By Theorem~\ref{th:behaviouraldist}, $\bdist(s, t) = 0$.
Hence, from Theorem~\ref{thm:game} we can conclude that $\phi(s, t) = 0$.  Therefore, by pre-computing probabilistic bisimilarity, $(s, t)$ can be represented as a 0-sink, rather than a min vertex. For example, in Figure~\ref{fig:bisimilaritygame}
this amounts to turning $(u,u)$ into a $0$-sink and disconnecting it from its successors.

%%%%%%%%%%%%%%%%%%

%%%%%%%%%%%%%%%%%%%%%%%%%%%%%%%%%%%%%%%%%
%%%%%%%%%%%%%%%%%%%%%%%%%%%%%%%%%%%%%%%%%

%\paragraph*{Related Work}

%In Section~\ref{sec:ssg} we introduced a probabilistic bisimilarity game.  Similar games
Games similar to the above introduced probabilistic bisimilarity game
have been presented in~\cite{DLT08,BreugelW14,FKP17,KM18}.  The game presented by van Breugel and Worrell in~\cite{BreugelW14} is most closely related to our game.  They also consider probabilistic automata and map a probabilistic automaton to a simple stochastic game.  The only difference is that they use the original definition of the Hausdorff distance, whereas we use M\'emoli's alternative characterization.  The games described in~\cite{DLT08,FKP17,KM18} are not stochastic.  Desharnais, Laviolette and Tracol~\cite{DLT08} define an $\epsilon$-probabilistic bisimulation game for probabilistic automata, where $\epsilon > 0$ captures the maximal amount of difference in behaviour that is allowed.  Their measure of difference in behaviour is incomparable to our probabilistic bisimilarity distances (see~\cite[Section~6]{DLT08}).  K\"onig and Mika-Michalski~\cite{KM18} generalize the game of Desharnais et al.\  in a categorical setting so that it is applicable to a large class of systems including probabilistic automata.  Fijalkow, Klin and Panangaden~\cite{FKP17} consider a more restricted class of systems, namely systems with probabilities but without nondeterminism.  In the games in~\cite{DLT08,FKP17,KM18} players choose sets of states, a phenomenon that one does not encounter in our game.

\section{A Coupling Characterisation of the Bisimilarity Distance}%
\label{sec:couplingdistance}

In this section we provide an alternative characterisation for the probabilistic bisimilarity distance
$\bdist$ based on the notion of coupling structure for a probabilistic automaton.
This characterisation generalises the one by Chen et al.~\cite[Theorem 8]{ChenBW12} (see also~\cite[Theorem 8]{BacciLM:tacas13}) for the bisimilarity pseudometric of Desharnais et al.~\cite{DesharnaisGJP04} for labelled Markov chains. Our construction exploits M\'emoli's dual characterisation of the Hausdorff distance (Theorem~\ref{th:hausdorff}).

%For the definition of coupling structure it is convenient to introduce the notion of \emph{measure-coupling map}, i.e., a map $f \colon \Distr{S} \times \Distr{S} \to \Distr{S \times S}$ such that $f(\mu, \nu) \in \Omega(\mu,\nu)$.
%
%\begin{defi}
%A \emph{coupling structure} for $\mathcal{A}$ is a map $\C \colon S \times S \to \mathcal{P}_{\text{fin}}(\Distr{S \times S})$ such that, for any $s,t \in S$ there exist a measure-coupling map $f$ and a set-coupling $R \in \mathcal{R}(\delta(s), \delta(t))$ such that $\C(s,t) = \{ f(\mu, \nu) \mid (\mu, \nu) \in R \}$.
%\end{defi}
%
%Giorgio: Alternative definition as proposed by Frank and Qiyi:
\begin{defi}
A \emph{coupling structure} for $\mathcal{A}$ is a tuple $\C = (f, \rho)$ consisting of
\begin{itemize}
\item a map $f \colon \Distr{S} \times \Distr{S} \to \Distr{S \times S}$ such that, for all $\mu, \nu \in \Distr{S}$,  $f(\mu, \nu) \in \Omega(\mu,\nu)$, and
\item a map $\rho \colon S \times S \to 2^{\Distr{S} \times \Distr{S}}$, such that for all $s,t \in S$, $\rho(s,t) \in \mathcal{R}(\delta(s), \delta(t))$.
\end{itemize}
\end{defi}

\noindent
For convenience, the components $f$ and $\rho$ of a coupling structure will be called
\emph{measure-coupling map} and \emph{set-coupling map}, respectively.

\change{The definition of coupling structure is better understood in relation to the
automaton it induces. The probabilistic automaton induced from $\C = (f, \rho)$, denoted
\begin{equation*}
\mathcal{A}_\C = (S \times S, L \times L, \to_\C, \ell_\C) \,,
\end{equation*}
has $S \times S$ as set of states, $L \times L$ as set of labels,  transition relation
${\to_\C} \subseteq (S \times S) \times \Distr{S \times S}$, defined as $(s,t) \to_\C f(\mu,\nu)$ if $(\mu,\nu) \in \rho(s,t)$, and labeling function $\ell_\C \colon S \times S \to L \times L$ defined as
$\ell_\C(s,t) = (\ell(s), \ell(t))$.
Intuitively, $\mathcal{A}_\C$ describes the concurrent execution of two copies of the probabilistic
automaton $\mathcal{A}$, synchronized by the coupling structure $\C$. Coupling structures are used to limiting
the non-determinism only to set-couplings by fixing a particular choice $f(\mu,\nu)$ of measure-coupling between any pair of distributions $\mu,\nu$. }

\medskip
Let $\lambda \in \lopen{0,1}$. For each $\C$ we define the function $\Gamma_\lambda^{\C} \colon {[0,1]}^{S \times S} \to {[0,1]}^{S \times S}$ as
\begin{equation*}
\Gamma_\lambda^{\C}(d)(s,t) =
\begin{cases}
 1 & \text{if $\ell(s) \neq \ell(t)$} \\
 \lambda \cdot \max \{ \sum_{u,v \in S} d(u,v) \cdot \omega(u,v) \mid (s,t) \to_\C \omega \} & \text{otherwise.}
\end{cases}
\end{equation*}

%One can easily verify that $\Gamma_\lambda^{\C}$ is well-defined and monotone.
\begin{lem}\label{lem:monoGamma}
The function $\Gamma_\lambda^{\C}$ is well-defined and monotone.
\end{lem}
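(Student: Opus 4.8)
The plan is to establish the two claims in Lemma~\ref{lem:monoGamma} separately, treating well-definedness first and then monotonicity, since the latter presupposes that $\Gamma_\lambda^{\C}$ produces a genuine $[0,1]$-valued function. For well-definedness, I would fix a coupling structure $\C = (f,\rho)$, a pseudometric $d \in {[0,1]}^{S \times S}$, and a pair $(s,t)$. The only nontrivial case is $\ell(s) = \ell(t)$, where I must verify three things: that the set $\{ \omega \mid (s,t) \to_\C \omega \}$ over which the maximum ranges is nonempty and finite, that the maximum is therefore attained, and that the resulting value lies in $[0,1]$. Nonemptiness follows because $\rho(s,t) \in \mathcal{R}(\delta(s),\delta(t))$ is a set-coupling and the transition relation $\to_\C$ is induced by $f$ applied to the pairs in $\rho(s,t)$; since $\delta(s),\delta(t)$ are nonempty (totality of $\to$), $\rho(s,t)$ is nonempty, hence so is the set of successor couplings. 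Finiteness follows from $S$ being finite (so $\delta(s),\delta(t)$ are finite sets of distributions, and $\rho(s,t)$ is a finite relation). For the $[0,1]$ bound, each inner sum $\sum_{u,v} d(u,v)\cdot\omega(u,v)$ is a convex combination of values $d(u,v) \in [0,1]$ weighted by the distribution $\omega \in \Distr{S \times S}$, so it lies in $[0,1]$; multiplying by $\lambda \in \lopen{0,1}$ keeps it in $[0,1]$.

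For monotonicity, I would take $d \sqsubseteq d'$ and show $\Gamma_\lambda^{\C}(d)(s,t) \le \Gamma_\lambda^{\C}(d')(s,t)$ for every $(s,t)$. When $\ell(s) \neq \ell(t)$ both sides equal $1$, so equality holds trivially. When $\ell(s) = \ell(t)$, the key observation is pointwise monotonicity of the inner sum: for a fixed coupling $\omega$ with $(s,t) \to_\C \omega$, the hypothesis $d(u,v) \le d'(u,v)$ for all $u,v$ together with nonnegativity of $\omega(u,v)$ gives $\sum_{u,v} d(u,v)\cdot\omega(u,v) \le \sum_{u,v} d'(u,v)\cdot\omega(u,v)$. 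Since this inequality holds uniformly over the same index set $\{ \omega \mid (s,t) \to_\C \omega \}$ (the coupling structure $\C$ is fixed, so the set of successor couplings does not depend on $d$ or $d'$), taking the maximum over this common set preserves the inequality, and the final multiplication by the nonnegative constant $\lambda$ does as well.

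I do not anticipate a serious obstacle here; this is a routine verification entirely analogous to the proof of Proposition~\ref{prop:monotonePhi}. The one point deserving care is the structural observation that the index set of the maximum is independent of the pseudometric argument, so that monotonicity reduces cleanly to the monotonicity of a finite maximum of monotone affine functionals — this is what makes the argument work without any delicate interchange of $\max$ and the ordering. The other mild subtlety is simply bookkeeping around the definition of $\to_\C$: I must unfold $(s,t) \to_\C \omega$ as $\omega = f(\mu,\nu)$ for some $(\mu,\nu) \in \rho(s,t)$ to confirm nonemptiness and finiteness, but this is immediate from the definition of the induced automaton $\mathcal{A}_\C$ and the finiteness of $S$.
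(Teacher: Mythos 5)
Your proposal is correct and follows essentially the same route as the paper's proof: the $[0,1]$ bound via convex combinations for well-definedness, a case split on labels, and for the matching-label case the observation that the inner sums are pointwise dominated over the same (data-independent) index set, so the maximum preserves the inequality. Your treatment of well-definedness is in fact slightly more thorough than the paper's, which only notes the convex-combination bound and leaves nonemptiness and finiteness of the successor set implicit, but this is an elaboration of the same argument rather than a different approach.
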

\begin{proof}
The well definition of $\Gamma_\lambda^{\C}$ follows by the fact that $\lambda \in \lopen{0,1}$ and $\sum_{u,v \in S} d(u,v) \cdot \omega(u,v)$ is a convex combination of a sequence of $[0,1]$-valued numbers, namely ${(d(u,v))}_{u,v \in S}$.

As for monotonicity, let $d, d' \in {[0,1]}^{S \times S}$ and $d \sqsubseteq d'$. Let $s,t \in S$, it suffices to show that $\Gamma_\lambda^{\C}(d)(s,t) \leq \Gamma_\lambda^{\C}(d')(s,t)$. We distinguish the following cases:
\begin{itemize}
\item If $\ell(s) \neq \ell(t)$, then $\Gamma_\lambda^{\C}(d)(s,t) = 1 = \Gamma_\lambda^{\C}(d')(s,t)$.
\item  If $\ell(s) = \ell(t)$, then we have
\begin{align*}
\Gamma_\lambda^{\C}(d)(s,t) &= \lambda \cdot \max \{ \textstyle\sum_{u,v \in S} d(u,v) \cdot \omega(u,v) \mid (s,t) \to_\C \omega \} \\
& = \lambda \textstyle \sum_{u,v \in S} d(u,v) \cdot \omega^*(u,v) \tag{for some $(s,t) \to_\C \omega^*$} \\
& \leq \lambda \textstyle\sum_{u,v \in S} d'(u,v) \cdot \omega^*(u,v) \tag{$d \sqsubseteq d'$ and $\omega^*(u,v) \geq 0$ for all $u,v \in S$} \\
& \leq \lambda \cdot \max \{ \textstyle\sum_{u,v \in S} d'(u,v) \cdot \omega(u,v) \mid (s,t) \to_\C \omega \} \tag{$(s,t) \to_\C \omega^*$} \\
& = \Gamma_\lambda^{\C}(d')(s,t) \,.
\tag*{\qedhere}
\end{align*}
\end{itemize}
\end{proof}
\noindent
By Knaster-Tarski's fixed point theorem, $\Gamma_\lambda^{\C}$ has a least fixed point, denoted by $\gamma^{\C}_\lambda$.
As in~\cite{BacciLM:tacas13}, we call $\gamma^{\C}_\lambda$ the
\emph{$\lambda$-discounted discrepancy} w.r.t.\ $\C$ or simply \emph{$\lambda$-discrepancy}.

\begin{rem}%
\label{rem:discrepancy}
Note that, the \change{$1$-discrepancy} $\gamma_1^{\C}(s,t)$ is the maximal probability of reaching a pair of states $(u,v)$ in
the probabilistic automaton $\mathcal{A}_\C$ such that $\ell(u) \neq \ell(v)$ by starting from the state pair
$(s,t)$. It is well known that the maximal reachability probability can be computed in polynomial-time
as the optimal solution of a linear program (see~\cite[Theorem 10.100]{BaierK08} or~\cite[Chapter~6]{Puterman94}).
The linear program can be trivially generalized to compute $\gamma_\lambda^{\C}$, for any
$\lambda \in \lopen{0,1}$.
\end{rem}

\begin{lem}%
\label{lem:prefixfpoint}
For all $\lambda \in \lopen{0,1}$ and coupling structure $\C$ of $\mathcal{A}$, $\Delta_\lambda(\gamma^{\C}_\lambda) \sqsubseteq \gamma^{\C}_\lambda$.
\end{lem}
\begin{proof}%[proof of Lemma~\ref{lem:prefixfpoint}]
Let $\C = (f,\rho)$.
%By Knaster-Tarski's fixed point theorem, it suffices to show that $\gamma^{\C}_\lambda$ is a prefix-point of $\Delta_\lambda$, that is, $\Delta_\lambda(\gamma^{\C}_\lambda)(s,t) \leq \gamma^{\C}_\lambda(s,t)$, for all $s,t \in S$.
Let $s,t \in S$ and $R = \rho(s,t)$. We distinguish two cases.
\begin{itemize}
\item If $\ell(s) \neq \ell(t)$, then $\Delta_\lambda(\gamma^{\C}_\lambda)(s,t) = 1 = \gamma^{\C}_\lambda(s,t)$.
\item If $\ell(s) = \ell(t)$,
\begin{align*}
	\Delta_\lambda(\gamma^{\C}_\lambda)(s,t)
	&= \lambda\cdot \H{\K{{\gamma^{\C}_\lambda}}}(\delta(s), \delta(t)) \tag{def.\ $\Delta_\lambda$} \\
	&= \lambda\cdot \min \{ \textstyle \max_{(\mu, \nu) \in R'} \K{{\gamma^{\C}_\lambda}}(\mu, \nu) \mid R' \in \mathcal{R}(\delta(s), \delta(t)) \} \tag{Theorem~\ref{th:hausdorff} } \\
	&\leq \lambda\cdot \textstyle \max_{(\mu, \nu) \in R} \K{{\gamma^{\C}_\lambda}}(\mu, \nu) \tag{$R \in \mathcal{R}(\delta(s), \delta(t))$} \\
	&= \lambda\cdot \textstyle \max_{(\mu, \nu) \in R} \min_{\omega \in \Omega(\mu,\nu)} \sum_{u,v \in S} \gamma^{\C}_\lambda(u,v) \cdot \omega(u,v) \tag{def. $\K{{\gamma^{\C}_\lambda}}$} \\
	& \leq \lambda\cdot \textstyle \max_{(\mu,\nu) \in R}  \sum_{u,v \in S} \gamma^{\C}_\lambda(u,v) \cdot f(\mu,\nu)(u,v) \tag{$f(\mu, \nu) \in \Omega(\mu,\nu)$} \\
	&= \lambda\cdot \textstyle \max \{ \sum_{u,v \in S} \gamma^{\C}_\lambda(u,v) \cdot \omega(u,v) \mid (s,t) \to_\C \omega \}
	\tag{def.\ $\to_\C$} \\
	&= \Gamma_\lambda^{\C}(\gamma^{\C}_\lambda)(s,t) \tag{def.\ $\Gamma_\lambda^{\C}$} \\
	&= \gamma^{\C}_\lambda(s,t) \,. \tag{$\gamma^{\C}_\lambda$ fixed point of $\Gamma_\lambda^{\C}$}
\end{align*}
\end{itemize}
By the generality of the chosen $s$ and $t$, we conclude that $\Delta_\lambda(\gamma^{\C}_\lambda) \sqsubseteq \gamma^{\C}_\lambda$.
\end{proof}

\begin{cor}\label{cor:prefixfpoint}
For all $\lambda \in \lopen{0,1}$ and coupling structure $\C$ for $\mathcal{A}$, $\bdist \sqsubseteq \gamma^{\C}_\lambda$.
\end{cor}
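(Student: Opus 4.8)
The plan is to obtain the corollary as an immediate consequence of Lemma~\ref{lem:prefixfpoint} via the standard characterisation of the least fixed point of a monotone operator on a complete lattice as its \emph{least pre-fixed point}. Recall that, as noted right after the definition of $\Delta_\lambda$, the operator $\Delta_\lambda$ is monotone on the complete lattice ${[0,1]}^{S \times S}$ ordered point-wise by $\sqsubseteq$, and that $\bdist$ is by definition its least fixed point. So the only ingredient still needed is that $\gamma^{\C}_\lambda$ is a pre-fixed point of $\Delta_\lambda$, which is exactly what Lemma~\ref{lem:prefixfpoint} supplies: $\Delta_\lambda(\gamma^{\C}_\lambda) \sqsubseteq \gamma^{\C}_\lambda$.

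Concretely, I would first recall the relevant (strong) form of the Knaster--Tarski theorem: for a monotone function $F$ on a complete lattice, the least fixed point of $F$ coincides with the infimum of the set $\{ d \mid F(d) \sqsubseteq d \}$ of all pre-fixed points of $F$; in particular, $\mathrm{lfp}(F) \sqsubseteq d$ for every $d$ with $F(d) \sqsubseteq d$. I would then instantiate this with $F = \Delta_\lambda$ and $d = \gamma^{\C}_\lambda$: since $\bdist = \mathrm{lfp}(\Delta_\lambda)$ and, by Lemma~\ref{lem:prefixfpoint}, $\gamma^{\C}_\lambda$ is a pre-fixed point of $\Delta_\lambda$, we conclude at once that $\bdist \sqsubseteq \gamma^{\C}_\lambda$. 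As $\C$ and $\lambda$ were arbitrary, this is the claimed statement.

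There is no genuine obstacle here, as the corollary is a one-line consequence of the preceding lemma; the only point requiring care is to invoke the correct form of Knaster--Tarski, namely that the least fixed point lies below every pre-fixed point, rather than merely that it is a fixed point. Should one prefer to avoid citing this strong form, the same conclusion can be reached by the iterative route already used in the proof of Theorem~\ref{thm:game}: one shows by induction on $n$ that $\Delta_\lambda^n(\mathbf{0}) \sqsubseteq \gamma^{\C}_\lambda$, the inductive step being $\Delta_\lambda^{n+1}(\mathbf{0}) = \Delta_\lambda(\Delta_\lambda^n(\mathbf{0})) \sqsubseteq \Delta_\lambda(\gamma^{\C}_\lambda) \sqsubseteq \gamma^{\C}_\lambda$ by monotonicity of $\Delta_\lambda$ together with Lemma~\ref{lem:prefixfpoint}, and then passes to the least upper bound to obtain $\bdist = \bigsqcup_n \Delta_\lambda^n(\mathbf{0}) \sqsubseteq \gamma^{\C}_\lambda$.
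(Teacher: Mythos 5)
Your proposal is correct and matches the paper's proof exactly: both invoke the Knaster--Tarski characterisation of $\bdist$ as the least pre-fixed (prefix) point of $\Delta_\lambda$ and then apply Lemma~\ref{lem:prefixfpoint} to conclude $\bdist \sqsubseteq \gamma^{\C}_\lambda$. The alternative iterative argument you sketch is a fine fallback but is not needed, and is not the route the paper takes.
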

\begin{proof}
By Knaster-Tarski's fixed point theorem, $\bdist$ is the least prefix point of $\Delta_{\lambda}$, therefore by Lemma~\ref{lem:prefixfpoint}
we can conclude that $\bdist \sqsubseteq \gamma^{\C}_\lambda$.
\end{proof}

The next lemma shows that the probabilistic bisimilarity distance can be characterised as the $\lambda$-discrepancy for a \emph{vertex coupling structure}, that is, a coupling structure $\C = (f,\rho)$ such that $f(\mu,\nu) \in V(\Omega(\mu,\nu))$ for all $\mu, \nu \in \Distr{S}$.
\begin{lem}%
\label{lem:optimalcoupling}
For all $\lambda \in \lopen{0,1}$ there exists a vertex coupling structure $\C$ for $\mathcal{A}$ such that $\bdist = \gamma^{\C}_\lambda$.
\end{lem}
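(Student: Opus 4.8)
The plan is to exploit Corollary~\ref{cor:prefixfpoint}, which already gives $\bdist \sqsubseteq \gamma^{\C}_\lambda$ for \emph{every} coupling structure $\C$, and to construct one specific vertex coupling structure for which the reverse inequality also holds. Since $\gamma^{\C}_\lambda$ is by definition the least fixed point of $\Gamma_\lambda^{\C}$, it suffices to build a vertex coupling structure $\C$ for which $\bdist$ is itself a fixed point of $\Gamma_\lambda^{\C}$; leastness then yields $\gamma^{\C}_\lambda \sqsubseteq \bdist$, and the two inequalities combine to give equality.

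To construct $\C = (f, \rho)$, I would read the optimal choices off the fixed-point equation $\bdist = \Delta_\lambda(\bdist)$. For the set-coupling map, fix a pair $(s,t)$ with $\ell(s) = \ell(t)$; since $\bdist(s,t) = \lambda \cdot \H{\K{\bdist}}(\delta(s),\delta(t))$ and the Hausdorff distance is a minimum over set-couplings by Theorem~\ref{th:hausdorff} (the sets $\delta(s),\delta(t)$ being finite and nonempty, so $\mathcal{R}(\delta(s),\delta(t))$ is finite and nonempty), I would let $\rho(s,t)$ be a set-coupling $R \in \mathcal{R}(\delta(s),\delta(t))$ attaining that minimum, i.e.\ with $\H{\K{\bdist}}(\delta(s),\delta(t)) = \max_{(\mu,\nu) \in R} \K{\bdist}(\mu,\nu)$; for pairs with $\ell(s) \neq \ell(t)$ I would let $\rho(s,t)$ be an arbitrary element of $\mathcal{R}(\delta(s),\delta(t))$. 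For the measure-coupling map, for every pair $(\mu,\nu) \in \Distr{S}\times\Distr{S}$ I would let $f(\mu,\nu)$ be a vertex measure-coupling attaining the minimum in the dual Kantorovich formula, so that $\K{\bdist}(\mu,\nu) = \sum_{u,v \in S} \bdist(u,v) \cdot f(\mu,\nu)(u,v)$; the preliminaries guarantee that this minimum is attained at a vertex of $\Omega(\mu,\nu)$ and that $V(\Omega(\mu,\nu))$ is nonempty, so $f(\mu,\nu) \in V(\Omega(\mu,\nu))$ and $\C$ is a vertex coupling structure.

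With these choices I would verify $\Gamma_\lambda^{\C}(\bdist) = \bdist$ pointwise. On pairs with distinct labels both sides equal $1$. On a pair $(s,t)$ with $\ell(s)=\ell(t)$, the outgoing transitions of $\mathcal{A}_\C$ are exactly the measure-couplings $f(\mu,\nu)$ for $(\mu,\nu) \in \rho(s,t) = R$, so
\[
\Gamma_\lambda^{\C}(\bdist)(s,t) = \lambda \cdot \max_{(\mu,\nu) \in R} \sum_{u,v \in S} \bdist(u,v)\cdot f(\mu,\nu)(u,v) = \lambda \cdot \max_{(\mu,\nu) \in R} \K{\bdist}(\mu,\nu),
\]
using the defining property of $f$; and this equals $\lambda \cdot \H{\K{\bdist}}(\delta(s),\delta(t)) = \Delta_\lambda(\bdist)(s,t) = \bdist(s,t)$ by the defining property of $R = \rho(s,t)$ and the fact that $\bdist$ is a fixed point of $\Delta_\lambda$. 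Hence $\bdist$ is a fixed point of $\Gamma_\lambda^{\C}$, giving $\gamma^{\C}_\lambda \sqsubseteq \bdist$, and together with Corollary~\ref{cor:prefixfpoint} this yields $\bdist = \gamma^{\C}_\lambda$.

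The steps are all routine once the construction is in place; the only point requiring real care is the justification that the optimal measure-coupling can be chosen to be a \emph{vertex} of $\Omega(\mu,\nu)$, so that $\C$ is genuinely a \emph{vertex} coupling structure and not merely some coupling structure. This is exactly what the concavity/vertex-attainment fact recalled in the Kantorovich paragraph of Section~\ref{sec:prelim} provides, together with the analogous finiteness guaranteeing attainment of the Hausdorff minimum noted after Theorem~\ref{th:hausdorff}.
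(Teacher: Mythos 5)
Your proof is correct and follows essentially the same route as the paper's: construct a vertex coupling structure from optimal choices (a vertex measure-coupling realising the Kantorovich minimum and a set-coupling realising the Hausdorff minimum), verify that $\bdist$ is a fixed point of $\Gamma_\lambda^{\C}$, and combine leastness of $\gamma^{\C}_\lambda$ with Corollary~\ref{cor:prefixfpoint}. The only cosmetic difference is that the paper builds the minimizing set-coupling explicitly from argmins over $\delta(s)$ and $\delta(t)$, whereas you invoke attainment of the minimum in Theorem~\ref{th:hausdorff} directly; both are valid since the relevant sets are finite.
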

\begin{proof}%[proof of Lemma~\ref{lem:optimalcoupling}]
We construct a vertex coupling structure $\C = (f,\rho)$ as follows. %such that $\gamma^{\C}_\lambda \sqsubseteq \bdist$. %We define the measure-coupling map $f$ and a set-coupling map $\rho$ as follows.
%\begin{description}
%\item[Measure-coupling map]

We define $f \colon \Distr{S} \times \Distr{S} \to \Distr{S \times S}$ by
\[\textstyle f(\mu, \nu) \in \argmin_{\omega \in V(\Omega(\mu, \nu))} \sum_{u, v \in S} \bdist(u, v) \cdot \omega(u, v) \,.\]
Hence,
\begin{equation}
 \textstyle \K{\bdist}(\mu,\nu) = \sum_{u,v \in S} \bdist(u,v) \cdot f(\mu,\nu)(u,v) \,.
 \label{eq:effe}
\end{equation}

%\item[Set-coupling map]
We define $\rho \colon S \times S \to 2^{\Distr{S} \times \Distr{S}}$ by
\begin{equation*}
\rho(s, t) =  \big\{ \big(\mu, \argmin_{\nu \in \delta(t)} \K{\bdist}(\mu, \nu) \big) \mid \mu \in \delta(s) \big\} \cup
\big\{ \big(\argmin_{\mu \in \delta(s)} \K{\bdist}(\mu, \nu), \nu \big) \mid \nu \in \delta(t) \big\} \,.
\end{equation*}
Hence, $\rho(s, t)\in \mathcal{R}(\delta(s), \delta(t))$ and
\begin{equation}
\mathcal{H}(\K{\bdist})(\delta(s), \delta(t)) = \max \{ \K{\bdist}(\mu, \nu) \mid (\mu, \nu) \in \rho(s, t) \} \,.  \label{eq:erre}
\end{equation}
%\end{description}
Next, we show that $\Gamma_\lambda^{\C}(\bdist) \sqsubseteq \bdist$.
Let $s,t \in S$. We distinguish two cases:
\begin{itemize}
\item If $\ell(s) \neq \ell(t)$, then $\bdist(s,t) = \Delta_\lambda(\bdist)(s,t) = 1 = \Gamma_\lambda^{\C}(\bdist)(s,t)$.

\item  If $\ell(s) = \ell(t)$, we have
\begin{align}
\Gamma_\lambda^{\C}(\bdist)(s,t)
&= \lambda \cdot \max \{ \textstyle \sum_{u,v \in S} \bdist(u,v) \cdot \omega(u,v) \mid (s,t) \to_\C \omega \}
	\tag{def.\ $\Gamma_\lambda^{\C}$} \\
&= \lambda \cdot \max \{ \textstyle \sum_{u,v \in S} \bdist(u,v) \cdot f(\mu,\nu)(u,v) \mid (\mu,\nu) \in \rho(s,t) \}
	\tag{def.\ $\to_\C$} \\
&= \lambda \cdot \max \{ \K{\bdist}(\mu, \nu) \mid (\mu, \nu) \in \rho(s,t) \}
	\tag{eq.~\eqref{eq:effe}} \\
&= \lambda \cdot \H{\K{\bdist}}(\delta(s),\delta(t)) \tag{eq.~\eqref{eq:erre}} \\
&= \bdist(s,t) \,. \tag{$\bdist$ fixed point of $\Delta_\lambda$}
\end{align}
\end{itemize}
Therefore $\Gamma_\lambda^{\C}(\bdist) = \bdist$. Since $\gamma^{\C}_\lambda$ is the least fixed point of $\Gamma_\lambda^{\C}$, by Knaster-Tarski's fixed point theorem $\gamma^{\C}_\lambda \sqsubseteq \bdist$.
Moreover, by Corollary~\ref{cor:prefixfpoint}, $\bdist \sqsubseteq \gamma^{\C}_\lambda$. Thus $\bdist = \gamma^{\C}_\lambda$.
\end{proof}

\begin{thm}%
\label{th:mincoupling}
Let $\lambda \in \lopen{0,1}$. Then, the following hold:
\begin{enumerate}
\item\label{th:mincoupling1}
$\bdist = \sqcap \{ \gamma^{\C}_\lambda \mid \text{$\C$ coupling structure for $\mathcal{A}$} \}$;
\item\label{th:mincoupling2}
$s \sim t$ iff $\gamma_\lambda^\C(s,t) = 0$ for some vertex coupling structure $\C$ for $\mathcal{A}$.
\end{enumerate}
\end{thm}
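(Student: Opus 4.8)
The plan is to obtain both items as immediate consequences of Corollary~\ref{cor:prefixfpoint} and Lemma~\ref{lem:optimalcoupling}, with Theorem~\ref{th:behaviouraldist} supplying the link to bisimilarity. No new construction is needed: the substantive work is already done in those results, and the theorem packages them.

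For item~(\ref{th:mincoupling1}), I would argue by antisymmetry in the complete lattice $({[0,1]}^{S \times S}, \sqsubseteq)$, where $\sqcap$ is the point-wise infimum, i.e.\ the greatest lower bound. The inequality $\bdist \sqsubseteq \sqcap \{ \gamma^{\C}_\lambda \}$ holds because Corollary~\ref{cor:prefixfpoint} exhibits $\bdist$ as a lower bound of the entire family $\{ \gamma^{\C}_\lambda \mid \C \text{ coupling structure} \}$, so $\bdist$ lies below their greatest lower bound. For the reverse inequality, Lemma~\ref{lem:optimalcoupling} produces a single (vertex) coupling structure $\C^*$ with $\bdist = \gamma^{\C^*}_\lambda$; since $\gamma^{\C^*}_\lambda$ is one of the functions whose infimum is taken, that infimum can be no larger, giving $\sqcap \{ \gamma^{\C}_\lambda \} \sqsubseteq \gamma^{\C^*}_\lambda = \bdist$.

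For item~(\ref{th:mincoupling2}), I would prove the two implications separately. For the forward direction, if $s \sim t$ then $\bdist(s,t) = 0$ by Theorem~\ref{th:behaviouraldist}, and taking the \emph{vertex} coupling structure $\C^*$ of Lemma~\ref{lem:optimalcoupling} supplies the required witness, since $\gamma^{\C^*}_\lambda(s,t) = \bdist(s,t) = 0$. For the converse, if $\gamma^{\C}_\lambda(s,t) = 0$ for some vertex coupling structure $\C$, then Corollary~\ref{cor:prefixfpoint} together with $1$-boundedness of $\bdist$ yields $0 \leq \bdist(s,t) \leq \gamma^{\C}_\lambda(s,t) = 0$, whence $\bdist(s,t) = 0$ and $s \sim t$ follows again by Theorem~\ref{th:behaviouraldist}. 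I do not anticipate a genuine obstacle; the only point deserving a moment's care is that item~(\ref{th:mincoupling2}) asks specifically for a vertex coupling structure in both directions, which is harmless because Lemma~\ref{lem:optimalcoupling} already delivers exactly a vertex coupling structure for the forward direction, while restricting the witness to vertex coupling structures in the converse only strengthens the hypothesis.
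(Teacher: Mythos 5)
Your proposal is correct and follows essentially the same route as the paper, whose proof is precisely a citation of Corollary~\ref{cor:prefixfpoint} and Lemma~\ref{lem:optimalcoupling} for item~(\ref{th:mincoupling1}), and of Theorem~\ref{th:behaviouraldist} and Lemma~\ref{lem:optimalcoupling} for item~(\ref{th:mincoupling2}); you merely spell out the antisymmetry argument and the two implications that the paper leaves implicit. If anything, your write-up is slightly more careful than the paper's, since the converse direction of item~(\ref{th:mincoupling2}) does need Corollary~\ref{cor:prefixfpoint} (to bound $\bdist$ by $\gamma^{\C}_\lambda$ for the given witness $\C$), which you invoke explicitly while the paper's citation list omits it.
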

\begin{proof}
\eqref{th:mincoupling1} follows by Corollary~\ref{cor:prefixfpoint} and Lemma~\ref{lem:optimalcoupling};~\eqref{th:mincoupling2} by Theorem~\ref{th:behaviouraldist} and Lemma~\ref{lem:optimalcoupling}. % chktex 2
\end{proof}

Note that together with Lemma~\ref{lem:optimalcoupling}, Theorem~\ref{th:mincoupling}.\ref{th:mincoupling1}
states that $\bdist$ is the minimal $\lambda$-discrepancy obtained by ranging over the subset of vertex coupling structures.

\begin{rem}[On the relation with probabilistic bisimilarity games]%
\label{rem:couplingstrategies}
The coupling structure characterization of the distance is strongly related to the simple stochastic game
characterization presented in Section~\ref{sec:ssg}. Indeed, the notion of vertex coupling structure
captures essentially the strategies for the min player on a probabilistic bisimilarity game in the following sense: the measure-coupling map component describes the strategy on the vertices of the form $(\mu, \nu) \in R$ for some $R \in V_{\max}$, while the set-coupling map deals with the description of the strategy on the min vertices $(s,t) \in S \times S$.
The discrepancy $\gamma^\C_1$ captures the value w.r.t an optimal strategy for the max player when the min player has fixed their strategy a priori.
\end{rem}

\section{Computing the Bisimilarity Distance}%
\label{sec:algorithm}

We describe a procedure for computing the bisimilarity distances based on Condon's simple
policy iteration algorithm~\cite{Condon90}. Our procedure extends
a similar one proposed in~\cite{TangB16,BacciLM:tacas13} for computing the bisimilarity
distance\franck{s} of Desharnais et al.~\cite{DesharnaisGJP04} for labelled Markov chains. The extension
takes into account the additional presence of nondeterminism in the choice of the transitions.

Condon's simple policy iteration algorithm computes the values of a simple stochastic game
provided that the game is \emph{stopping}, i.e., for each pair of strategies
for the min and max players the token reaches a $0$-sink or $1$-sink vertex with probability one.

As we have shown in Theorem~\ref{thm:game}, the probabilistic bisimilarity distances are the values
of the corresponding vertices in the simple stochastic game given in Definition~\ref{def:bssg}.
Thus, if we prove that the game is stopping we can apply Condon's simple policy iteration
algorithm to compute the probabilistic bisimilarity distances.

\begin{prop}%
\label{prop:stopping}
For $\lambda \in (0,1)$, the simple stochastic game in Definition~\ref{def:bssg} is
stopping.
\end{prop}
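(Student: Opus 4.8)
The plan is to exploit the strict layering of vertex types in Definition~\ref{def:bssg} together with the single crucial quantitative fact that every random vertex has an edge to the $0$-sink $\bot$ carrying the fixed positive probability $1-\lambda$.

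First I would record the ``shape'' of every play. By Definition~\ref{def:bssg}, from a state-pair min vertex $(s,t)$ with $\ell(s)=\ell(t)$ the only outgoing edges lead to a max vertex $R \in \mathcal{R}(\delta(s),\delta(t))$; from a max vertex the token moves to a min vertex $(\mu,\nu)\in R$; from such a min vertex it moves to a random vertex $\omega \in V(\Omega(\mu,\nu))$; and from a random vertex it either jumps to $\bot$ with probability $1-\lambda$ or to some state-pair $(u,v)$ with probability $\lambda\cdot\omega(u,v)$, where $(u,v)$ is then either a $1$-sink or again a state-pair min vertex. Consequently the vertex types cycle in the fixed order min $\to$ max $\to$ min $\to$ rnd, so starting from any state-pair min vertex a play reaches a random vertex after exactly three steps, and in general between two consecutive random vertices (when the play does not terminate in between) there are at most four edges. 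In particular there is no cycle in the game graph that avoids $V_{\mathrm{rnd}}$, so any non-terminating play must visit random vertices infinitely often.

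Next I would fix an arbitrary pair of pure stationary strategies $\sigma_{\min},\sigma_{\max}$ and work in the induced Markov chain whose reachability function is $\phi_{\sigma_{\min},\sigma_{\max}}$. The quantitative observation is that at every random vertex this chain moves to the absorbing $0$-sink $\bot$ with probability exactly $1-\lambda>0$, independently of the strategies chosen. Hence, conditioned on the play reaching any given random vertex, the probability of not being absorbed at $\bot$ on the next step is at most $\lambda$. Combining this with the first step, I would argue by a straightforward induction that the probability the play survives past its $n$-th random vertex is at most $\lambda^{n}$. Since $\lambda\in(0,1)$, this bound tends to $0$ as $n\to\infty$, so the probability of an infinite play that reaches neither $\bot$ nor a $1$-sink is $0$. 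Equivalently, under any pair of strategies the token reaches a $0$-sink or a $1$-sink with probability one, which is precisely the stopping condition; by the generality of $\sigma_{\min},\sigma_{\max}$ this holds for all strategy pairs.

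I expect the only delicate point to be the formal justification that every non-terminating play visits $V_{\mathrm{rnd}}$ infinitely often, i.e.\ that the layering in Definition~\ref{def:bssg} admits no random-vertex-free cycle; once this is in place the geometric decay argument is routine. It is worth emphasising that the hypothesis $\lambda<1$ is essential here: for $\lambda=1$ the edge to $\bot$ carries probability $0$ and the game may genuinely fail to be stopping, which is exactly the complication addressed later in the paper.
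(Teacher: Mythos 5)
Your proposal is correct and follows essentially the same route as the paper: both arguments rest on the single fact that every random vertex moves to the $0$-sink $\bot$ with probability $1-\lambda>0$ under any pair of strategies, so the induced Markov chain is absorbed in a sink almost surely. Your write-up merely makes explicit the structural point the paper leaves implicit (the layering min $\to$ max $\to$ min $\to$ rnd forces any non-terminating play to visit random vertices infinitely often, giving the geometric bound $\lambda^{n}$), which is a sound and welcome elaboration rather than a different proof.
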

\begin{proof}%[proof of Proposition~\ref{prop:stopping}]
For each pair of strategies for the min and max players, each vertex in the induced Markov chain
reaches the $0$-sink vertex $\bot$ with probability at least $1 - \lambda$. Since $\lambda < 1$,
from any state, the probability of never reaching $\bot$ is zero, i.e., the probability
of eventually reaching the sink state $\bot$ is one.
\end{proof}

However, for $\lambda = 1$ the game in Definition~\ref{def:bssg} may not be stopping as
shown below.
\begin{exa}%
\label{ex:nonstopping}
Consider the probabilistic automaton in Figure~\ref{fig:bisimilaritygame} and its associated
probabilistic bisimilarity game. By choosing a strategy $\sigma_{\max}$ for the max player
such that $\sigma_{\max}(R) = (\dirac_t, \dirac_u)$, the vertex $(t,u)$ has probability zero
to reach a sink. This can be seen in Figure~\ref{fig:bisimilaritygame}, since there are no
paths using the edge $(R, (\dirac_t, \dirac_u) )$ leading to a sink.
\end{exa}

In~\cite{TangB16}, by imposing the bisimilar state pairs to be $0$-sinks, for the case of labelled Markov
chains the simple stochastic game was proven to be stopping. This method does not generalize to
probabilistic automata. Indeed, Example~\ref{ex:nonstopping} provides a counterexample
even when bisimilar state pairs are $0$-sinks.

In the remainder of the section, we provide a general algorithm to compute the bisimilarity
distance for every $\lambda \in \lopen{0,1}$, by adapting Condon's simple policy iteration algorithm.
Our solution will exploit the coupling characterization of the distance discussed in Section~\ref{sec:couplingdistance}. This allows us to skip the construction of the
simple stochastic game which may have size exponential in the number of states of the automaton.
%entails an exponential blow up in the size of the
%) and apply the policy iteration directly on the a

\subsection{Simple Policy Iteration Strategy}%
\label{sec:simplepolicy}

Condon's algorithm iteratively updates the strategies of the min and max players in turn,
on the basis of the current
over-approximation of the value of the game.
%As already noted in Remark~\ref{rem:couplingstrategies}, on
%the one hand, vertex coupling structures correspond to the strategies of the min player; on the other hand, computing the discrepancy provides the optimal strategy for the max player w.r.t. a given strategy for the min player.
Next we show how Condon's policy updates can be performed directly on coupling structures.

For the update of the coupling structure, we use a measure-coupling map $k(d)(\mu, \nu) \in V(\Omega(\mu,\nu))$ and a set-coupling map $h(d)(s,t) \in \mathcal{R}(\delta(s),\delta(t))$ such that
\begin{align}
k(d)(\mu, \nu) &\in \textstyle
\argmin \Big\{ \sum_{u,v \in S} \omega(u,v) \cdot d(u,v) \mid \omega \in V(\Omega(\mu,\nu)) \Big\} \, \text{, and}
	\label{eq:defk} \\
h(d)(s,t) &\in \textstyle
\argmin  \Big\{ \max_{(\mu, \nu) \in R} \K{d}(\mu,\nu) \mid R \in \mathcal{R}(\delta(s),\delta(t)) \Big\} \,.
	\label{eq:defh}
\end{align}
for $d \colon S \times S \to [0,1]$, $\mu, \nu \in \Distr{S}$, and $s,t \in S$.

The following lemma explains how the above ingredients can be used by the min player to improve
its strategy.
\begin{lem}%
\label{lem:update_improves}
Let $\C = (f,\rho)$. If there exist $s,t \in S$ such that $\Delta_\lambda(\gamma_\lambda^{\C})(s,t) < \gamma_\lambda^{\C}(s,t)$ then, $\gamma_\lambda^{\mathcal{D}} \sqsubset \gamma_\lambda^{\C}$
for a coupling structure $\mathcal{D} = (k(\gamma_\lambda^\C), \rho[(s,t)/R])$, where $R = h(\gamma_\lambda^\C)(s,t)$.
\end{lem}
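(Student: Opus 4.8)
The plan is to show that updating the min player's coupling structure at a single problematic pair strictly decreases the discrepancy, which is the engine of the policy iteration's progress.

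First I would set up the new coupling structure $\mathcal{D} = (k(\gamma_\lambda^{\C}), \rho[(s,t)/R])$ with $R = h(\gamma_\lambda^\C)(s,t)$, and observe that by the definitions of $k$ and $h$ in equations~\eqref{eq:defk} and~\eqref{eq:defh}, this new structure is chosen to be \emph{locally optimal} with respect to the current discrepancy $\gamma_\lambda^\C$. The key claim I would aim for is that $\Gamma_\lambda^{\mathcal{D}}(\gamma_\lambda^{\C}) \sqsubseteq \gamma_\lambda^{\C}$, i.e., $\gamma_\lambda^{\C}$ is a prefix point of the new map $\Gamma_\lambda^{\mathcal{D}}$. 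To establish this I would argue pointwise over all pairs $(s',t')$, splitting on labels. Where $\ell(s') \neq \ell(t')$ both sides are $1$. For the equal-label pairs I would show that the new measure-coupling map $k(\gamma_\lambda^\C)$, being an $\argmin$ over $V(\Omega(\mu,\nu))$, makes the Kantorovich expectations no larger than whatever $f$ produced, so $\Gamma_\lambda^{\mathcal{D}}(\gamma_\lambda^{\C})(s',t') \le \Gamma_\lambda^{\C}(\gamma_\lambda^{\C})(s',t') = \gamma_\lambda^{\C}(s',t')$; this handles every pair except the updated one since $\rho$ agrees with $\mathcal{D}$'s set-coupling map off $(s,t)$. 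At the distinguished pair $(s,t)$, I would use that $\Gamma_\lambda^{\mathcal{D}}(\gamma_\lambda^\C)(s,t)$ equals $\lambda \cdot \max_{(\mu,\nu)\in R}\K{\gamma_\lambda^\C}(\mu,\nu)$ (again exploiting that $k$ picks the Kantorovich-optimal vertex coupling), and that this in turn equals $\lambda \cdot \H{\K{\gamma_\lambda^\C}}(\delta(s),\delta(t)) = \Delta_\lambda(\gamma_\lambda^\C)(s,t)$, because $R = h(\gamma_\lambda^\C)(s,t)$ is precisely the minimizing set-coupling from Theorem~\ref{th:hausdorff}. By the hypothesis $\Delta_\lambda(\gamma_\lambda^\C)(s,t) < \gamma_\lambda^\C(s,t)$, I obtain a \emph{strict} inequality at $(s,t)$.

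Having shown $\Gamma_\lambda^{\mathcal{D}}(\gamma_\lambda^{\C}) \sqsubseteq \gamma_\lambda^{\C}$, the least-fixed-point characterization of $\gamma_\lambda^{\mathcal{D}}$ via Knaster-Tarski (as in Lemma~\ref{lem:monoGamma}) gives $\gamma_\lambda^{\mathcal{D}} \sqsubseteq \gamma_\lambda^{\C}$, since $\gamma_\lambda^{\C}$ is a prefix point of the monotone map $\Gamma_\lambda^{\mathcal{D}}$ and the least fixed point is below every prefix point. This yields the weak inequality $\gamma_\lambda^{\mathcal{D}} \sqsubseteq \gamma_\lambda^{\C}$ immediately.

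The main obstacle, and the step I would spend the most care on, is upgrading this to the \emph{strict} inequality $\gamma_\lambda^{\mathcal{D}} \sqsubset \gamma_\lambda^{\C}$ claimed in the lemma. Knaster-Tarski alone only gives $\sqsubseteq$; the strictness at the single coordinate $(s,t)$ in the prefix-point inequality does not automatically propagate to strictness of the fixed point there, because $\gamma_\lambda^{\mathcal{D}}$ could in principle agree with $\gamma_\lambda^{\C}$ everywhere even though $\Gamma_\lambda^{\mathcal{D}}(\gamma_\lambda^{\C})$ dipped strictly below at $(s,t)$. To close this gap I would argue that if $\gamma_\lambda^{\mathcal{D}} = \gamma_\lambda^{\C}$ held, then evaluating the fixed-point equation $\gamma_\lambda^{\mathcal{D}} = \Gamma_\lambda^{\mathcal{D}}(\gamma_\lambda^{\mathcal{D}})$ at $(s,t)$ would force $\gamma_\lambda^{\C}(s,t) = \Gamma_\lambda^{\mathcal{D}}(\gamma_\lambda^{\C})(s,t) = \Delta_\lambda(\gamma_\lambda^\C)(s,t)$, contradicting the hypothesis $\Delta_\lambda(\gamma_\lambda^\C)(s,t) < \gamma_\lambda^\C(s,t)$. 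Combined with $\gamma_\lambda^{\mathcal{D}} \sqsubseteq \gamma_\lambda^{\C}$, this rules out equality and yields $\gamma_\lambda^{\mathcal{D}} \sqsubset \gamma_\lambda^{\C}$, completing the proof.
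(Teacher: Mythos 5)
Your proposal is correct and follows essentially the same route as the paper's proof: show $\Gamma_\lambda^{\mathcal{D}}(\gamma_\lambda^{\C}) \sqsubset \gamma_\lambda^{\C}$ pointwise (equality-to-$\Delta_\lambda(\gamma_\lambda^{\C})(s,t)$ at the updated pair via the optimality of $h$ and $k$, hence strict there by hypothesis, and $\sqsubseteq$ elsewhere via the argmin property of $k$ against $f$), then conclude by Knaster-Tarski. Your explicit contradiction argument upgrading $\gamma_\lambda^{\mathcal{D}} \sqsubseteq \gamma_\lambda^{\C}$ to $\gamma_\lambda^{\mathcal{D}} \sqsubset \gamma_\lambda^{\C}$ merely fills in a step the paper leaves implicit in its terse final sentence, so it is the same argument, not a different one.
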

\begin{proof}%[proof of Lemma~\ref{lem:update_improves}]
Assume $\Delta_\lambda(\gamma_\lambda^{\C})(s,t) < \gamma_\lambda^{\C}(s,t)$.
Next we show $\Gamma_\lambda^{\mathcal{D}}(\gamma_\lambda^{\C}) \sqsubset \gamma_\lambda^{\C}$.
In particular we prove that $\Gamma_\lambda^{\mathcal{D}}(\gamma_\lambda^{\C})(s,t) < \gamma_\lambda^{\C}(s,t)$ and, for all $(u,v) \neq (s,t)$, $\Gamma_\lambda^{\mathcal{D}}(\gamma_\lambda^{\C})(u,v) \leq \gamma_\lambda^{\C}(u,v)$.

By $\Delta_\lambda(\gamma_\lambda^{\C})(s,t) < \gamma_\lambda^{\C}(s,t)$, we necessarily have $\ell(s) = \ell(t)$. Thus
\begin{align*}
\Delta_\lambda(\gamma_\lambda^{\C})(s,t)
&= \lambda \cdot \H{\K{\gamma_\lambda^{\C}}}(\delta(s),\delta(t))
\tag{$\ell(s) = \ell(t)$ and def.\ $\Delta_\lambda$} \\
&= \lambda\cdot \min \{ \max_{(\mu,\nu) \in R'} \K{\gamma_\lambda^{\C}}(\mu,\nu)
\mid R' \in \mathcal{R}(\delta(s),\delta(t)) \}
\tag{Theorem~\ref{th:hausdorff}} \\
&= \textstyle \lambda\cdot \max_{(\mu, \nu) \in R} \K{\gamma_\lambda^{\C}}(\mu,\nu)  \tag{$R = h(\gamma_\lambda^\C)(s,t)$ and~\eqref{eq:defh}} \\
& = \textstyle \lambda\cdot \max_{(\mu, \nu) \in R} \min_{\omega \in \Omega(\mu, \nu)} \sum_{u,v \in S} \gamma_\lambda^{\C}(u,v) \cdot \omega(u,v) \tag{def. $\K{\gamma_\lambda^{\C}}$}\\
&= \textstyle \lambda\cdot \max_{(\mu, \nu) \in R} \sum_{u,v \in S} \gamma_\lambda^{\C}(u,v) \cdot k(\gamma_\lambda^{\C})(\mu,\nu)(u,v)  \tag{by~\eqref{eq:defk}} \\
&= \Gamma^\mathcal{D}_\lambda(\gamma_\lambda^{\C})(s,t) \,.
\tag{def.\ $\mathcal{D}$ and $\Gamma^\mathcal{D}_\lambda$}
\end{align*}
Therefore, $\Gamma^\mathcal{D}_\lambda(\gamma_\lambda^{\C})(s,t) = \Delta_\lambda(\gamma_\lambda^{\C})(s,t) < \gamma_\lambda^{\C}(s,t)$.

Let $u,v \in S$ such that $(u,v) \neq (s,t)$. We distinguish two cases.
\begin{itemize}
\item If $\ell(u) \neq \ell(v)$, then $\Gamma_\lambda^{\mathcal{D}}(\gamma_\lambda^{\C})(u,v) = 1 = \Gamma_\lambda^{\C}(\gamma_\lambda^{\C})(u,v) = \gamma_\lambda^{\C}(u,v)$.

\item If $\ell(u) = \ell(v)$, then
\begin{align*}
\Gamma_\lambda^{\mathcal{D}}(\gamma_\lambda^{\C})(u,v)
&= \textstyle \lambda\cdot \max_{(\mu, \nu) \in \rho(u,v)} \sum_{x,y \in S} k(\gamma_\lambda^\C)(\mu,\nu)(x,y) \cdot  \gamma_\lambda^\C(x,y)
\tag{def.\ $\Gamma_\lambda^\mathcal{D}$ and $\mathcal{D}$} \\
&\leq \textstyle \lambda\cdot \max_{(\mu, \nu) \in \rho(u,v)} \sum_{x,y \in S} f(\mu,\nu)(x,y) \cdot  \gamma_\lambda^\C(x,y)
\tag{\eqref{eq:defk}, $f(\mu,\nu) \in \Omega(\mu,\nu)$} \\
& = \Gamma_\lambda^{\C}(\gamma_\lambda^{\C})(u,v) = \gamma_\lambda^\C(u,v) \,. \tag{def.\ $\Gamma_\lambda^\C$ and $\gamma_\lambda^\C$}
\end{align*}
\end{itemize}
Thus $\Gamma_\lambda^{\mathcal{D}}(\gamma_\lambda^{\C}) \sqsubset \gamma_\lambda^{\C}$.
By Knaster-Tarski's fixed point theorem, we conclude that $\gamma_\lambda^{\mathcal{D}} \sqsubset \gamma_\lambda^{\C}$.
\end{proof}

Lemma~\ref{lem:update_improves} suggests that $\C = (f,\rho)$ can be improved by replacing the measure-coupling map $f$ with $k(\gamma_\lambda^\C)$ and updating the set-coupling map
$\rho$ at $(s,t)$ with $R = h(\gamma_\lambda^\C)(s,t)$.

Note that a measure-coupling $k(d)(\mu,\nu)$ satisfying~\eqref{eq:defk} can be computed by solving a
linear program and ensuring that the optimal solution is a vertex of the polytope
$\Omega(\mu, \nu)$~\cite{Orlin1985,KleinschmidtS95}.
A set-coupling $h(d)(s,t)$ satisfying~\eqref{eq:defh} is the following:
\begin{equation}
R = \big\{ (\mu, \phi(\mu)) \mid \mu \in \delta(s) \big\} \cup \big\{ (\psi(\nu), \nu) \mid \nu \in \delta(t) \big\}
\in \mathcal{R}(\delta(s),\delta(t)) \,,
\label{eq:computeh}
\end{equation}
where $\phi,\psi$ are such that $\phi(\mu) \in \argmin_{\nu \in \delta(t)} \K{d}(\mu, \nu)$
and $\psi(\nu) \in \argmin_{\mu \in \delta(s)} \K{d}(\mu, \nu)$.
The following lemma justifies our choice of $h(d)(s,t)$.
\begin{lem}%
\label{lem:update}
Let $R$ be as in~\eqref{eq:computeh}. Then
$\H{\K{d}}(\delta(s),\delta(t)) = \max_{(\mu, \nu) \in R} \K{d}(\mu,\nu)$.
\end{lem}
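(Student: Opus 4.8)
The plan is to prove the identity by unfolding the original definition of the Hausdorff lifting directly, rather than going through M\'emoli's dual characterisation (Theorem~\ref{th:hausdorff}). The set-coupling $R$ in~\eqref{eq:computeh} is built precisely so that its two constituent families of pairs realise, respectively, the two nested $\sup$--$\inf$ terms appearing in the definition of $\H{\K{d}}$; so the whole argument amounts to matching these terms against $R$. I would first record that $R$ is a genuine set-coupling in $\mathcal{R}(\delta(s),\delta(t))$ (as already asserted after~\eqref{eq:computeh}): its left projection contains every $\mu \in \delta(s)$ via the first family and each $\psi(\nu) \in \delta(s)$ via the second, and dually its right projection equals $\delta(t)$, so that the object $\max_{(\mu,\nu)\in R}\K{d}(\mu,\nu)$ is the right thing to compare against.

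First I would note that $\delta(s)$ and $\delta(t)$ are finite, since the transition relation $\to$ is finite, so that the outer suprema and inner infima in the Hausdorff lifting may be replaced by maxima and minima. Instantiating the definition with the pseudometric $\K{d}$ and the sets $A = \delta(s)$, $B = \delta(t)$ yields
\[
\H{\K{d}}(\delta(s),\delta(t)) = \max\left\{ \max_{\mu \in \delta(s)} \min_{\nu \in \delta(t)} \K{d}(\mu,\nu),\ \max_{\nu \in \delta(t)} \min_{\mu \in \delta(s)} \K{d}(\mu,\nu) \right\}.
\]
Then I would rewrite each inner minimum using the defining property of $\phi$ and $\psi$: since $\phi(\mu) \in \argmin_{\nu \in \delta(t)} \K{d}(\mu,\nu)$ we have $\min_{\nu \in \delta(t)} \K{d}(\mu,\nu) = \K{d}(\mu,\phi(\mu))$ for every $\mu \in \delta(s)$, and symmetrically $\min_{\mu \in \delta(s)} \K{d}(\mu,\nu) = \K{d}(\psi(\nu),\nu)$ for every $\nu \in \delta(t)$. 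Substituting turns the right-hand side into $\max\{\, \max_{\mu \in \delta(s)} \K{d}(\mu,\phi(\mu)),\ \max_{\nu \in \delta(t)} \K{d}(\psi(\nu),\nu)\,\}$.

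Finally, since $R$ is the union $\{(\mu,\phi(\mu)) \mid \mu \in \delta(s)\} \cup \{(\psi(\nu),\nu) \mid \nu \in \delta(t)\}$, a maximum over this finite union splits as the maximum of the maxima over the two pieces, giving
\[
\max_{(\mu,\nu) \in R} \K{d}(\mu,\nu) = \max\left\{ \max_{\mu \in \delta(s)} \K{d}(\mu,\phi(\mu)),\ \max_{\nu \in \delta(t)} \K{d}(\psi(\nu),\nu) \right\},
\]
which is exactly the expression obtained for $\H{\K{d}}(\delta(s),\delta(t))$, establishing the claim. I expect no genuine obstacle here: the content is purely a bookkeeping unfolding of definitions, and the only conceptual point is recognising that the minimisers $\phi$ and $\psi$ were chosen so that the two halves of $R$ pin down the two $\sup$--$\inf$ components of the Hausdorff distance one-to-one.
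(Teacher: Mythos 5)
Your proof is correct, but it takes a genuinely different route from the paper's. The paper proves the two inequalities separately: for $\H{\K{d}}(\delta(s),\delta(t)) \leq \max_{(\mu,\nu)\in R}\K{d}(\mu,\nu)$ it invokes M\'emoli's dual characterisation (Theorem~\ref{th:hausdorff}), using only the fact that $R \in \mathcal{R}(\delta(s),\delta(t))$ and nothing about how $R$ was built; for the converse it unfolds the primal $\max$--$\min$ definition of the Hausdorff lifting together with the $\argmin$ properties of $\phi$ and $\psi$, which is essentially your second step. Your argument bypasses Theorem~\ref{th:hausdorff} entirely: since $\phi(\mu)$ and $\psi(\nu)$ realise the inner minima over the finite sets $\delta(t)$ and $\delta(s)$, both sides of the claimed identity evaluate to the same expression $\max\bigl\{\max_{\mu\in\delta(s)}\K{d}(\mu,\phi(\mu)),\ \max_{\nu\in\delta(t)}\K{d}(\psi(\nu),\nu)\bigr\}$, giving equality in a single chain. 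What your route buys is elementarity and a sharper conclusion: it exhibits this particular $R$ as a \emph{witness} of the infimum in M\'emoli's dual formula, so in effect you re-prove the nontrivial (constructive) half of Theorem~\ref{th:hausdorff} in the finite case rather than consuming it as a black box. What the paper's route buys is economy and stylistic uniformity: the dual characterisation is already available and is applied in exactly the same way in the surrounding arguments (e.g., Lemmas~\ref{lem:prefixfpoint} and~\ref{lem:update_improves}), so the upper bound comes for free from $R$ being a set-coupling. One small remark: your appeal to finiteness of $\delta(s)$ and $\delta(t)$ (needed both for $\phi,\psi$ to exist and to replace $\sup$/$\inf$ by $\max$/$\min$) is essential, and you justify it correctly from finiteness of the transition relation; the check that $R$ is a genuine set-coupling is not needed for the equality itself, but it is needed for the lemma's intended use in~\eqref{eq:defh}, so it is worth keeping.
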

\begin{proof}%[proof of Lemma~\ref{lem:update}]
By Theorem~\ref{th:hausdorff} and $R \in \mathcal{R}(\delta(s),\delta(t))$, we have
\[\H{\K{d}}(\delta(s), \delta(t)) \leq \textstyle \max_{(\mu, \nu) \in R} \K{d}(\mu, \nu) \,.\]
Hence, it suffices to prove
\begin{enumerate}[label={(\roman*)}]
  \item\label{itm:update1} $\K{d}(\mu,\phi(\mu)) \leq \H{\K{d}}(\delta(s),\delta(t))$, for all $\mu \in \delta(s)$, and
  \item\label{itm:update2} $\K{d}(\psi(\nu),\nu) \leq \H{\K{d}}(\delta(s),\delta(t))$, for all $\nu \in \delta(t)$.
\end{enumerate}
We prove~\ref{itm:update1}. Let $\mu \in \delta(s)$. Then
\begin{align*}
\H{\K{d}}(\delta(s),\delta(t)) &\geq \max_{\mu' \in \delta(s)} \min_{\nu \in \delta(t)} \K{d}(\mu',\nu)
\tag{def. $\H{}$} \\
& \geq \min_{\nu \in \delta(t)} \K{d}(\mu,\nu)
\tag{$\mu \in \delta(s)$} \\
& = \K{d}(\mu,\phi(\mu)) \tag{$\phi(\mu) \in \argmin_{\nu \in \delta(t)} \K{d}(\mu, \nu)$}
\end{align*}
The proof for~\ref{itm:update2} follows similarly.
\end{proof}

\begin{rem}
The update procedure entailed by Lemma~\ref{lem:update_improves} can be
performed in polynomial-time in the size of the probabilistic automaton $\mathcal{A}$.
Indeed, $k(d)(\mu,\nu)$ can be obtained by solving a transportation problem in polynomial time~\cite{Orlin1985,KleinschmidtS95}.
As for $h(d)(s,t)$, one can obtain $\phi(\mu)$ (resp.\ $\psi(\nu)$) by computing $\K{d}(\mu,\nu)$
in polynomial time and selecting the $\nu$ (resp.\ $\mu$) ranging over $\delta(t)$ (resp.\ $\delta(s)$)
that achieves the minimum.
\end{rem}

%\subsection{Computing the Bisimilarity Distance (discounted case)}
\subsection{Discounted case.}%
\label{sec:discountedcase}

The simple policy iteration algorithm for computing $\bdist$ in the case $\lambda < 1$
is presented in Algorithm~\ref{alg:bdistlambda}.
The procedure starts by computing an initial vertex coupling structure $\C_0$ (line~\ref{spi:line1}),
\franck{\eg}, by using the North-West corner method in polynomial time (see, \franck{\eg},~\cite[pg.\ 180]{Strayer89}).
Then it continues by iteratively generating a sequence $\C_0, \C_1, \dots, \C_n$ of vertex coupling structures where $\bdist = \gamma_\lambda^{\C_n}$.
At each iteration, the current coupling structure $\C_i$ is tested for optimality (line~\ref{spi:line2})
by checking whether the corresponding $\lambda$-discrepancy $\gamma_\lambda^{\C_i}$ is a
fixed point for $\Delta_\lambda$. If there exists $(s,t) \in S$ violating the equality
$\gamma_\lambda^{\C_i} = \Delta_\lambda(\gamma_\lambda^{\C_i})$, it constructs $\C_{i+1}$ by updating $\C_i$ at $(s,t)$ as prescribed by Lemma~\ref{lem:update_improves}
(line~\ref{spi:line4}). This guarantees that $\gamma_{\lambda}^{\C_i} \sqsupset \gamma_{\lambda}^{\C_{i+1}}$, i.e., a strict improvement of the $\lambda$-discrepancy towards the minimal one.

%The procedure starts by guessing an initial vertex coupling structure $\C_0$ (line~\ref{spi:line1}) then, by iteratively updating the current coupling structure $\C_n$ it eventually reaches an optimal one.
%At each iteration, the current coupling structure $\C_n$ is tested for optimality by checking if
%the corresponding $\lambda$-discrepancy $\gamma_\lambda^{\C_n}$ is a fixed point for $\Delta_\lambda$.
%If not, it selects a pair $(s,t) \in S$ that violates the equality and constructs the following
%coupling structure $\C_{n+1}$ by updating $\C_n$ at $(s,t)$ as prescribed in \textsection\ref{sec:simplepolicy}.
%In this way we guarantee that $\gamma_{n}^{\C_n} \sqsupset \gamma_{n}^{\C_{n+1}}$.

Termination follows by the fact that there are only finitely many vertex coupling structures for $\mathcal{A}$. Furthermore, the correctness of the output of the algorithm is due to the fact that, $\Delta_\lambda$ has a unique fixed point when $0 \leq \lambda < 1$.

\begin{algorithm}[t]
\DontPrintSemicolon
Initialise $\C = (f,\rho)$ as an arbitrary vertex coupling structure for $\mathcal{A}$\; \label{spi:line1}
\While{$\exists (s,t).\, \Delta_\lambda(\gamma_\lambda^\C)(s,t) < \gamma_\lambda^\C(s,t)$}{ \label{spi:line2}
	$R \gets h(\gamma_\lambda^\C)(s,t)$\;
	$\C \gets \big(k(\gamma_\lambda^\C), \, \rho[(s,t) / R] \big)$ \tcc*[r]{update coupling structure} \label{spi:line4}
}
\Return $\gamma_\lambda^\C$ \tcc*[r]{$\gamma_\lambda^\C = \bdist$}

\caption{Simple policy iteration algorithm computing $\bdist$ for $\lambda \in (0,1)$.}
\label{alg:bdistlambda}
\end{algorithm}

\begin{thm}%
\label{th:alg:bdistlambda}
Let $\lambda \in (0,1)$. Algorithm~\ref{alg:bdistlambda} is terminates and computes $\bdist$.
\end{thm}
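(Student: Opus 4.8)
The plan is to prove termination and correctness separately, leaning on the monotonicity established in Lemma~\ref{lem:monoGamma} and the improvement guarantee of Lemma~\ref{lem:update_improves}. For \emph{termination}, I would first observe that the algorithm only ever produces \emph{vertex} coupling structures: the initial $\C_0$ is a vertex coupling structure by construction (line~\ref{spi:line1}), and each update replaces the measure-coupling map by $k(\gamma_\lambda^\C)$, which by~\eqref{eq:defk} lands in $V(\Omega(\mu,\nu))$, while the set-coupling map is updated at a single pair to $h(\gamma_\lambda^\C)(s,t)\in\mathcal{R}(\delta(s),\delta(t))$. Since $S$ is finite, each polytope $V(\Omega(\mu,\nu))$ has finitely many vertices, and each $\mathcal{R}(\delta(s),\delta(t))$ is finite, there are only finitely many vertex coupling structures for $\mathcal{A}$.

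The key termination step is then to show no vertex coupling structure is ever repeated. Whenever the guard on line~\ref{spi:line2} succeeds, there is a pair $(s,t)$ with $\Delta_\lambda(\gamma_\lambda^{\C_i})(s,t) < \gamma_\lambda^{\C_i}(s,t)$, and Lemma~\ref{lem:update_improves} applies directly to give $\gamma_\lambda^{\C_{i+1}} \sqsubset \gamma_\lambda^{\C_i}$, a \emph{strict} decrease in the $\lambda$-discrepancy. Thus the sequence $\gamma_\lambda^{\C_0} \sqsupset \gamma_\lambda^{\C_1} \sqsupset \cdots$ is strictly decreasing in the partial order $\sqsubseteq$, so the $\C_i$ are pairwise distinct; combined with finiteness this forces the loop to halt after finitely many iterations.

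For \emph{correctness}, upon termination the loop guard fails, meaning that for all $(s,t)$ we have $\Delta_\lambda(\gamma_\lambda^\C)(s,t) \not< \gamma_\lambda^\C(s,t)$. Since $\C$ is a coupling structure, Lemma~\ref{lem:prefixfpoint} gives the reverse inequality $\Delta_\lambda(\gamma_\lambda^\C) \sqsubseteq \gamma_\lambda^\C$; together these yield $\Delta_\lambda(\gamma_\lambda^\C) = \gamma_\lambda^\C$, so the output $\gamma_\lambda^\C$ is a fixed point of $\Delta_\lambda$. It remains to identify this fixed point with $\bdist$, the \emph{least} fixed point. Here I would invoke the fact (flagged in the text preceding the algorithm) that for $\lambda < 1$ the operator $\Delta_\lambda$ has a \emph{unique} fixed point: $\Delta_\lambda$ is a $\lambda$-contraction in the supremum norm—the nondiscounted part $\H{\K{\cdot}}$ is nonexpansive and the discount factor $\lambda$ scales distances, so Banach's fixed-point theorem gives uniqueness. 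Hence $\gamma_\lambda^\C = \bdist$, which is exactly what the algorithm returns.

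The main obstacle is the contraction/uniqueness argument for the correctness half. Termination is essentially bookkeeping once Lemma~\ref{lem:update_improves} is in hand, but pinning down that $\Delta_\lambda$ is a $\lambda$-contraction requires verifying that both the Hausdorff and Kantorovich liftings are nonexpansive with respect to the supremum norm on ${[0,1]}^{S\times S}$—the same kind of case analysis as in Proposition~\ref{prop:nonexpPhi}, now lifted through $\H{}$ and $\K{}$. An alternative route avoiding explicit contraction estimates would be to show the output is \emph{both} a fixed point and below every discrepancy: by Theorem~\ref{th:mincoupling}.\ref{th:mincoupling1}, $\bdist = \sqcap\{\gamma_\lambda^{\mathcal{C}'}\}$, so $\bdist \sqsubseteq \gamma_\lambda^\C$, while Corollary~\ref{cor:prefixfpoint} gives $\bdist \sqsubseteq \gamma_\lambda^\C$ and the fixed-point property together with $\bdist$ being the least fixed point of $\Delta_\lambda$ forces $\gamma_\lambda^\C \sqsubseteq \bdist$; I would use uniqueness as the cleanest way to close this loop.
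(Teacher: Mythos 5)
Your proposal is correct and follows essentially the same approach as the paper: termination by combining the strict improvement of Lemma~\ref{lem:update_improves} with a finiteness/pigeonhole argument, and correctness by deriving $\gamma_\lambda^\C = \Delta_\lambda(\gamma_\lambda^\C)$ from the failed loop guard plus Lemma~\ref{lem:prefixfpoint}, then appealing to uniqueness of the fixed point of the $\lambda$-contractive operator $\Delta_\lambda$ via Banach's theorem, exactly as the paper does (which cites the Lipschitz property rather than re-deriving it). Two small remarks: the set of vertex coupling structures is strictly speaking infinite (the measure-coupling map $f$ is total on $\Distr{S} \times \Distr{S}$), so the pigeonhole must be applied to the finite set of induced discrepancies $\{ \gamma_\lambda^\C \mid \C \text{ vertex coupling structure} \}$, which is how the paper phrases it; and your sketched ``alternative route'' is circular, since $\bdist$ being the least fixed point only yields $\bdist \sqsubseteq \gamma_\lambda^\C$ and not the reverse inequality, so the uniqueness step you ultimately fall back on is genuinely needed.
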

\begin{proof}%[proof of Theorem~\ref{th:alg:bdistlambda}]
First we prove termination. Note that the set
\begin{equation}
	\{\gamma_\lambda^\C \mid \text{$\C$ vertex coupling structure for $\mathcal{A}$}\}
	\label{eq:vetexdiscrepancieslamb}
\end{equation}
is finite because for all $s,t \in S$ the set $\mathcal{R}(\delta(s),\delta(t))$ is finite, and for all $\mu \in \delta(s)$ and $\nu \in \delta(t)$ the polytope $\Omega(\mu,\nu)$ has finitely many vertices, i.e., $V(\Omega(\mu,\nu))$ is finite.
Towards a contradiction, assume that Algorithm~\ref{alg:bdistlambda} does not terminate. Let $\C_0, \C_1, \C_2, \dots$ be the infinite sequence of coupling structures generated during a non-terminating execution of Algorithm~\ref{alg:bdistlambda}. Since the set in~\eqref{eq:vetexdiscrepancieslamb} is finite, there must be $i < j$ such that $\gamma_\lambda^{\C_i} = \gamma_\lambda^{\C_j}$.

On the contrary, next we prove that the updates of the coupling structures in Algorithm~\ref{alg:bdistlambda} ensure that for all $n \in \mathbb{N}$, $\gamma_\lambda^{\C_n} \sqsupset \gamma_\lambda^{\C_{n+1}}$. Let $n \in \mathbb{N}$.
Since we are considering a non-terminating execution we have that $\Delta_\lambda(\gamma_\lambda^{\C_n})(s,t) < \gamma_\lambda^{\C_n}(s,t)$, for some $s,t \in S$. $\C_{n+1}$ is obtained from the update performed in line~\ref{spi:line4}, which is exactly the one prescribed by Lemma~\ref{lem:update_improves}. Therefore we have that $\gamma_\lambda^{\C_n} \sqsupset \gamma_\lambda^{\C_{n+1}}$. Hence, Algorithm~\ref{alg:bdistlambda} terminates.

When the execution of Algorithm~\ref{alg:bdistlambda} has reached the return statement, we have that $\Delta_\lambda(\gamma_\lambda^{\C_n})(s,t) \geq \gamma_\lambda^{\C_n}(s,t)$ for all $s,t \in S$, i.e., $\gamma_\lambda^{\C_n} \sqsubseteq \Delta_\lambda(\gamma_1^{\C_n})$. By Lemma~\ref{lem:prefixfpoint}, $\gamma_\lambda^{\C_n} \sqsupseteq \Delta_\lambda(\gamma_\lambda^{\C_n})$, therefore $\gamma_\lambda^{\C_n} = \Delta_\lambda(\gamma_\lambda^{\C_n})$. The operator $\Delta_\lambda$ is $\lambda$-Lipschitz continuous \change{\cite[Proposition~10.3.2(b)]{T18}} thus, by Banach's fixed-point theorem, $\Delta_\lambda$ has a unique fixed point. Hence, $\gamma_\lambda^\C = \bdist$.
\end{proof}

\subsection{Undiscounted case.}%
%\subsection{Computing the Bisimilarity Distance (undiscounted case)}
\label{sec:undiscountedcase}

For $\lambda = 1$, the termination condition of the simple policy-iteration algorithm of Section~\ref{sec:discountedcase} is not sufficient to guarantee correctness, since Algorithm~\ref{alg:bdistlambda} may terminate prematurely
by returning a fixed point of $\Delta_1$ that is not the minimal one.

Towards a way to obtain a stronger termination condition, we introduce the notion of self-closed relations
w.r.t.\ a fixed point for $\Delta_1$, originally due to~\cite{Fu12}.

\begin{defi}%
\label{def:selfclosed}
A relation $M \subseteq S \times S$ is \emph{self-closed w.r.t.\ $d = \Delta_1(d)$} if, whenever
$s \mathrel{M} t$,
\begin{enumerate}[label={(\roman*)}]
  \item\label{itm:i} $\ell(s) = \ell(t)$ and $d(s,t) > 0$,
  \item\label{itm:ii} if $s \to \mu$ and $d(s,t) = \min_{\nu' \in \delta(t)} \K{d}(\mu,\nu')$ then
  there exists $t \to \nu$ and $\omega \in \Omega(\mu, \nu)$ such that
  $d(s,t) = \sum_{u,v \in S} d(u,v) \cdot \omega(u,v)$ and $\supp(\omega) \subseteq M$,
  \item\label{itm:iii} if $t \to \nu$ and $d(s,t) = \min_{\mu' \in \delta(s)} \K{d}(\mu',\nu)$ then
  there exists $s \to \mu$ and $\omega \in \Omega(\mu, \nu)$ such that
  $d(s,t) = \sum_{u,v \in S} d(u,v) \cdot \omega(u,v)$ and $\supp(\omega) \subseteq M$.
\end{enumerate}
Two states are self-closed w.r.t\ $d$, written $s \approx_d t$, if they are related by some self-closed
relation w.r.t.\ $d$.
\end{defi}

It can be easily shown that $\approx_d$ is the largest self-closed relation w.r.t.\ $d$. Note that
the concept of self-closeness above is defined only for fixed points of $\Delta_1$.
As remarked in~\cite{Fu12}, the largest self-closed relation $\approx_d$ can be computed in polynomial
time by using partition refinement techniques similar to those employed to compute the largest bisimilarity
relation.

\change{
\begin{exa}%
\label{ex:self-closedExample}
The following is a parametric definition for a family of fixed points of $\Delta_1$ over the probabilistic automaton in Figure~\ref{fig:bisimilaritygame}, for $\frac{1}{2} \leq \alpha \leq 1$:
\begin{align*}
& d_\alpha(t,u) = d_\alpha(u,t)= \alpha \,, \\
& d_\alpha(t,t) = d_\alpha(u,u) = d_\alpha(v,v) = 0 \,, \\
& d_\alpha(t,v) = d_\alpha(u,v) = d_\alpha(v,t) = d_\alpha(v,u) = 1\,.
\end{align*}
For $\alpha \neq \frac{1}{2}$, an example of a self-closed relation w.r.t.\ $d_\alpha$ is given by the relation
$M = \{(t,u)\}$. It is easy to verify that condition
(i) holds true since $\ell(t) = \ell(u)$ and $d_\alpha(t,u) = \alpha > 0$. The only case where (ii) is non-trivial is when $t \to \dirac_t$, as $\min_{\nu' \in \delta(u)} \K{d_\alpha}(\dirac_t,\nu') =  \K{d_\alpha}(\dirac_t,\dirac_u) = d_\alpha(t,u)$. In this case, condition (ii) is satisfied since $u \to \dirac_u$ and $\omega = \dirac_{(t,u)} \in \Omega(\dirac_t,  \dirac_u)$ is a measure-coupling such that $d_\alpha(s,t) = \sum_{u,v} d_\alpha(u,v) \cdot \omega(u,v)$ and $\supp(\omega) \subseteq M$. As for the last condition (iii), it holds true because the premise of the implication is never satisfied.
With similar arguments one can easily verify that $M' =  \{(t,u), (u,t)\}$ is also a self-closed set w.r.t.\ $d_\alpha$, and in particular it is the greatest one.

For $\alpha = \frac{1}{2}$, we have that $d_\alpha$ is the least fixed points. The only admissible self-closed relation for it is the empty set.
\end{exa}
}

%\change{Let $M$ be a self-closed set w.r.t $d$ where $d = \gamma_1^{\C}$ for some coupling structure $\C$. In the probabilistic automaton $\mathcal{A}_\C$, a state pair $(s, t) \in M$ can reach $\omega$ with $\supp(\omega) \subseteq M$. As we will see in the proof of Lemma~\ref{lem:decreasefixpoint}, this allows us to reduce all ${d(s, t)}_{(s, t) \in M}$ simultaneously by a small amount so that $d$ still is a prefix point of $\Delta_1$.}

\change{The next lemma (Lemma~\ref{lem:self-closedGame}) characterizes self-closed relation $M$ in terms of the existence of certain optimal strategies for the probabilistic bisimilarity game given in Definition~\ref{def:bssg}. Intuitively, $M$ is a set of nodes such that, if you play optimally with respect to the current value $d$ you still remain within $M$.}

%\franck{$M$ is a set of nodes such that, if you play optimally with respects to the current value $d$ you still remain
%within $M$.}
%\franck{See how the definition changes when we change it according to our version of the game.}

\change{
\begin{exa}
As shown in Example~\ref{ex:self-closedExample}, for $\frac{1}{2} < \alpha \leq 1$, $M = \{(t,u)\}$ is a self-closed relation w.r.t. $d_\alpha$ for the probabilistic automaton depicted in Figure~\ref{fig:bisimilaritygame}. Consider now $d_\alpha$ as the current value for the probabilistic bisimilarity game associated with the automaton (see Figure~\ref{fig:bisimilaritygame}(right)). An optimal strategy relative to the current value $d_\alpha$ is the one where the $\min$ player chooses $R$ from $(t,u)$, and $\dirac_{(t,u)}$ from $(\dirac_t, \dirac_u)$; while the max player chooses $(\dirac_t, \dirac_u)$ from $R$. This particular choice of the strategy makes the two players stay within $M$, without ever reaching a sink state. % chktex 36
\end{exa}
}

\change{
\begin{lem}%
\label{lem:self-closedGame}
Given $d = \Delta_1(d)$, a relation $M \subseteq S \times S$ is self-closed w.r.t.\ $d$, if and only if, whenever
$s \mathrel{M} t$ then
\begin{enumerate}[label={(\alph*)}]
  \item\label{itm:a} $\ell(s) = \ell(t)$ and $d(s,t) > 0$,
  \item\label{itm:B} there exists $R \in \mathcal{R}(\delta(s), \delta(t))$ such that $d(s,t) = \max_{(\mu', \nu') \in R} \K{d}(\mu', \nu')$ and for all $(\mu, \nu) \in R$ such that $d(s,t) = \K{d}(\mu, \nu)$, there exists $\omega \in \Omega(\mu, \nu)$ such that $d(s,t) = \sum_{u,v \in S} d(u,v) \cdot \omega(u,v)$ and $\supp(\omega) \subseteq M$.
\end{enumerate}
\end{lem}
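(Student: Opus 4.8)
The plan is to note first that condition~\ref{itm:a} is literally condition~\ref{itm:i} of Definition~\ref{def:selfclosed}, so the whole biconditional reduces to the following per-pair statement: fixing $s \mathrel{M} t$ with $\ell(s) = \ell(t)$ and $d(s,t) > 0$ (and recalling $d = \Delta_1(d)$), the conjunction of~\ref{itm:ii} and~\ref{itm:iii} is equivalent to~\ref{itm:B}. Since $\ell(s) = \ell(t)$ and $d = \Delta_1(d)$, we have $d(s,t) = \H{\K{d}}(\delta(s),\delta(t))$, and, because $\delta(s),\delta(t)$ are finite, Theorem~\ref{th:hausdorff} rewrites this as $d(s,t) = \min_{R' \in \mathcal{R}(\delta(s),\delta(t))} \max_{(\mu,\nu)\in R'} \K{d}(\mu,\nu)$. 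Throughout I would lean on two elementary facts: the sandwich $\min_{\nu' \in \delta(t)} \K{d}(\mu,\nu') \le \K{d}(\mu,\nu)$ (and its mirror image with the left coordinate), and $\K{d}(\mu,\nu) = \min_{\omega \in \Omega(\mu,\nu)} \sum_{u,v \in S} d(u,v)\,\omega(u,v)$.

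For the implication from~\ref{itm:ii}--\ref{itm:iii} to~\ref{itm:B} I would build the witnessing $R$ exactly as in~\eqref{eq:computeh}, namely $R = \{(\mu,\phi(\mu)) \mid \mu \in \delta(s)\} \cup \{(\psi(\nu),\nu) \mid \nu \in \delta(t)\}$, but with the choice functions $\phi,\psi$ tuned to the self-closedness witnesses. Concretely, for $\mu \in \delta(s)$ with $\min_{\nu' \in \delta(t)} \K{d}(\mu,\nu') = d(s,t)$, condition~\ref{itm:ii} supplies some $t \to \nu$ and $\omega \in \Omega(\mu,\nu)$ with $\sum_{u,v \in S} d(u,v)\,\omega(u,v) = d(s,t)$ and $\supp(\omega) \subseteq M$; the sandwich then forces $\K{d}(\mu,\nu) = d(s,t) = \min_{\nu'}\K{d}(\mu,\nu')$, so $\nu$ is a minimiser, and I set $\phi(\mu) = \nu$. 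For every other $\mu$ I let $\phi(\mu)$ be an arbitrary minimiser of $\K{d}(\mu,\cdot)$, and I define $\psi$ symmetrically from~\ref{itm:iii}. Since then $\phi(\mu) \in \argmin_{\nu' \in \delta(t)} \K{d}(\mu,\nu')$ and $\psi(\nu) \in \argmin_{\mu' \in \delta(s)} \K{d}(\mu',\nu)$, Lemma~\ref{lem:update} certifies $\max_{(\mu,\nu)\in R} \K{d}(\mu,\nu) = \H{\K{d}}(\delta(s),\delta(t)) = d(s,t)$, the value part of~\ref{itm:B}. For the support part, any $(\mu,\nu) \in R$ with $\K{d}(\mu,\nu) = d(s,t)$ has a minimising second (resp.\ first) coordinate, so the premise of~\ref{itm:ii} (resp.~\ref{itm:iii}) holds; by construction $\phi(\mu)$ (resp.\ $\psi(\nu)$) was chosen to be the witness, whose accompanying coupling is precisely the required $\omega$.

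For the converse I would assume~\ref{itm:B} with witness $R$ and derive~\ref{itm:ii} (with~\ref{itm:iii} symmetric). Given $s \to \mu$ with $\min_{\nu'}\K{d}(\mu,\nu') = d(s,t)$, the left projection of $R$ being $\delta(s)$ yields some $(\mu,\nu) \in R$; the value part of~\ref{itm:B} gives $\K{d}(\mu,\nu) \le d(s,t)$, while the sandwich gives $\K{d}(\mu,\nu) \ge \min_{\nu'}\K{d}(\mu,\nu') = d(s,t)$, so $\K{d}(\mu,\nu) = d(s,t)$. The support part of~\ref{itm:B} then provides $\omega \in \Omega(\mu,\nu)$ with $\sum_{u,v \in S} d(u,v)\,\omega(u,v) = d(s,t)$ and $\supp(\omega) \subseteq M$, and $\nu \in \delta(t)$ gives $t \to \nu$, establishing~\ref{itm:ii}.

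The one delicate point—the step I would watch most carefully—is the forward direction: the witness $\nu$ delivered by~\ref{itm:ii} need not equal an \emph{a priori} chosen minimiser of $\K{d}(\mu,\cdot)$, so one must feed these witnesses back into the definition of $\phi$ (and $\psi$) rather than fixing the choice functions independently, and simultaneously verify (via the sandwich) that each witness really is a minimiser, so that Lemma~\ref{lem:update} still certifies the value equality. Everything else is routine bookkeeping with the Kantorovich minimum and M\'emoli's dual characterisation.
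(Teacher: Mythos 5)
Your proposal is correct and follows essentially the same route as the paper's proof: the paper's $(\Rightarrow)$ direction likewise builds $R = \{(\mu,\phi(\mu))\} \cup \{(\psi(\nu),\nu)\}$ with $\phi,\psi$ chosen to be minimisers that, whenever their Kantorovich value hits $d(s,t)$, coincide with the witnesses supplied by conditions~\ref{itm:ii} and~\ref{itm:iii} (your ``tuned'' choice functions, justified by the same sandwich argument), and its $(\Leftarrow)$ direction is exactly your projection-plus-sandwich argument. The only cosmetic difference is that you invoke Lemma~\ref{lem:update} for the value equality $\max_{(\mu,\nu)\in R}\K{d}(\mu,\nu) = \H{\K{d}}(\delta(s),\delta(t))$, where the paper re-derives it inline from the definitions of $\phi$ and $\psi$.
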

\begin{proof}
$(\Leftarrow)$ Let $s \mathrel{M} t$. Def.~\ref{def:selfclosed}.\ref{itm:i} follows from~\ref{itm:a}. Next we show that~\ref{itm:B} implies Def.~\ref{def:selfclosed}.\ref{itm:ii}. Let $R$ be the set-coupling for $(\delta(s), \delta(t))$ satisfying~\ref{itm:B} and $\mu \in \delta(s)$ be such that $d(s,t) = \min_{\nu' \in \delta(t)} \K{d}(\mu,\nu')$. Then by definition of set-coupling, there exists $\nu \in \delta(t)$ such that $(\mu, \nu) \in R$.
Clearly $\K{d}(\mu, \nu) \geq \min_{\nu' \in \delta(t)} \K{d}(\mu,\nu')$. Moreover,
\begin{align*}
\K{d}(\mu, \nu) &\leq \max_{(\mu',\nu') \in R} \K{d}(\mu', \nu') \tag{$(\mu, \nu) \in R$} \\
&= d(s,t) \tag{by~\ref{itm:B}} \\
&= \min_{\nu' \in \delta(t)} \K{d}(\mu,\nu') \,. \tag{hp.\ on $\mu$}
\end{align*}
Therefore, $\K{d}(\mu, \nu) = d(s,t)$. By $(\mu, \nu) \in R$ and~\ref{itm:B}, there exists $\omega \in \Omega(\mu, \nu)$ such that $d(s,t) = \sum_{u,v \in S} d(u,v) \cdot \omega(u,v)$ and $\supp(\omega) \subseteq M$. Hence, Def.~\ref{def:selfclosed}.\ref{itm:ii} holds true.
Condition Def.~\ref{def:selfclosed}.\ref{itm:iii} follows similarly.

$(\Rightarrow)$ Let $s \mathrel{M} t$. Condition~\ref{itm:a} follows by Def.~\ref{def:selfclosed}.\ref{itm:i}.
Assume Def.~\ref{def:selfclosed}.\ref{itm:ii} and~\ref{itm:iii} hold true. Then, we can define $\phi \colon \delta(s) \to \delta(t)$ and $\psi \colon \delta(t) \to \delta(s)$ such that
\begin{enumerate}
\item $\phi(\mu) \in \argmin_{\nu' \in \delta(t)} \K{d}(\mu, \nu')$ and if $\K{d}(\mu, \phi(\mu)) = d(s,t)$ then, there exists $\omega \in \Omega(\mu, \phi(\mu))$ such that $\sum_{u,v \in S} d(u,v) \cdot \omega(u,v) = d(s,t)$ and $\supp(\omega) \subseteq M$.
\item $\psi(\nu) \in \argmin_{\mu' \in \delta(s)} \K{d}(\mu',\nu)$ and if $\K{d}(\psi(\nu),\nu) = d(s,t)$ then there exists $\omega \in \Omega(\psi(\nu),\nu)$ such that $\sum_{u,v \in S} d(u,v) \cdot \omega(u,v) = d(s,t)$ and $\supp(\omega) \subseteq M$.
\end{enumerate}
Define $R = \{ (\mu, \phi(\mu)) \mid \mu \in \delta(s) \} \cup \{ (\psi(\nu), \nu) \mid \nu \in \delta(t) \}$. By construction $R \in \mathcal{R}(\delta(s), \delta(t))$. Then, the following hold
\begin{align*}
\max_{(\mu,\nu) \in R} \K{d}(\mu, \nu)
&= \max\{ \max_{\mu \in \delta(s)} \K{d}(\mu, \phi(\mu)), \max_{\nu \in \delta(t)} \K{d}(\psi(\nu), \nu) \} \tag{def.\ $R$} \\
&= \H{\K{d}}(\delta(s),\delta(t)) \tag{by def.\ $\phi$ and $\psi$} \\
&= d(s,t) \,. \tag{$d = \Delta_1(d)$}
\end{align*}
Let $(\mu, \nu) \in R$ such that $d(s,t) = \K{d}(\mu, \nu)$. By definition, $\nu = \phi(\mu)$ or $\mu = \psi(\nu)$. If $\nu = \phi(\mu)$ then, $d(s,t) = \K{d}(\mu, \phi(\mu))$ and, by $\phi$, there exists $\omega \in \Omega(\mu,\phi(\mu))$ such that $\sum_{u,v \in S} d(u,v) \cdot \omega(u,v) = d(s,t)$ and $\supp(\omega) \subseteq M$. The case $\mu = \psi(\nu)$ is analogous.
\end{proof}
}

The next lemma states that if for a fixed point $d = \Delta_1(d)$ the relation $\approx_d$ is nonempty,
then $d$ is not the least fixed point of $\Delta_1$.
\begin{lem}%
\label{lem:decreasefixpoint}
Let $d = \Delta_1(d)$. If there exists a nonempty self-closed relation $M$ w.r.t.\ $d$, then
there exists $d_M \sqsubset d$ such that $\Delta_1(d_M) \sqsubseteq d_M$.
Moreover, $d_M$ can be computed in polynomial time in the size of the probabilistic automaton $\mathcal{A}$.
\end{lem}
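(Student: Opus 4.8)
The plan is to produce $d_M$ by uniformly lowering $d$ on exactly the pairs in $M$. Concretely, I would set $d_M(s,t) = d(s,t) - \epsilon$ for $(s,t) \in M$ and $d_M(s,t) = d(s,t)$ otherwise, for a suitably small constant $\epsilon > 0$ fixed at the end. Since $M$ is nonempty and $\epsilon > 0$, this immediately gives $d_M \sqsubset d$; and choosing $\epsilon \le \min_{(s,t)\in M} d(s,t)$ (each such $d(s,t)$ is positive by Definition~\ref{def:selfclosed}.\ref{itm:i}) keeps $d_M$ inside ${[0,1]}^{S \times S}$. The substance of the proof is then to verify $\Delta_1(d_M) \sqsubseteq d_M$.

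First I would dispose of the pairs outside $M$ by pure monotonicity. Because $d_M \sqsubseteq d$ and $\Delta_1$ is monotone, $\Delta_1(d_M) \sqsubseteq \Delta_1(d) = d$; hence for every $(s,t) \notin M$ one already has $\Delta_1(d_M)(s,t) \le d(s,t) = d_M(s,t)$. So all the work concentrates on a pair $(s,t) \in M$, where $\ell(s)=\ell(t)$ and I must establish the strict drop $\Delta_1(d_M)(s,t) = \H{\K{d_M}}(\delta(s),\delta(t)) \le d(s,t) - \epsilon$.

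For this I would invoke the game characterisation of self-closedness (Lemma~\ref{lem:self-closedGame}.\ref{itm:B}) to obtain a set-coupling $R \in \mathcal{R}(\delta(s),\delta(t))$ with $\max_{(\mu,\nu)\in R} \K{d}(\mu,\nu) = d(s,t)$ and, for every \emph{tight} pair $(\mu,\nu) \in R$ (one with $\K{d}(\mu,\nu) = d(s,t)$), a measure-coupling $\omega \in \Omega(\mu,\nu)$ with $\sum_{u,v \in S} d(u,v)\,\omega(u,v) = d(s,t)$ and $\supp(\omega) \subseteq M$. By Theorem~\ref{th:hausdorff}, $\H{\K{d_M}}(\delta(s),\delta(t)) \le \max_{(\mu,\nu)\in R} \K{d_M}(\mu,\nu)$, so it suffices to bound each $\K{d_M}(\mu,\nu)$ for $(\mu,\nu)\in R$ by $d(s,t)-\epsilon$. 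For a tight pair I would test $\K{d_M}(\mu,\nu)$ against the witnessing $\omega$: since $\supp(\omega)\subseteq M$ carries all the mass, $\K{d_M}(\mu,\nu) \le \sum_{u,v \in S} d_M(u,v)\,\omega(u,v) = \sum_{u,v \in S} d(u,v)\,\omega(u,v) - \epsilon = d(s,t) - \epsilon$, using that $\omega$ puts total mass $1$ on $M$. For a slack pair, monotonicity of the Kantorovich lifting in its underlying metric (immediate from the dual formulation) gives $\K{d_M}(\mu,\nu) \le \K{d}(\mu,\nu) < d(s,t)$, which is $\le d(s,t)-\epsilon$ as soon as $\epsilon$ is no larger than the gap $d(s,t)-\K{d}(\mu,\nu)$.

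Putting the constraints together, I would take $\epsilon$ to be the minimum of the finite set of positive numbers consisting of all $d(s,t)$ for $(s,t) \in M$ together with all gaps $d(s,t) - \K{d}(\mu,\nu)$ over $(s,t) \in M$ and slack pairs $(\mu,\nu) \in R_{s,t}$; this $\epsilon$ is strictly positive and makes every bound above go through. The main obstacle is precisely the bookkeeping at the tight pairs: the argument hinges on self-closedness supplying a coupling supported entirely on $M$, so that subtracting $\epsilon$ on $M$ decreases the Kantorovich cost by the \emph{full} $\epsilon$ rather than some smaller amount, while the slack pairs only dictate how small $\epsilon$ must be. For the complexity claim I would take $M = \approx_d$, which is computable in polynomial time, compute each $R_{s,t}$ as in the construction of Lemma~\ref{lem:self-closedGame} via Kantorovich distances (transportation linear programs) and their optimal couplings, all in polynomial time, and then read off $\epsilon$ and $d_M$.
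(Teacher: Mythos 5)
Your proposal is correct and follows essentially the same route as the paper's proof: the same $d_M$ obtained by uniformly subtracting a small positive constant on $M$ (the paper's $\theta$ plays exactly the role of your $\epsilon$, bounded by the same kinds of quantities), monotonicity of $\Delta_1$ disposing of pairs outside $M$, and the tight/slack case split inside $M$ in which self-closedness supplies a measure-coupling supported entirely on $M$ so that the full subtraction propagates through the Kantorovich cost. The only cosmetic difference is that you access self-closedness via Lemma~\ref{lem:self-closedGame} and bound the Hausdorff lifting by the dual set-coupling characterization (Theorem~\ref{th:hausdorff}), whereas the paper argues directly from Definition~\ref{def:selfclosed} using the primal $\max$-$\min$ form of the Hausdorff lifting; the two organize the identical computation.
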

\begin{proof}%[proof of Lemma~\ref{lem:decreasefixpoint}]
Let $M$ be a nonempty self-closed relation w.r.t.\ $d$.
For arbitrary $s,t \in S$, $\mu \in \delta(s)$, and $\nu \in \delta(t)$, define
\begin{equation*}
\theta_s(\mu, t) = d(s,t) - \min_{\nu \in \delta(t)} \K{d}(\mu,\nu)
\qquad \text{and} \qquad
\theta_t(s, \nu) = d(s,t) - \min_{\mu \in \delta(s)} \K{d}(\mu,\nu) \,.
\end{equation*}
Note that, $\theta_s(\mu, t)$ and $\theta_t(s, \nu)$ are non-negative since $d = \Delta_1(d)$.
Let $\theta = \min \{ \theta_1, \theta_2, \theta_3 \}$ where
\begin{itemize}
\item $\theta_1 = \min \{ \theta_s(\mu, t) \mid (s,t) \in M \land \mu \in \delta(s) \land \theta_s(\mu, t) > 0 \}$;
\item $\theta_2 = \min \{ \theta_t(s, \nu) \mid (s,t) \in M \land \nu \in \delta(t) \land \theta_t(s, \nu) > 0 \}$;
\item $\theta_3 = \min \{ d(s,t) \mid (s,t) \in M \}$;
\end{itemize}
where $\min \emptyset = 1$. Note that $\theta_3 > 0$, because $M$ is a nonempty self-closed relation w.r.t. $d$. Therefore $\theta > 0$.
We define the map $d_M \colon S \times S \to [0,1]$ as
\begin{equation*}
d_M(s,t) =
\begin{cases}
  d(s,t) - \theta & \text{if $(s,t) \in M$} \\
  d(s,t) &\text{if $(s,t) \notin M$}
\end{cases}
\end{equation*}
It is clear that $d_M$ is well-defined. Moreover $d_M \sqsubset d$ because $M$ is nonempty and $\theta > 0$.

Next we prove that $ \Delta_1(d_M) \sqsubseteq d_M$.
Let $s,t \in S$. We consider two cases:
\begin{itemize}

\item Assume $(s,t) \notin M$. Then
\begin{align*}
\Delta_1(d_M)(s,t)
&\leq \Delta_1(d)(s,t)  \tag{by $d_M \sqsubseteq d$ and $\Delta_1$ monotone} \\
&= d(s,t)  \tag{$d = \Delta_1(d)$} \\
&= d_M(s,t) \tag{$(s,t) \notin M$}
\end{align*}

\item Assume $(s,t) \in M$. Then $\ell(s) = \ell(t)$. Let $\mu \in \delta(s)$. We consider two subcases below:
\begin{enumerate}
\item\label{thetaGreaterThanZero}
If $\theta_s(\mu, t) > 0$ we have
\begin{align*}
d_M(s,t)
&= d(s,t) - \theta    \tag{def.\ $d_M$} \\
&\geq d(s,t) - \theta_s(\mu, t)   \tag{$0 < \theta \leq \theta_s(\mu, t)$} \\
&= \min_{\nu \in \delta(t)} \K{d}(\mu,\nu)  \tag{def.\ $\theta_s(\mu, t)$}  \\
&\geq \min_{\nu \in \delta(t)} \K{d_M}(\mu,\nu)  \tag{$d_M \sqsubseteq d$ and $\K{}$ monotone}
\end{align*}
\item\label{thetaEqualZero}
If $\theta_s(\mu, t) = 0$, then $d(s,t) = \min_{\nu \in \delta(t)} \K{d}(\mu,\nu)$.
Since $M$ is self-closed w.r.t.\ $d$, there exists
$\nu' \in \delta(t)$ such that $d(s,t) = \sum_{u,v \in S} d(u,v) \cdot \omega(u,v)$, for some
$\omega \in \Omega(\mu,\nu')$ such that $\supp(\omega) \subseteq M$. Thus
\begin{align*}
 \min_{\nu \in \delta(t)} \K{d_M}(\mu, \nu) &\leq \K{d_M}(\mu, \nu') \tag{$\nu' \in \delta(t)$} \\
&=  \min_{\omega' \in \Omega(\mu, \nu')} \sum_{u, v \in S} d_M(u, v) \cdot \omega'(u, v)  \tag{def. $\K{d_M}$} \\
& \leq \sum_{u,v \in S} d_M(u,v) \cdot \omega(u,v) \tag{$\omega \in \Omega(\mu,\nu')$} \\
&= \sum_{u,v \in M} d_M(u,v) \cdot \omega(u,v)  \tag{$\supp(\omega) \subseteq M$} \\
&= \sum_{u,v \in M} (d(u,v) - \theta) \cdot \omega(u,v)  \tag{def.\ $d_M$} \\
&= \Big(\sum_{u,v \in M} d(u,v) \cdot \omega(u,v) \Big) - \theta \tag{$\sum_{u,v \in M} \omega(u,v) = 1$} \\
&= d(s,t) - \theta \tag{$d(s,t) = \sum_{u,v \in S} d(u,v) \cdot \omega(u,v)$} \\
&= d_M(s,t)  \tag{def.\ $d_M$}
\end{align*}
%Thus $\K{d_M}(\mu, \nu') \leq \sum_{u,v \in S} d_M(u,v) \cdot \omega(u,v) = d_M(s,t)$.
\end{enumerate}
So that, in both cases~\ref{thetaGreaterThanZero} and~\ref{thetaEqualZero} we have
$d_M(s,t) \geq \min_{\nu \in \delta(t)} \K{d_M}(\mu,\nu)$. Since this inequality holds for all $\mu \in \delta(s)$,
we have $d_M(s,t) \geq \max_{\mu \in \delta(s)} \min_{\nu \in \delta(t)} \K{d_M}(\mu,\nu)$.
Symmetrically, we can prove $d_M(s,t) \geq \max_{\nu \in \delta(t)} \min_{\mu \in \delta(s)} \K{d_M}(\mu,\nu)$.
Thus, by definition of Hausdorff lifting, $d_M(s,t) \geq \H{\K{d_M}}(\delta(s),\delta(t))$.
From this we conclude
\begin{align*}
d_M(s,t)
&\geq \H{\K{d_M}}(\delta(s),\delta(t)) \\
&= \Delta_1(d_M)(s,t)  \tag{$\ell(s) = \ell(t)$ and def.\ $\Delta_1$}
\end{align*}
\end{itemize}
Finally, we consider the complexity of computing $d_M$.
For computing $\theta$, we need to compute in turn $\theta_1$, $\theta_2$, and
$\theta_3$.
Since $M \subseteq S \times S$, computing $\theta_3$ can be done in quadratic time in $|S|$.
The computation of $\theta_1$ requires at most $|M| \cdot \sum_{s \in S}|\delta(s)|$ solutions of
a transportation problem. This can be done in polynomial-time in the size of $\mathcal{A}$. Similarly for $\theta_2$.
\end{proof}

The proof of Lemma~\ref{lem:decreasefixpoint} is essentially that of~\cite[Theorem~3]{Fu12}.
Given a nonempty self-closed relation $M$ w.r.t.\ $d$, the above result can be used to obtain a
prefix point of $\Delta_1$, namely $d_M$, that improves $d$ towards the search of the
least fixed point.
%%%%%%%%%%%%
The prefix point $d_M$ of Lemma~\ref{lem:decreasefixpoint} is obtained from $d$ by subtracting a
suitable value $\theta > 0$ from all the distances computed at pairs of states in $M$:
\begin{equation*}
d_M(s,t) =
\begin{cases}
  d(s,t) - \theta & \text{if $(s,t) \in M$} \,, \\
  d(s,t) &\text{if $(s,t) \notin M$} \,.
\end{cases}
\end{equation*}
\change{The value of $\theta$ that gives us the smallest prefix point defined as above, is the maximal value satisfying the following inequalities
\begin{align*}
\theta &\leq d(s,t) - \min_{\nu' \in \delta(t)} \K{d}(\mu,\nu')
	&& \text{for all $(s,t) \in M$ and $\mu \in \delta(s)$,}  \\
\theta &\leq d(s,t) - \min_{\mu' \in \delta(s)} \K{d}(\mu',\nu)
	&& \text{for all $(s,t) \in M$ and $\nu \in \delta(t)$,}  \\
\theta &\leq d(s,t)   &&\text{for all $(s,t) \in M$} \,.
\end{align*}
The fact that $d_M$ is a prefix point follows by the fact that $M$ is a self-closed relation.}

\medskip

The following lemma provides us with a termination condition for the simple policy iteration algorithm
to compute $\bdist[1]$. Indeed, according to it, if $d$ is a fixed point of $\Delta_1$, we
can assert that $d$ is equal to bisimilarity distance $\bdist[1]$ by simply checking that the maximal self-closed
relation w.r.t.\ $d$ is empty.
\begin{lem}%
\label{lem:termination}
\change{Let $d = \Delta_1(d)$. If} ${\approx_d} = \emptyset$, then $d = \bdist[1]$.
\end{lem}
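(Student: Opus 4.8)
The plan is to prove the contrapositive. Since $\bdist[1]$ is the least fixed point of $\Delta_1$ (Knaster--Tarski) and $d = \Delta_1(d)$ is a fixed point, we always have $\bdist[1] \sqsubseteq d$. Assuming $d \neq \bdist[1]$, I would exhibit a nonempty self-closed relation w.r.t.\ $d$, which by maximality of $\approx_d$ forces $\approx_d \neq \emptyset$. The natural candidate is the set on which $d$ strictly exceeds the true distance, $M = \{(s,t) \in S\times S \mid \bdist[1](s,t) < d(s,t)\}$; it is nonempty precisely because $\bdist[1] \sqsubseteq d$ and $d \neq \bdist[1]$. The whole argument then reduces to showing that $M$ is self-closed w.r.t.\ $d$ in the sense of Definition~\ref{def:selfclosed} (equivalently, via the game characterisation of Lemma~\ref{lem:self-closedGame}).

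Condition~\ref{itm:i} is immediate: if $(s,t)\in M$ then $d(s,t) > \bdist[1](s,t) \ge 0$, so $d(s,t)>0$; moreover $\ell(s)=\ell(t)$, since otherwise both $d(s,t)$ and $\bdist[1](s,t)$ would equal $1$, contradicting $\bdist[1](s,t)<d(s,t)$. For the matching conditions~\ref{itm:ii} and~\ref{itm:iii}, which are symmetric so that I would only treat~\ref{itm:ii}, fix $(s,t)\in M$ and $s\to\mu$ with $d(s,t) = \min_{\nu'\in\delta(t)}\K{d}(\mu,\nu')$. Using $d = \Delta_1(d)$ and $\ell(s)=\ell(t)$ we have $d(s,t) = \H{\K{d}}(\delta(s),\delta(t))$, and the tightness hypothesis yields a $d$-minimiser $\nu\in\delta(t)$ with $\K{d}(\mu,\nu)=d(s,t)$; the dual formulation of the Kantorovich distance then supplies an optimal measure-coupling $\omega\in\Omega(\mu,\nu)$ with $\sum_{u,v}d(u,v)\,\omega(u,v)=d(s,t)$. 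This already delivers the numerical equality demanded by Definition~\ref{def:selfclosed}.\ref{itm:ii}; what remains is to arrange $\supp(\omega)\subseteq M$.

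The main obstacle is exactly this support-containment: I must choose $\nu$ and $\omega$ so that all transported mass sits on pairs where $d$ still strictly dominates $\bdist[1]$. The leverage is that $\bdist[1]$ is \emph{also} a fixed point of $\Delta_1$ and the \emph{least} one. For any $d$-optimal $\omega$ one has $\K{\bdist[1]}(\mu,\nu)\le\sum_{u,v}\bdist[1](u,v)\,\omega(u,v)\le\sum_{u,v}d(u,v)\,\omega(u,v)=d(s,t)$, so the excess $e=d-\bdist[1]\ge 0$ carries the entire gap. I would then select, among the $d$-optimal couplings (a face of the transportation polytope $\Omega(\mu,\nu)$), one maximising the mass $\sum_{u,v}e(u,v)\,\omega(u,v)$, and argue by an exchange/complementary-slackness argument that any residual mass on a pair $(u,v)$ with $\bdist[1](u,v)=d(u,v)$ can be eliminated: otherwise such mass would witness, along this tight transition, that the value $d(s,t)$ is already attained by $\bdist[1]$, contradicting both $\bdist[1](s,t)<d(s,t)$ and the minimality of $\bdist[1]$. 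This step is the technically delicate one and mirrors Fu's argument~\cite{Fu12}; condition~\ref{itm:iii} follows symmetrically by exchanging the roles of $s$ and $t$. Once $M$ is shown self-closed, $\approx_d\supseteq M\neq\emptyset$ contradicts $\approx_d=\emptyset$, and therefore $d=\bdist[1]$.
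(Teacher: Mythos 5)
Your overall strategy (contraposition: from $d \neq \bdist[1]$ exhibit a nonempty self-closed relation) is exactly the paper's, but your choice of witness set is wrong, and this is not a repairable detail: it is the crux of the proof. You take $M$ to be the set of pairs where the gap is merely \emph{positive}, $M = \{(s,t) \mid \bdist[1](s,t) < d(s,t)\}$, whereas the paper takes the set where the gap is \emph{maximal}: with $m = \max_{u,v}\bigl(d(u,v)-\bdist[1](u,v)\bigr)$, it sets $M = \{(s,t) \mid d(s,t)-\bdist[1](s,t) = m\}$. The positive-gap set is in general \emph{not} self-closed, and no selection among $d$-optimal couplings can fix this, because there may be no $d$-optimal coupling supported on positive-gap pairs at all. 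Concretely: take states $x,y$ with label $A$, $p,q$ with label $B$, $s,t$ with label $C$, transitions $x \to \dirac_x$, $y \to \dirac_y$, $p \to \dirac_p$, $q \to \dirac_q$, and $s \to \mu = \frac{1}{2}\dirac_x + \frac{1}{2}\dirac_p$, $t \to \nu = \frac{1}{2}\dirac_y + \frac{1}{2}\dirac_q$. The function $d$ equal to $1$ on $(x,y)$, $0$ on $(p,q)$, $\frac{1}{2}$ on $(s,t)$ (symmetric, $0$ on the diagonal, $1$ on all pairs with distinct labels) is a fixed point of $\Delta_1$, while $\bdist[1]$ vanishes on $(x,y)$, $(p,q)$ and $(s,t)$ since these are pairs of bisimilar states. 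So $(s,t)$ lies in your $M$, and $\mu$ is tight there because $\delta(t)=\{\nu\}$ and $\K{d}(\mu,\nu)=\frac{1}{2}=d(s,t)$. But the \emph{unique} coupling $\omega \in \Omega(\mu,\nu)$ with $\sum_{u,v} d(u,v)\,\omega(u,v) = \frac{1}{2}$ is $\omega(x,y)=\omega(p,q)=\frac{1}{2}$, whose support contains the zero-gap pair $(p,q) \notin M$. Hence Definition~\ref{def:selfclosed}.\ref{itm:ii} fails for your $M$, and your proposed exchange/complementary-slackness step has nothing to exchange with: the face of $d$-optimal couplings is a single point, and it sits on a zero-gap pair.

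The maximal-gap set avoids this because pairs with intermediate gap, such as $(s,t)$ above (gap $\frac{1}{2} < 1 = m$), are simply excluded from $M$, so no matching condition needs to be verified there; and at maximal-gap pairs the averaging argument closes the deal: one establishes $m = \sum_{u,v}\bigl(d(u,v)-\bdist[1](u,v)\bigr)\omega(u,v)$ for a suitable $\omega$, and since every term is at most $m$, every pair in $\supp(\omega)$ must have gap exactly $m$, i.e.\ lie in $M$. A second, related correction: this $\omega$ cannot be chosen ``$d$-first'' as you do. The paper picks $\nu^*$ minimizing $\K{\bdist[1]}(\mu,\cdot)$ over $\delta(t)$ and $\omega$ optimal for $\K{\bdist[1]}(\mu,\nu^*)$, i.e.\ optimal w.r.t.\ $\bdist[1]$, and only then derives, by squeezing a chain of inequalities between $d(s,t)$ and $\bdist[1](s,t)+m = d(s,t)$, that this same $\omega$ also has $d$-cost exactly $d(s,t)$. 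Starting instead from a $d$-minimizer $\nu$, as you propose, one has no control over $\K{\bdist[1]}(\mu,\nu)$, which may exceed $\bdist[1](s,t)$, and the averaging identity cannot be established.
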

\begin{proof}%[proof of Lemma~\ref{lem:termination}]
Let $d = \Delta_1(d)$. We proceed by contraposition. Assume that $d \neq \bdist[1]$. We define a non-empty self-closed relation $M$ w.r.t.\ $d$ as follows.
\begin{align*}
	m = \textstyle \max_{s,t \in S} d(s,t) - \bdist[1](s,t) \,,  && M = \{(s,t) \in S \times S \mid d(s,t) - \bdist[1](s,t) = m \} \, .
\end{align*}
Clearly, $m > 0$ and $M \neq \emptyset$ because $d \neq \bdist[1]$.

Let $(s,t) \in M$. We prove that the three conditions of Definition~\ref{def:selfclosed} hold true.
\begin{enumerate}
\item\label{itm:condition1} $d(s,t) > 0$ because $0 < m = d(s,t) - \bdist[1](s,t) \leq d(s,t)$. Now we prove that $\ell(s) = \ell(t)$. Towards a contradiction, assume $\ell(s) \neq \ell(t)$. Then, the following inequalities hold
\begin{equation*}
0 < m = d(s,t) - \bdist[1](s,t) = \Delta_1(d)(s,t) - \Delta_1(\bdist[1])(s,t) = 1 - 1 = 0 \,,
\end{equation*}
leading to the contradiction that $0 < 0$.
\item Let $\mu \in \delta(s)$ such that $d(s,t) = \min_{\nu \in \delta(t)} \K{d}(\mu, \nu)$. Then we have
\begin{align*}
	\bdist[1](s,t) &= \Delta_1(\bdist[1])(s,t) \tag{by def.\ $\bdist[1]$} \\
	&= \H{\K{\bdist[1]}}(\delta(s),\delta(t)) \tag{by $\ell(s) = \ell(t)$} \\
	&\geq \textstyle \min_{\nu \in \delta(t)} \K{\bdist[1]}(\mu, \nu) \tag{by def.\ $\H{}$}
\end{align*}
Let $\nu^* \in \delta(t)$, $\omega \in \Omega(\mu, \nu^*)$ such that
\begin{equation}
\textstyle \min_{\nu \in \delta(t)} \K{\bdist[1]}(\mu, \nu) = \K{\bdist[1]}(\mu, \nu^*) =  \sum_{u,v \in S} \bdist[1](u,v) \cdot \omega(u,v) \,. \label{eq:omega}
\end{equation}
Then, the following inequalities hold
\begin{align*}
	&\K{\bdist[1]}(\mu,\nu^*) = \textstyle \sum_{u,v \in S} \bdist[1](u,v) \cdot \omega(u,v)
	\tag{by~\eqref{eq:omega}}\\
	&= \textstyle \sum_{u,v \in S} \big(d(u,v) - (d(u,v) - \bdist[1](u,v))\big) \cdot \omega(u,v) \\
	&\geq \textstyle \sum_{u,v \in S} (d(u,v) - m) \cdot \omega(u,v)  \tag{def.\ $m$} \\
	&= \textstyle \big( \sum_{u,v \in S} d(u,v) \cdot \omega(u,v) \big) - m \tag{$\omega \in \Distr{S \times S}$} \\
	&\geq \K{d}(\mu,\nu^*) - m \tag{def.\ $\K{}$ and $\omega \in \Omega(\mu, \nu^*)$}
\end{align*}
Thus, we have
\begin{align*}
	d(s,t) &\leq \K{d}(\mu, \nu^*) \tag{$d(s,t) = \min_{\nu \in \delta(t)} \K{d}(\mu, \nu)$} \\
	& \leq \K{\bdist[1]}(\mu, \nu^*) + m \tag{$\K{\bdist[1]}(\mu,\nu^*) \geq \K{d}(\mu,\nu^*) - m$}\\
	& \leq \bdist[1](s,t) + m \tag{$\bdist[1](s,t) \geq \min_{\nu \in \delta(t)} \K{\bdist[1]}(\mu, \nu)$ and def.\ $\nu^*$} \\
	& = \bdist[1](s,t) + (d(s,t) - \bdist[1](s,t)) \tag{def.\ $m$ and $(s,t) \in M$} \\
	& = d(s,t)
\end{align*}
Therefore, all the above inequalities are, in fact, equalities. Hence,
\begin{align}
d(s,t) = \K{d}(\mu, \nu^*) &&\text{and}&& \bdist[1](s,t) = \K{\bdist[1]}(\mu, \nu^*) \,. \label{eq:nustar}
\end{align}

We conclude by proving that $\omega$ satisfies the following
\begin{align*}
	d(s,t) = \textstyle \sum_{u,v \in S} d(u,v) \cdot \omega(u,v) &&\text{and}&& \supp(\omega) \subseteq M \,.
\end{align*}

This can be observed as follows
\begin{align*}
& \textstyle \sum_{u,v \in S} d(u,v) \cdot \omega(u,v) \geq d(s,t)
	\tag{$\omega \in \Omega(\mu, \nu^*)$ and~\eqref{eq:nustar}} \\
& = \bdist[1](s,t) + m
	\tag{$(s,t) \in M$ and def.\ $m$} \\
& = \big( \textstyle \sum_{u,v \in S} \bdist[1](u,v) \cdot \omega(u,v) \big) + m
	\tag{by~\eqref{eq:nustar} and~\eqref{eq:omega}} \\
& = \textstyle \sum_{u,v \in S} (\bdist[1](u,v) + m) \cdot \omega(u,v)
	\tag{$\omega \in \Distr{S \times S}$} \\
& \geq \textstyle \sum_{u,v \in S} \big(\bdist[1](u,v) + (d(u,v) - \bdist[1](u,v)) \big) \cdot \omega(u,v)
	\tag{def.\ $m$} \\
& = \textstyle \sum_{u,v \in S} d(u,v) \cdot \omega(u,v)
\end{align*}
Hence, the above are in fact equalities and in particular $d(s,t) = \sum_{u,v \in S} d(u,v) \cdot \omega(u,v)$.

Consider now the following inequalities
\begin{align*}
	m &= d(s,t) - \bdist[1](s,t) \tag{$(s,t) \in M$} \\
	%&= \K{d}(\mu, \nu^*) - \K{\bdist[1]}(\mu, \nu^*) \tag{\eqref{eq:nustar}} \\
	&= d(s,t) - \textstyle \sum_{u,v \in S} \bdist[1](u,v) \cdot  \omega(u,v)
		\tag{by~\eqref{eq:nustar} and~\eqref{eq:omega}} \\
%	&= \textstyle  \sum_{u,v \in S} d(u,v) \cdot \omega(u,v) - \sum_{u,v \in S} \bdist[1](u,v) \cdot \omega(u,v)
%		\tag{def.\ $\K{}$} \\
	&= \textstyle  \sum_{u,v \in S} \big(d(u,v) - \bdist[1](u,v) \big) \cdot \omega(u,v)
	\tag{$d(s,t) = \sum_{u,v \in S} d(u,v) \cdot \omega(u,v)$} %\\
%	&\leq \textstyle  \sum_{u,v \in S} m \cdot \omega(u,v) \tag{def.\ $m$} \\
%	&= m \tag{$\omega \in \Distr{S \times S}$}
\end{align*}
Since $d(u,v) - \bdist[1](u,v) \leq m$ for all $u,v \in S$, the above equalities imply that whenever $\omega(u,v) > 0$ then $d(u,v) - \bdist[1](u,v) = m$. Therefore $\supp(\omega) \subseteq M$.
\item Can be argued symmetrically to the previous case.
\end{enumerate}
Therefore, $M$ is a nonempty self-closed relation with respect to $d$.
\end{proof}

\begin{algorithm}[t]
\DontPrintSemicolon
Initialise $\C = (f,\rho)$ as an arbitrary vertex coupling structure for $\mathcal{A}$\;
$\textsc{isMin} \gets \mathit{false}$\;
\While{$\neg${\normalfont\textsc{isMin}}}{ \label{spi1:line3}
	\While{$\exists (s,t).\, \Delta_1(\gamma_1^\C)(s,t) < \gamma_1^\C(s,t)$}{  \label{spi1:line4}
		$R \gets h(\gamma_1^\C)(s,t)$\;
		$\C \gets \big(k(\gamma_1^\C), \, \rho[(s,t) / R] \big)$ \label{spi1:line6} \tcc*[r]{update coupling structure}
	} \label{spi1:line7}
	Let $M \gets {\approx_{\gamma_1^\C}}$ \tcc*[r]{note that $\gamma_1^\C = \Delta_1(\gamma_1^\C)$} \label{spi1:line8}
	\eIf{$M = \emptyset$}{ \label{spi1:line9}
		 {\normalfont\textsc{isMin}} $\gets$ \textit{true} \tcc*[r]{$\gamma_1^\C = \bdist[1]$} \label{spi1:line10}
	}{
		Compute $d = (\gamma_1^\C)_M$ as in Lemma~\ref{lem:decreasefixpoint}\; \label{spi1:line12}
		Re-initialise $\C$ as a vertex coupling structure s.t.\ $\Gamma_1^\C(d) = \Delta_1(d)$\; \label{spi1:line13}
	}
} \label{spi1:line15}
\Return $\gamma_1^\C$
\caption{Simple policy iteration algorithm computing $\bdist[1]$.}
\label{alg:bdist1}
\end{algorithm}

Algorithm~\ref{alg:bdist1} extends the procedure described in Section~\ref{sec:discountedcase} by encapsulating
the policy iteration update (lines~\ref{spi1:line4}--\ref{spi1:line7}) into an outer-loop
(lines~\ref{spi1:line3}--\ref{spi1:line15})
that is responsible to check whether the fixed point $\gamma_1^{\C_i}$ returned is the minimal one.
According to Lemma~\ref{lem:prefixfpoint}, $\Delta_1(\gamma^{\C_i}_1) \sqsubseteq \gamma^{\C_i}_1$. Hence, when we reach line~\ref{spi1:line8}, we have that $\Delta_1(\gamma^{\C_i}_1) = \gamma^{\C_i}_1$. Therefore, by Lemmas~\ref{lem:decreasefixpoint} and~\ref{lem:termination}, $\gamma^{\C_i}_1 = \bdist[1]$ if and only if $M = {\approx_{\gamma^{\C_i}_1}}$ is empty.
%According to Lemmas~\ref{lem:decreasefixpoint} and~\ref{lem:termination}, $\gamma_1^{\C_i} = \bdist[1]$ if and only if  $\Delta_1(\gamma_1^{\C_i}) = \gamma_1^{\C_i}$ and the greatest self-closed relation $M = {\approx_{\gamma_1^{\C_i}}}$ is empty.
If $M$ is empty, we set the variable \textsc{isMin} to $\mathit{true}$ (line~\ref{spi1:line10}) causing the outer-loop to terminate. Otherwise, we construct $d = {(\gamma_1^{\C_i})}_M$ as in
Lemma~\ref{lem:decreasefixpoint} (line~\ref{spi1:line12})
and re-start the inner-loop from a vertex coupling structure
$\C_{i+1}$ such that $\Gamma_1^{\C_{i+1}}(d) = \Delta_1(d)$ (line~\ref{spi1:line13}) (\franck{\eg}, by using $\C_{i+1} = (k(d),\rho)$ where $\rho(s,t) = h(d)(s,t)$ for all $s,t \in S$).
As proven in Theorem~\ref{th:alg:bdist1}, $\gamma_1^{\C_{i}} \sqsupset \gamma_1^{\C_{i+1}}$. This guarantees a strict improvement of the discrepancy towards the minimal one.
Termination of Algorithm~\ref{alg:bdist1} is justified by similar arguments as for the discounted case.

\begin{thm}%
\label{th:alg:bdist1}
Algorithm~\ref{alg:bdist1} terminates and computes $\bdist[1]$.
\end{thm}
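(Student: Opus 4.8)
The plan is to split the statement into correctness and termination, reusing the discounted analysis of Theorem~\ref{th:alg:bdistlambda} wherever possible. For correctness I would first note that the outer loop (lines~\ref{spi1:line3}--\ref{spi1:line15}) can only exit through the branch that sets \textsc{isMin} to \emph{true}, which is reached exactly when $M = {\approx_{\gamma_1^\C}} = \emptyset$. Whenever control arrives at line~\ref{spi1:line8} the inner loop has just terminated, so $\Delta_1(\gamma_1^\C)(s,t) \geq \gamma_1^\C(s,t)$ for all $s,t$, i.e. $\gamma_1^\C \sqsubseteq \Delta_1(\gamma_1^\C)$; combined with Lemma~\ref{lem:prefixfpoint} ($\Delta_1(\gamma_1^\C) \sqsubseteq \gamma_1^\C$) this yields $\gamma_1^\C = \Delta_1(\gamma_1^\C)$. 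Hence at termination $\gamma_1^\C$ is a fixed point of $\Delta_1$ whose largest self-closed relation is empty, and Lemma~\ref{lem:termination} gives $\gamma_1^\C = \bdist[1]$ at once.

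For termination I would exhibit a single strictly $\sqsubset$-decreasing sequence of discrepancies spanning the whole run and then invoke finiteness. The inner loop behaves exactly as in Theorem~\ref{th:alg:bdistlambda}: Lemma~\ref{lem:update_improves} is stated for all $\lambda \in \lopen{0,1}$ and so applies at $\lambda = 1$, whence each update in lines~\ref{spi1:line4}--\ref{spi1:line7} replaces $\C$ by a vertex coupling structure $\mathcal{D}$ with $\gamma_1^{\mathcal{D}} \sqsubset \gamma_1^{\C}$. Thus within each inner loop the discrepancy strictly decreases; it remains to show that each re-initialization at line~\ref{spi1:line13} also strictly decreases the discrepancy relative to the fixed point reached at line~\ref{spi1:line8}.

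For that step let $g = \gamma_1^\C = \Delta_1(g)$ be the fixed point at line~\ref{spi1:line8} with $M = {\approx_g} \neq \emptyset$. Lemma~\ref{lem:decreasefixpoint} produces $d = g_M$ with $d \sqsubset g$ and $\Delta_1(d) \sqsubseteq d$. The re-initialized vertex coupling structure $\C'$ is chosen so that $\Gamma_1^{\C'}(d) = \Delta_1(d) \sqsubseteq d$, i.e. $d$ is a prefix point of $\Gamma_1^{\C'}$; since $\gamma_1^{\C'}$ is the least prefix point of $\Gamma_1^{\C'}$ by Knaster-Tarski, I obtain $\gamma_1^{\C'} \sqsubseteq d \sqsubset g$. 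So the discrepancy immediately after re-initialization is strictly below the one just before it. Concatenating the inner-loop decreases with these re-initialization decreases yields a strictly $\sqsubset$-decreasing sequence of discrepancies of vertex coupling structures; as there are finitely many such structures, and hence finitely many discrepancies (the argument already given in Theorem~\ref{th:alg:bdistlambda}), the sequence is finite and the algorithm terminates.

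The hard part is the outer loop. In the discounted case termination leaned on uniqueness of the fixed point of $\Delta_\lambda$ (Banach), whereas here $\Delta_1$ may have several fixed points, so merely reaching a fixed point no longer certifies minimality. The delicate point is therefore to show that the self-closed-relation decrease makes genuine global progress: I must verify that $\gamma_1^{\C'}$ drops strictly below the previous fixed point $g$ (the prefix-point/Knaster-Tarski argument above), and that the inner- and outer-loop updates draw their discrepancies from one and the same finite set, so that a single well-founded measure governs the entire computation.
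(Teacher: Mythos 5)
Your proposal is correct and follows essentially the same route as the paper's proof: the inner-loop decrease via Lemma~\ref{lem:update_improves}, the re-initialization decrease via Lemma~\ref{lem:decreasefixpoint} together with the Knaster-Tarski least-prefix-point argument for $\gamma_1^{\C'} \sqsubseteq d \sqsubset g$, finiteness of the set of vertex-coupling-structure discrepancies for termination, and Lemmas~\ref{lem:prefixfpoint} and~\ref{lem:termination} for correctness at exit. The only difference is presentational (you argue termination directly from a strictly decreasing sequence over a finite set, whereas the paper phrases it as a contradiction with a repeated discrepancy), which is the same argument.
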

\begin{proof}%[proof of Theorem~\ref{th:alg:bdist1}]
\change{First we prove termination. Recall that $\{\gamma_1^\C \mid \text{$\C$ vertex coupling structure for $\mathcal{A}$}\}$ is finite. }
Towards a contradiction, assume that Algorithm~\ref{alg:bdist1} does not terminate. Let $\C_0, \C_1, \C_2, \dots$ be the infinite sequence of coupling structures updates generated during a non-terminating execution of Algorithm~\ref{alg:bdist1}. Since the set above is finite, there must be $i < j$ such that $\gamma_1^{\C_i} = \gamma_1^{\C_j}$.
On the contrary, next we prove that the updates of the coupling structures in Algorithm~\ref{alg:bdist1} ensure that for all $n \in \mathbb{N}$, $\gamma_1^{\C_n} \sqsupset \gamma_1^{\C_{n+1}}$. Let $n \in \mathbb{N}$. We consider two cases:
\begin{itemize}
\item Assume $\Delta_1(\gamma_1^{\C_n})(s,t) < \gamma_1^{\C_n}(s,t)$, for some $s,t \in S$. In this case $\C_{n+1}$ is obtained from the update performed in line~\ref{spi1:line6}, which is exactly the one prescribed by Lemma~\ref{lem:update_improves}. Therefore we have that $\gamma_\lambda^{\C_n} \sqsupset \gamma_\lambda^{\C_{n+1}}$.

\item Assume $\Delta_1(\gamma_1^{\C_n})(s,t) \geq \gamma_1^{\C_n}(s,t)$ for all $s,t \in S$, i.e., $\gamma_1^{\C_n} \sqsubseteq \Delta_1(\gamma_1^{\C_n})$. By Lemma~\ref{lem:prefixfpoint} $\gamma_1^{\C_n} \sqsupseteq \Delta_1(\gamma_1^{\C_n})$, therefore $\gamma_1^{\C_n} = \Delta_1(\gamma_1^{\C_n})$. In this case $\C_{n+1}$ is constructed as a vertex coupling structure such that
$\Gamma_1^{\C_{n+1}}(d) = \Delta_1(d)$ where $M = {\approx_{\gamma_1^{\C_n}}} \neq \emptyset$ and $d = {(\gamma_1^{\C_n})}_M$ (see line~\ref{spi1:line13}). Then the following inequalities hold
\begin{align*}
	\gamma_1^{\C_n} &{} \sqsupset d \sqsupseteq \Delta_1(d) \tag{Lemma~\ref{lem:decreasefixpoint}} \\
	& = \Gamma_1^{\C_{n+1}}(d) \tag{by construction} \\
	& \sqsupseteq \gamma_1^{\C_{n+1}} \tag{by $\Gamma_1^{\C_{n+1}}(d) \sqsubseteq d$ and Knaster-Tarski fixed point theorem}
\end{align*}
\end{itemize}
This concludes the proof that $\gamma_1^{\C_n} \sqsupset \gamma_1^{\C_{n+1}}$ for all $n \in \mathbb{N}$.

When the execution of Algorithm~\ref{alg:bdist1} has reached line~\ref{spi1:line10} we have that $\gamma_1^\C = \Delta_1(\gamma_1^\C)$. Moreover, we have ${\approx_{\gamma_1^\C}} = \emptyset$. Therefore, by Lemma~\ref{lem:termination} we have that $\gamma_1^\C = \bdist[1]$. From here \textsc{isMin} is set to $\mathit{true}$. This prevents further executions of the body of the outer-loop. Therefore Algorithm~\ref{alg:bdist1} reached the return statement with $\gamma_1^\C = \bdist[1]$.
\end{proof}

\subsection{Experimental Results}
In this section, we evaluate the performance of the simple policy iteration algorithms on a
collection of randomly generated probabilistic automata. All the algorithms have been
implemented in Java and the source code is publicly available\footnote{
\url{https://bitbucket.org/discoveri/probabilistic-bisimilarity-distances-probabilistic-automata}.}.
%\textcolor{blue}{Since the distance function is symmetric and the distance from a state to itself is zero, we only need to compute the distance of $(n^2-n)/2$ state pairs for an automaton with $n$ states.}

The performance of Algorithm~\ref{alg:bdistlambda} has been compared with an implementation of the \emph{value iteration algorithm} proposed by Fu~\cite[Section 4]{Fu12}.
This algorithm works as follows. Starting from the bottom element, it iteratively applies $\Delta_\lambda$ to the current distance function generating the increasing chain $\mathbf{0} \sqsubseteq \Delta_\lambda(\mathbf{0}) \sqsubseteq  \Delta^2_\lambda(\mathbf{0}) \sqsubseteq \dots \sqsubseteq \Delta^{k-1}_\lambda(\mathbf{0}) \sqsubseteq \Delta^k_\lambda(\mathbf{0})$.

%The value iteration algorithm works as follows.
%Starting from the bottom element of $({[0,1]}^{S \times S}, \sqsubseteq)$, denoted by $\mathbf{0}$, which assigns distance zero to all state pairs, the procedure iteratively applies $\Delta_\lambda$ to the current distance value generating the increasing chain $\mathbf{0} \sqsupseteq \Delta_\lambda(\mathbf{0}) \sqsupseteq  \Delta^2_\lambda(\mathbf{0}) \sqsupseteq \dots \sqsupseteq \Delta^{k-1}_\lambda(\mathbf{0}) \sqsupseteq \Delta^k_\lambda(\mathbf{0})$. The iterations stop when $|\Delta^{k-1}_\lambda(\mathbf{0})(s,t) - \Delta^k_\lambda(\mathbf{0})(s,t)| \leq \epsilon (1 - \lambda) / \lambda$ for all $s,t \in S$.

%\textcolor{blue}{Fu~\cite{Fu12} proved that, for $\lambda < 1$, this procedure can be used to obtain a suitable under-approximation of $\bdist$ from which the latter can be retrieved exactly by using the continued fraction algorithm. Notably, Fu's procedure is in \textbf{P}, however it requires one to work with exact computer arithmetic, which makes his method inefficient in practice.}

For each input instance, the comparison involves the following steps:
\begin{enumerate}%[label={\arabic*.}, noitemsep, topsep=1ex]
  \item\label{item:step1} We run Algorithm~\ref{alg:bdistlambda}, storing execution time, the
  number of solved transportation problems, and the number of coupling structures generated during the execution (i.e., the number of times a $\lambda$-discrepancy has been computed);
  \item\label{item:step2} Then, on the same instance, we execute the value iteration algorithm until the running
  time exceeds that of step~\ref{item:step1}. We report the execution time, the number of solved transportation problems, and the number of iterations.
\item Finally, we report the error $\max_{s,t \in S} |\bdist(s,t) - d(s,t)|$ between the distance $\bdist$ computed in step~\ref{item:step1} and the approximate result $d$ obtained in step~\ref{item:step2}.
\end{enumerate}
This has been done for a collection of automata varying from $10$ to $50$ states.
For each $n = 10, \dots, 50$, we considered $100$ randomly generated probabilistic automata, varying probabilistic out-degree and nondeterministic out-degree.
Table~\ref{tab:policy2} reports the average results of the comparison.
Our algorithm is able to compute the solution before value iteration can under-approximate it with an error ranging from $0.004$ to $0.06$ which is a non negligible error considering that we fixed $\lambda = 0.8$ and the distance has values in $[0,1]$.

%\begin{table}[t]
%	\centering
%	\begin{tabular}{|c|c|c|c|c|c|c|c|}
%	\hline
%	\multirow{2}{*}{$n = |S|$}&	\multicolumn{3}{c|}{Simple Policy Iteration} & \multicolumn{3}{c|}{Value Iteration}&	\multirow{2}{*}{Error} \\
%	\cline{2-7}
%	&	time (ms)&	\# TP&	\# $\C$&	time (ms)&	\# TP&	\# Iter&	\\
%	\hline \hline
%	5&	43.467&	72.687&	4.25&	51.510&	151.125&	1.93&	0.04658 \\ \hline
%	6&	111.369&	151& 7.190&	128.032&	290.190&	2&			0.03722 \\ \hline
%	7&	227.410&	269.961&	11.192&	257.798&	501.923&	2.73&	0.02641 \\ \hline
%	8&	417.839&	419.161&	16.290&	470.079&	769.419&	3.22&	0.01963 \\ \hline
%	9&	741.557&	598.138&	22.333&	871.638&	1114.611&	3.80&	0.01668 \\
%	\hline
%	\end{tabular}
%	\caption{Comparison between Algorithm~\ref{alg:bdistlambda} and Value Iteration Algorithm. %
%	Average performance conducted on 100 randomly generated automata with nondeterministic out-degree
%	ranging in $0..5$, and probabilistic out-degree in the range $0..n$. Discount factor $\lambda = 0.8$; accuracy $0.000001$.} \label{tab:policy1}
%\end{table}
%
\begin{table}[t]
	\centering
	\begin{tabular}{cccccccc}
    \toprule
	\multirow{2}{*}{$n = |S|$}&	\multicolumn{3}{c}{Simple Policy Iteration} & \multicolumn{3}{c}{Value Iteration}&	\multirow{2}{*}{Error} \\
	\cmidrule(lr){2-4}
	\cmidrule(lr){5-7}
	&	time (sec)&	\# TP&	\# $\C$&	time (sec)&	\# TP&	\# Iter&	\\
	\midrule
10 & 0.122   & 360.9   & 25.0  & 0.138   & 607.1   & 5.8  & 0.03018 \\
11 & 0.167   & 457.2   & 30.5  & 0.189   & 779.5   & 6.0  & 0.03090 \\
12 & 0.238   & 565.9   & 37.4  & 0.265   & 976.0   & 6.3  & 0.02814 \\
13 & 0.309   & 679.3   & 44.2  & 0.351   & 1177.7  & 6.4  & 0.04198 \\
14 & 0.412   & 813.4   & 52.2  & 0.463   & 1443.0  & 6.7  & 0.03673 \\
15 & 0.569   & 963.3   & 61.1  & 0.634   & 1764.0  & 6.9  & 0.03790 \\
20 & 2.694   & 1881.4  & 113.4 & 2.874   & 3325.5  & 7.6  & 0.03781 \\
30 & 15.832  & 4629.1  & 263.6 & 16.642  & 10309.5 & 9.9  & 0.02615 \\
40 & 99.433  & 17812.6 & 710.8 & 104.036 & 30870.3 & 8.8  & 0.01162 \\
50 & 137.624 & 13985.4 & 753.0 & 144.098 & 35118.3 & 12.3 & 0.00975 \\
\bottomrule
	\end{tabular}
	\caption{Comparison between %Algorithm~\ref{alg:bdistlambda}
	Simple Policy and Value Iteration Algorithm. %
	Average performance conducted on 100 randomly generated automata with number of states
	%$n=5..50$,
	$n=10..50$,
	nondeterministic out-degree $k = 1..3$, and probabilistic out-degree $p = 2..3$. Discount $\lambda = 0.8$; accuracy $0.000001$.}\label{tab:policy2}
\end{table}

\begin{figure}[t]
\centering
\includegraphics[width=.95\textwidth]{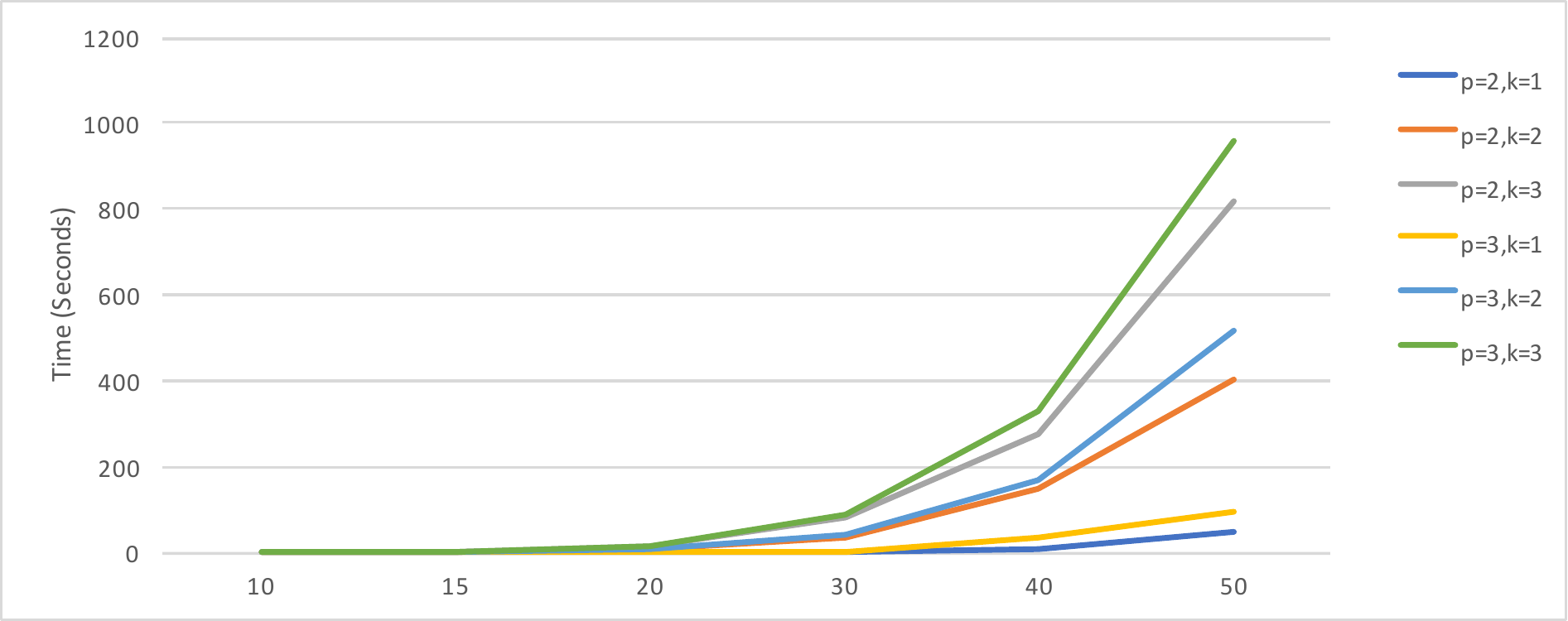}
\caption{Average performance for the Simple Policy Iteration Algorithm conducted on 100 randomly generated automata varying number of states $n=10..50$,  nondeterministic out-degree $k = 1..3$, and probabilistic out-degree $p = 2..3$. Discount factor $\lambda = 0.8$; accuracy $0.000001$.}%
\label{fig:graph}
\end{figure}
Furthermore in Figure~\ref{fig:graph} we observe that the execution time of the simple policy iteration algorithm
is particularly influenced by the degree of nondeterminism of the automaton. This may be explained by the fact that the current implementation uses a linear program for computing the $\lambda$-discrepancy (\cf\ Remark~\ref{rem:discrepancy}) which has $O(n^2  k^2)$ variables and $O(n^2  k^2)$ constraints where $n$ and $k$ are the number of states and the nondeterministic out-degree of the automaton, respectively.

Algorithm~\ref{alg:bdist1} extends the simple policy iteration algorithms proposed in~\cite{BacciLM:tacas13,TangB16} for labelled Markov chains. As pointed out in~\cite{T18}, implementations based on the decision procedure for the existential fragment of the first-order theory of the reals struggle to handle labelled Markov chains with a fifty states. For probabilistic automata, the algorithms in~\cite{ChatterjeeAMR08,ChatterjeeAMR10} suffer from the same problem.
%
%%{\color{blue}On a number of experiments, we rarely experienced to executing the outer loop of Algorithm~\ref{alg:bdist1}, and the performances are comparable to those of Algorithm~\ref{alg:bdistlambda}.}
%
The performance of Algorithm~\ref{alg:bdist1} is comparable to that of Algorithm~\ref{alg:bdistlambda} (\cf\ Table~\ref{tab:policy1}). Despite the fact that the simple policy algorithm is not guaranteed to be sound when the discount factor equals one, our experiments show that in practice a single iteration of the outer-loop of Algorithm~\ref{alg:bdist1} is often sufficient to yield the correct solution, \change{although it is still not clear to
us how to precisely characterise the conditions under which this situation happens.}

\begin{table}[t]
	\centering
	\begin{tabular}{ccccccccc}
		\toprule
		\multirow{2}{*}{$n = |S|$}&	\multicolumn{4}{c}{Simple Policy Iteration} & \multicolumn{3}{c}{Value Iteration}&	\multirow{2}{*}{Error} \\
        \cmidrule(lr){2-5}
        \cmidrule(lr){6-8}
		&	time (sec)&	\# TP&	\# $\C$&	\# Iter	&time (sec)&	\# TP&	\# Iter&	\\
		\midrule
10 & 0.129   & 394.8   & 25.9  & 1 & 0.144   & 647.4   & 6.2  & 0.07726 \\
11 & 0.179   & 513.3   & 32.1  & 1 & 0.202   & 837.1   & 6.4  & 0.08236 \\
12 & 0.263   & 655.1   & 39.1  & 1 & 0.293   & 1080.1  & 6.8  & 0.08988 \\
13 & 0.352   & 815.2   & 46.6  & 1 & 0.394   & 1310.5  & 7.0  & 0.10222 \\
14 & 0.465   & 966.6   & 53.9  & 1 & 0.514   & 1615.5  & 7.3  & 0.11156 \\
15 & 0.703   & 1159.5  & 62.5  & 1 & 0.786   & 2042.7  & 7.7  & 0.12045 \\
20 & 3.044   & 2291.7  & 111.8 & 1 & 3.316   & 3845.2  & 8.7  & 0.13823 \\
30 & 15.905  & 5088.6  & 223.1 & 1 & 16.929  & 9956.3  & 10.2 & 0.16030 \\
40 & 44.354  & 8597.3  & 364.0 & 1 & 47.580  & 16527.8 & 10.5 & 0.18830 \\
50 & 113.586 & 13484.7 & 545.0 & 1 & 121.639 & 28694.5 & 11.7 & 0.20939\\
	\bottomrule
	\end{tabular}
	\caption{Average performance of Algorithm~\ref{alg:bdist1} conducted on 100 randomly generated automata with number of states
	$n=10..50$, nondeterministic out-degree $k = 1..3$, and probabilistic out-degree $p = 2..3$. Discount $\lambda = 1$; accuracy $0.000001$.}\label{tab:policy1}
\end{table}

\change{
\begin{rem}[Worst-case Time Complexity]
Tang and van Breugel~\cite{TangB16,T18} showed that in the worst case, the simple policy iteration algorithm takes exponential time even with automata that are purely probabilistic (i.e., labelled Markov chains). The addition of nondeterminism also contributes in the exponential growth of the running time (\cf~Figure~\ref{fig:graph}).
\end{rem}
}

%%%%%%%%%%%%%%%%%%%%%%%%%%%%%%%%%%%%%%%%%%
%%%%%%%%%%%%%%%    LTL      %%%%%%%%%%%%%%%%%%%%%%
%%%%%%%%%%%%%%%%%%%%%%%%%%%%%%%%%%%%%%%%%%

\section{Relation with Probabilistic Model Checking}%
\label{sec:linearProperties}

In this section we show how the probabilistic bisimilarity distance of Deng et al.\ relates to the problem of model checking $\omega$-regular specifications against probabilistic automata, where the nondeterministic choices are resolved by randomized schedulers.

Probabilistic automata are used for the verification of concurrent probabilistic systems, where the choice of how to interleave the executions of the parallel components is modelled by means of nondeterminism in the choice of the next transition to be taken.
Technically, an execution of a probabilistic automaton $\mathcal{A} = (S, L, \to, \ell)$ is an infinite sequence $s_0 s_1 \ldots \in S^\omega$ of labelled states obtained by taking a succession of probabilistic transitions $s_i \to \mu_{i}$ such that $\mu_i(s_{i+1}) > 0$, for each $i \in \mathbb{N}$.
The choice of the transition to be taken at each state is resolved by means of a scheduler. In this paper we consider \emph{randomized schedulers}, i.e., functions $\pi \colon S^+ \to \Distr{S}$ mapping a nonempty and finite sequence of states $s_0 \dots s_n \in S^+$ (the execution history) to a convex combination of distributions of the form $\sum_{s_n \to \mu} \alpha_{\mu} \cdot \mu \in \Distr{S}$, for some $\alpha_\mu \in [0,1]$ such that $\sum_{s_n \to \mu} \alpha_{\mu} =  1$. Roughly, a randomized scheduler decides the probability with which the next transition is chosen given the history of visited states.

The combination of a probabilistic automata $\mathcal{A}$ with a scheduler $\pi$ induces a Markov chain
on a random variable $X = (X_0, X_1, \dots) \in S^\omega$ on the measurable space of infinite sequences
over $S$ with distribution
\begin{equation*}
 \Pr[\pi]{s}{X_0= s_0, \dots, X_n = s_n} = \dirac_s(s_0) \cdot
 \prod_{i = 0}^{n-1} \pi(s_0 \dots s_{i})(s_{i+1}) \,,
\end{equation*}
where $\dirac_x$ denotes the Dirac distribution concentrated at $x$. The above describes the probability of executing the sequence of steps $s_0 \dots s_n$ by starting from the state $s$
under the randomized scheduler $\pi$.

\change{The measurable sets of $S^\omega$ are the elements of the infinite product
$\sigma$-algebra ${(2^S)}^\omega$, i.e., the smallest $\sigma$-algebra containing the subsets of the form $s_0 \dots s_n S^\omega = \{s_0 \dots s_n w \mid w \in S^\omega \}$ (a.k.a., \emph{discrete cylinders}), for arbitrary $n \in \mathbb{N}$, $s_i \in S$ and $0 \leq i \leq n$. Measurable sets are the subsets of sequences where the probability measure $\Pr[\pi]{s}{}$ is well-defined.
For a measurable set $H \subseteq S^\omega$, we denote by $\Pr[\pi]{s}{X \in H}$ the probability that an execution starting from $s$ under the scheduler $\pi$ belongs to $H$.

\bigskip

Rather than measuring the probability of concrete executions over $S^\omega$, one is often more interested in the probability that certain execution traces satisfy abstract properties
over the measurable space $L^\omega$ of infinite sequences of labels, representing the sequences of atomic properties satisfied by concrete executions of the automaton.

Formally, for a measurable set $E \subseteq L^\omega$, we denote by $\Pr[\pi]{s}{\ell(X) \in E}$ the probability that an execution generates a sequence of labels in $E$, where $\ell(X) = (\ell(X_0), \ell(X_1), \dots) \in L^\omega$ is the random variable induced from $X$ by the labelling function $\ell$.
The $\sigma$-algebra of $L^\omega$ contains all the $\omega$-regular languages expressible over the alphabet $L$~\cite[Chapter~10]{BaierK08}. This means that the probability of the runs of $\mathcal{A}$ of satisfying $\omega$-regular properties, possibly expressed in the form of LTL formulas, can be formally measured by $\Pr[\pi]{s}{}$, hence allowing the quantitative analysis of probabilistic automata.}

The quantitative analysis of a probabilistic automaton $\mathcal{A}$ against $\omega$-regular specifications, more commonly known as \emph{probabilistic model-checking}, amounts to establishing the maximal and minimal probability of satisfying $\omega$-regular properties
$E \subseteq L^\omega$ over infinite sequences of labels from a starting state $s$.
Formally, this corresponds to computing
\begin{equation*}
\textstyle
  \mathrm{Max}_s(E) = \sup_{\pi \in \Pi} \Pr[\pi]{s}{\ell(X) \in E}
  \qquad \text{and} \qquad
  \mathrm{Min}_s(E) = \inf_{\pi \in \Pi} \Pr[\pi]{s}{\ell(X) \in E}
\end{equation*}
where the infimum and supremum are taken over the set $\Pi$ of all randomized schedulers. Note that, considering minimal or maximal probabilities corresponds to a worst/best-case analysis (see~\cite[Chapter~10]{BaierK08} for more details).

The following is the main result of the section. It states that the probabilistic bisimilarity distance
bounds the difference between maximal and minimal probability of satisfying any measurable linear-time property (\franck{\eg}, $\omega$-regular specifications) on two given initial states.
\begin{thm}%
\label{th:minmaxbound}
For all measurable subsets $E \subseteq L^\omega$,
\begin{equation*}
| \mathrm{Max}_s(E) - \mathrm{Max}_t(E) | \leq \bdist[1](s,t)
\qquad \text{and} \qquad
| \mathrm{Min}_s(E) - \mathrm{Min}_t(E) |  \leq \bdist[1](s,t) \,.
\end{equation*}
\end{thm}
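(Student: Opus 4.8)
The plan is a coupling argument resting on the coupling-structure characterisation of the undiscounted distance. By Lemma~\ref{lem:optimalcoupling} I first fix a vertex coupling structure $\C = (f,\rho)$ with $\gamma^{\C}_1 = \bdist[1]$; by Remark~\ref{rem:discrepancy} the value $\bdist[1](s,t)$ is then the \emph{maximal} probability, taken over all schedulers of the induced automaton $\mathcal{A}_\C$, of reaching from $(s,t)$ a mismatch pair $(u,v)$ with $\ell(u) \neq \ell(v)$. The core lemma I would establish is: for every randomized scheduler $\pi$ on $\mathcal{A}$ there is a randomized scheduler $\pi'$ on $\mathcal{A}$ such that, \emph{for every} measurable $E \subseteq L^\omega$, $\left| \Pr[\pi']{s}{\ell(X) \in E} - \Pr[\pi]{t}{\ell(X) \in E} \right| \leq \bdist[1](s,t)$. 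Granting this, both inequalities follow: choosing $\pi$ to be $\epsilon$-optimal for $\mathrm{Max}_t(E)$ gives $\mathrm{Max}_t(E) - \mathrm{Max}_s(E) \leq \bdist[1](s,t) + \epsilon$, the symmetric construction (swapping $s$ and $t$, using $\bdist[1](s,t) = \bdist[1](t,s)$) gives the reverse inequality, and letting $\epsilon \to 0$ settles the $\mathrm{Max}$ claim; the $\mathrm{Min}$ claim is identical with $\epsilon$-minimizing schedulers.

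To build $\pi'$ I would construct a coupled process $Y = (U_n, V_n)_{n \in \mathbb{N}}$ on $(S \times S)^\omega$ with $Y_0 = (s,t)$, running two copies of $\mathcal{A}$ synchronised through $\C$; write $\mathbb{P}$ for the law of $Y$. As long as $\ell(U_n) = \ell(V_n)$, I drive the joint step by the coupling structure: if $\pi$ prescribes on the right copy the combination $\sum_{V_n \to \nu} \alpha_\nu \cdot \nu$, I pick for each $\nu$ a pair $(\mu_\nu, \nu) \in \rho(U_n, V_n)$ --- possible because the right projection of $\rho(U_n,V_n)$ is $\delta(V_n)$ --- and let the next pair be drawn from $\sum_\nu \alpha_\nu \cdot f(\mu_\nu, \nu)$. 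This is a legal randomized move of $\mathcal{A}_\C$, its right marginal is exactly $\sum_\nu \alpha_\nu \cdot \nu$, and its left marginal $\sum_\nu \alpha_\nu \cdot \mu_\nu$ is a legal randomized transition of $U_n$. Once $\ell(U_n) \neq \ell(V_n)$ I let the coordinates run independently, the right one still following $\pi$ and the left one following any fixed scheduler. I then define $\pi'$ as the conditional law of $U_{n+1}$ given only the left history $U_0 \cdots U_n$; since the set of randomized transitions at a state is convex, this is a genuine scheduler, its induced measure reproduces the left marginal of $Y$, and by construction the right marginal of $Y$ is $\Pr[\pi]{t}{}$. Note that $\pi'$ depends on $\pi$ but not on $E$, so one coupling handles all measurable $E$ at once.

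Finally I would bound the discrepancy by a stopping-time argument. Let $\tau = \inf\{ n \mid \ell(U_n) \neq \ell(V_n) \}$. By construction $\ell(U_i) = \ell(V_i)$ for all $i < \tau$, so on $\{\tau = \infty\}$ the label sequences $\ell(U)$ and $\ell(V)$ coincide and hence $\ell(U) \in E$ iff $\ell(V) \in E$. Thus $\{\ell(U) \in E\} \mathbin{\triangle} \{\ell(V) \in E\} \subseteq \{\tau < \infty\}$, which gives $\left| \mathbb{P}(\ell(U) \in E) - \mathbb{P}(\ell(V) \in E) \right| \leq \mathbb{P}(\tau < \infty)$. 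The event $\{\tau < \infty\}$ is precisely the event of reaching a mismatch pair in $\mathcal{A}_\C$ under the joint scheduler just built, so its probability is at most the maximal such reachability, namely $\gamma^{\C}_1(s,t) = \bdist[1](s,t)$; since the marginals of $Y$ are $\Pr[\pi']{s}{}$ and $\Pr[\pi]{t}{}$, the core lemma follows.

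The main obstacle is the measure-theoretic construction of $Y$: producing a genuine probability measure on $(S \times S)^\omega$ whose right marginal is exactly the history-dependent scheduler $\pi$ while each joint step remains a legal randomized move of $\mathcal{A}_\C$ (so that Remark~\ref{rem:discrepancy} applies), and then verifying that the left marginal is realised by a bona fide scheduler $\pi'$ reading only the left history. The delicate points are matching the right marginal to $\pi$ at every step, the realizability of the left marginal through conditional laws and convexity, and the measurability of $\tau$ together with the symmetric-difference containment; the reachability bound itself and the passage to $\mathrm{Max}$ and $\mathrm{Min}$ are then routine.
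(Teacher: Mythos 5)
Your proposal is correct, and its engine is the same as the paper's: couple two copies of $\mathcal{A}$ through a coupling structure $\C$, observe that the two label sequences can only differ if the coupled run reaches a mismatch pair, and bound that probability by $\gamma^{\C}_1(s,t)$ (Remark~\ref{rem:discrepancy}), which equals $\bdist[1](s,t)$. The packaging, however, is genuinely different from the paper's. The paper (Lemma~\ref{lem:variationaldiscrepancy}) works at the level of scheduler \emph{spaces}: from every scheduler $\pi$ of the product automaton $\mathcal{A}_\C$ it extracts left/right projections $(\pi_S,\pi_T)$, proves that the collection of all such pairs is a full set-coupling $R_\C \in \mathcal{R}(\Pi,\Pi)$ (which also requires the converse lifting: every pair of schedulers of $\mathcal{A}$ arises as a projection pair), and then concludes the theorem via M\'emoli's machinery --- Theorem~\ref{th:hausdorff} applied to $\mathcal{R}(\Pi,\Pi)$ together with \cite[Lemma~3.2]{Memoli11} relating the Hausdorff distance on the sets of achievable probabilities to the differences of their suprema and infima --- plus the infimum characterisation of Theorem~\ref{th:mincoupling}. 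You go in the opposite direction and need only one side: starting from a single scheduler $\pi$ on one component, you lift it to a joint process whose right marginal reproduces $\pi$ exactly and whose left marginal is realised by some scheduler $\pi'$ (conditional laws plus convexity of the set of randomized moves), and you finish with $\epsilon$-optimal schedulers and symmetry of the pseudometric, bypassing M\'emoli's duality on schedulers entirely (you also fix one optimal $\C$ via Lemma~\ref{lem:optimalcoupling} instead of taking the infimum over all $\C$). Your route is more elementary and self-contained in the final step; the paper's buys a stronger intermediate statement --- a genuine set-coupling of the scheduler spaces --- which it reuses for Theorem~\ref{th:tvbound}, and keeps the Hausdorff-duality theme uniform throughout.

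One wrinkle, easily repaired: after the mismatch time $\tau$ you let the two coordinates evolve \emph{independently}, so from $\tau$ onwards the joint step is a product of two marginal moves, which in general is \emph{not} a convex combination of the transitions $f(\mu,\nu)$ with $(\mu,\nu) \in \rho(U_n,V_n)$; hence the joint process is not literally an $\mathcal{A}_\C$-run under a scheduler, and your sentence ``$\{\tau<\infty\}$ is precisely the event of reaching a mismatch pair under the joint scheduler just built'' needs an extra word before you may invoke the bound $\mathbb{P}(\tau<\infty) \leq \gamma^{\C}_1(s,t)$. Two fixes: (i) note that each $\mathbb{P}(\tau \leq n)$ depends only on the pre-mismatch transition kernels, which agree with those of a legal $\mathcal{A}_\C$-scheduler, so the reachability bound transfers; or, cleaner, (ii) drop the case split and keep using your coupling rule after $\tau$ as well --- $\rho(u,v)$ and $f$ are defined for \emph{all} pairs of states, and the rule always makes the right marginal of the joint step exactly what $\pi$ prescribes --- so the joint process is an $\mathcal{A}_\C$-run throughout and the bound applies verbatim. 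With either repair the argument is complete.
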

The above can be seen as a quantitative generalization of the folklore result that bisimilar states satisfy the
same linear-time properties with the same probability.

\begin{rem}%
\label{rem:approxModel-checking}
The relevance of Theorem~\ref{th:minmaxbound} is not just theoretical, but could possibly lead to significant practical applications. Imagine that the distance $\bdist[1](s,t)$ between some given states $s$ and $t$ is small (and known); then, computing $\mathrm{Max}_s(E)$ (resp.\ $\mathrm{Min}_s(E)$) in the state $s$ may be enough for obtaining a good approximation for the actual value of $\mathrm{Max}_t(E)$ (resp.\ $\mathrm{Min}_t(E)$) without the need of computing it on the state $t$. This approach may lead to savings in the overall model-checking time of $t$, especially if the executions starting from $s$ have a significant reduced degree of nondeterminism than whose starting from $t$.
\end{rem}

The proof of Theorem~\ref{th:minmaxbound} is based on the coupling characterisation of the bisimilarity distance
presented in Theorem~\ref{th:mincoupling} and the following technical lemma (Lemma~\ref{lem:variationaldiscrepancy}) which establishes under which conditions the discrepancy associated with a coupling structure can be used to bound the variational distance between \change{the probability induced by a probabilistic automaton $\mathcal{A}$ under two different schedulers.
Specifically, we establish how, from a coupling structure $\C$, one can retrieve
a set-coupling $R_\C \in  \mathcal{R}(\Pi, \Pi)$ of schedulers for $\mathcal{A}$ such that for each pair of schedulers $(\pi, \pi')$ for $\mathcal{A}$ in $R_\C$, the difference $|\Pr[\pi]{s}{\ell(X) \in E} - \Pr[\pi']{t}{\ell(X) \in E}|$ for any measurable $E \subseteq L^\omega$, is bounded by the discrepancy
$\gamma^{\C}_1(s,t)$.

The definition of $R_\C$ can be intuitively understood by recalling that $\gamma^{\C}_1(s,t)$ corresponds to the maximal probability of reaching a pair of states with different labels from the state pair $(s,t)$ in the automaton $\mathcal{A}_\C$ induced from $\C$. Roughly, $R_\C$ collects all the pairs of schedulers for $\mathcal{A}$ obtained as the
left and right projection of a scheduler for $\mathcal{A}_\C$. How the projections are defined is technical and the interested reader can found the formal definition in the proof. }

\begin{lem}%
\label{lem:variationaldiscrepancy}
\change{For any coupling structure $\C$ for $\mathcal{A}$ and $s,t \in S$, exists
$R_\C \in \mathcal{R}(\Pi, \Pi)$ such that, for all measurable $E \subseteq L^\omega$ and $(\pi, \pi') \in R_\C$,}
$ |\Pr[\pi]{s}{\ell(X) \in E} - \Pr[\pi']{t}{\ell(X) \in E}| \leq \gamma^{\C}_1(s,t)$.
\end{lem}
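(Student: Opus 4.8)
The plan is to read the induced automaton $\mathcal{A}_\C$ as an explicit coupling of two copies of $\mathcal{A}$: a single randomized scheduler for $\mathcal{A}_\C$ drives both components at once, its two coordinate marginals are schedulers for $\mathcal{A}$, and the coupled executions agree label-by-label for as long as they stay within the state pairs $(u,v)$ with $\ell(u)=\ell(v)$. Since by Remark~\ref{rem:discrepancy} the value $\gamma^{\C}_1(s,t)$ is precisely the maximal probability of reaching a pair $(u,v)$ with $\ell(u)\neq\ell(v)$ in $\mathcal{A}_\C$ from $(s,t)$, the probability that the two label sequences ever diverge is at most $\gamma^{\C}_1(s,t)$, which is exactly the claimed bound.

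First I would recall the correspondence between randomized schedulers and the measures they induce: a probability measure on $S^\omega$ is of the form $\Pr[\pi]{s}{}$ for some $\pi \in \Pi$ precisely when, for every finite history, the conditional next-state distribution is a convex combination of the available transition distributions. A scheduler $\Pi_\C$ for $\mathcal{A}_\C$ induces a measure $\Pr[\Pi_\C]{(s,t)}{}$ on ${(S \times S)}^\omega \cong S^\omega \times S^\omega$; writing $\mathrm{proj}_L,\mathrm{proj}_R$ for the coordinate projections, the pushforwards $(\mathrm{proj}_L)_* \Pr[\Pi_\C]{(s,t)}{}$ and $(\mathrm{proj}_R)_* \Pr[\Pi_\C]{(s,t)}{}$ are measures on $S^\omega$ starting from $s$ and $t$. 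Because every $\mathcal{A}_\C$-transition has the form $f(\mu,\nu) \in \Omega(\mu,\nu)$, whose left (resp.\ right) marginal is $\mu \in \delta(u)$ (resp.\ $\nu \in \delta(v)$), these pushforwards meet the convex-combination condition and hence equal $\Pr[\pi]{s}{}$ and $\Pr[\pi']{t}{}$ for some $\pi,\pi' \in \Pi$. I define $R_\C$ to be the set of all pairs $(\pi,\pi')$ arising in this way from some $\Pi_\C$.

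To verify $R_\C \in \mathcal{R}(\Pi,\Pi)$, given an arbitrary $\pi \in \Pi$ I would lift it to $\mathcal{A}_\C$ by fixing, for each $(u,v)$ and each $\mu \in \delta(u)$, a witness $\nu \in \delta(v)$ with $(\mu,\nu) \in \rho(u,v)$ (which exists since $\rho(u,v) \in \mathcal{R}(\delta(u),\delta(v))$ projects onto $\delta(u)$) and replacing each $\mu$ selected by $\pi$ with the corresponding $f(\mu,\nu)$. As $f(\mu,\nu)$ has left marginal $\mu$, a cylinder induction shows $(\mathrm{proj}_L)_*$ of the induced measure is exactly $\Pr[\pi]{s}{}$, so $\pi$ occurs as a left component of some pair in $R_\C$; the symmetric construction (fixing a witness $\mu$ for each $\nu$) shows every $\pi'$ occurs as a right component. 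Hence both projections of $R_\C$ are all of $\Pi$. For the bound, fix $(\pi,\pi') \in R_\C$ with witness $\Pi_\C$ and abbreviate $\mathbb{P} = \Pr[\Pi_\C]{(s,t)}{}$. Let $A$ be the event that the $\mathcal{A}_\C$-execution never visits the set $B = \{(u,v) \mid \ell(u) \neq \ell(v)\}$; on $A$ every visited pair $(X_i,Y_i)$ satisfies $\ell(X_i)=\ell(Y_i)$, so $\ell(X)=\ell(Y)$ in $L^\omega$ and therefore $\mathbf{1}[\ell(X) \in E] = \mathbf{1}[\ell(Y) \in E]$. Since $(\mathrm{proj}_L)_* \mathbb{P} = \Pr[\pi]{s}{}$ and $(\mathrm{proj}_R)_* \mathbb{P} = \Pr[\pi']{t}{}$, the two quantities $\Pr[\pi]{s}{\ell(X) \in E}$ and $\Pr[\pi']{t}{\ell(X) \in E}$ equal $\mathbb{P}(\ell(X) \in E)$ and $\mathbb{P}(\ell(Y) \in E)$; after cancelling the contributions on $A$, their difference has absolute value at most $\mathbb{P}(A^c)$, and as $A^c$ is the reachability event for $B$, Remark~\ref{rem:discrepancy} yields $\mathbb{P}(A^c) \leq \gamma^{\C}_1(s,t)$.

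The main obstacle I expect is the measure-theoretic bookkeeping of the scheduler--measure correspondence: one must check that the coordinate marginals of $\Pr[\Pi_\C]{(s,t)}{}$ are genuinely induced by randomized schedulers on $\mathcal{A}$ (so that $R_\C$ really is an element of $\mathcal{R}(\Pi,\Pi)$ with full projections), and that the identities relating the $\mathbb{P}$-probabilities of $\ell(X)$ and $\ell(Y)$ to $\Pr[\pi]{s}{}$ and $\Pr[\pi']{t}{}$ hold for every measurable $E \subseteq L^\omega$. The cleanest route is to establish these first on the generating discrete cylinders and then extend by a standard uniqueness-of-measure (monotone-class) argument.
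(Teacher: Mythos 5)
Your proposal is correct and follows essentially the same route as the paper's proof: read $\mathcal{A}_\C$ as a synchronized product of two copies of $\mathcal{A}$, let $R_\C$ consist of the pairs of $\mathcal{A}$-schedulers arising as left/right projections of a scheduler of $\mathcal{A}_\C$, check $R_\C \in \mathcal{R}(\Pi,\Pi)$, and bound the difference in probabilities by the probability that the coupled run ever hits a pair of states with distinct labels, which is at most $\gamma^{\C}_1(s,t)$ by eq.~\eqref{eq:maxreach} (Remark~\ref{rem:discrepancy}). Your pushforward/conditional-law description of the projections is exactly the computation the paper performs explicitly in eqs.~\eqref{eq:piS} and~\eqref{eq:piT} (the mixture over the conditional law of the other coordinate's history is what makes each marginal a convex combination over $\delta(\sigma_n)$), and your cancellation on the event $A$ is the paper's final chain of inequalities.

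Where you genuinely diverge is the verification that both projections of $R_\C$ equal $\Pi$. The paper argues something stronger: that an \emph{arbitrary pair} $(\pi_S,\pi_T)$ of schedulers is realized by a single scheduler of $\mathcal{A}_\C$, via the product weights $\xi^{\sigma,\tau}_{\mu,\nu} = \alpha^\sigma_\mu \cdot \beta^\tau_\nu$ supported on $\rho(\sigma_n,\tau_n)$. As stated this is too strong: the marginal conditions~\eqref{eq:setcoupcondition} can fail when $\rho(\sigma_n,\tau_n)$ is a proper subset of $\delta(\sigma_n)\times\delta(\tau_n)$ --- for instance, with $\rho = \{(\mu_1,\nu_1),(\mu_2,\nu_2)\}$, $\alpha = (\tfrac{1}{2},\tfrac{1}{2})$ and $\beta = (1,0)$, no weights supported on $\rho$ have these marginals, since one would need both $\xi_{\mu_2,\nu_2} = \tfrac{1}{2}$ and $\xi_{\mu_2,\nu_2} \leq 0$. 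Your one-sided lifting --- fix for each $\mu \in \delta(u)$ a witness $\nu$ with $(\mu,\nu)\in\rho(u,v)$ and replace $\mu$ by $f(\mu,\nu)$ --- proves exactly what membership in $\mathcal{R}(\Pi,\Pi)$ requires, namely surjectivity of each projection separately, and your cylinder induction goes through because the left marginal of $f(\mu,\nu)$ is $\mu$ regardless of the chosen witness. So your argument is not only correct but more robust than the paper's on this one step; the remaining measure-theoretic bookkeeping you flag (scheduler--measure correspondence on cylinders, extension by a monotone-class argument) is standard and matches what the paper does explicitly.
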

\begin{proof}%[proof of Lemma~\ref{lem:variationaldiscrepancy}]
Fix $s, t \in S$ and $\C  = (f, \rho)$ a coupling structure for $\mathcal{A}$.
Let $\mathcal{A}_\C$ be the automaton associated with the coupling structure $\C$.

We split the proof in two parts. (Part 1) deals with the definition of the set-coupling
$R_\C \in \mathcal{R}(\Pi, \Pi)$; (Part 2) with proving that
$ |\Pr[\pi]{s}{\ell(X) \in E} - \Pr[\pi']{t}{\ell(X) \in E}| \leq \gamma^{\C}_1(s,t)$, for all pairs of
randomized schedulers $(\pi, \pi') \in R_\C$.

Hereafter, for a nonempty finite sequence $\sigma \in S^{+}$ and a random variable $X$ on
$S^\omega$, we use $X \prec \sigma$ to denote $X \in \sigma S^\omega$.

\begin{description}
\item[Part 1]
\newcommand{\zip}[2]{\langle#1,#2\rangle}
Let $(X,Y) \in S^\omega \times S^\omega$ be the random variable describing
the infinite sequence of state pairs along a run of $\mathcal{A}_\C$. Then, for any two nonempty finite sequences of the same length $\sigma = \sigma_0 \dots \sigma_n$ and $\tau = \tau_0 \dots \tau_n$ over $S$,
\begin{equation*}
 \Pr[\pi]{(s,t)}{(X,Y) \prec \zip{\sigma}{\tau}} = \Pr[\pi]{(s,t)}{X \prec \sigma, Y \prec \tau}
\end{equation*}
is the probability that, starting from $(s,t)$, a run of $\mathcal{A}_\C$ under the scheduler $\pi$ has prefix
$\zip{\sigma}{\tau} = (\sigma_0,\tau_0) \dots (\sigma_{n},\tau_{n})$. The above can be alternatively
formulated in terms of conditional probabilities in the following two ways:
\begin{align}
\Pr[\pi]{(s,t)}{X \prec \sigma, Y \prec \tau}
&=
\Pr[\pi]{(s,t)}{X \prec \sigma} \cdot
\Pr[\pi]{(s,t)}{Y \prec \tau \mid X \prec \sigma } \,,
\label{eq:conditionalprob}
\\
\Pr[\pi]{(s,t)}{X \prec \sigma, Y \prec \tau}
&=
\Pr[\pi]{(s,t)}{Y \prec \tau} \cdot
\Pr[\pi]{(s,t)}{X \prec \sigma \mid Y \prec \tau} \,.
\end{align}

\medskip
Given a scheduler $\pi$ for $\mathcal{A}_\C$, we define
the maps $\pi_S, \pi_T \colon S^+ \to \Distr{S}$ as follows%
\footnote{We assume that $\Pr[\pi]{(s,t)}{ E \mid F } = 0$, whenever $\Pr[\pi]{(s,t)}{ F } = 0$ for any two measurable events $E, F$.}, for arbitrary nonempty sequences
$\sigma, \tau$ over $S$
\begin{align}
\pi_S(\sigma)(u)
&= \sum_{\tau \in S^{|\sigma|}}
\Big( \Pr[\pi]{(s,t)}{ Y \prec \tau \mid X \prec \sigma }
\sum_{v \in S} \pi(\zip{\sigma}{\tau})(u,v) \Big)  \,,
\label{eq:piS}
\\
\pi_T(\tau)(u)
&= \sum_{\sigma \in S^{|\tau|}}
\Big( \Pr[\pi]{(s,t)}{X \prec \sigma \mid Y \prec \tau }
\sum_{v \in S} \pi(\zip{\sigma}{\tau})(u,v) \Big) \,.
\label{eq:piT}
\end{align}
\change{We call $\pi_S$ and $\pi_T$, the left and right projections of $\pi$.}

Intuitively, $\pi_S(\sigma)(u)$ describes the probability that under the scheduler $\pi$
a run of $\mathcal{A}_\C$ with initial state $(s,t)$ has a prefix of the form $\zip{\sigma u}{\tau'}$, for some $\tau' \in S^{|\sigma|+1}$; symmetrically, $\pi_T(\tau)(u)$ is the probability that the prefix is of the form $\zip{\sigma'}{\tau u}$, for some $\sigma' \in S^{|\tau|+1}$.

Next, we prove that $\pi_S$ and $\pi_T$ are well-defined schedulers for $\mathcal{A}$.
 We provide the proof only for $\pi_S$, as it is similar for $\pi_T$.
We need to show that
$\pi_S(\sigma)$ is a convex combination of the form $\sum_{\mu \in \delta(\sigma_{n})} \alpha_{\mu} \cdot \mu$, for some $\alpha_\mu \in [0,1]$ such that $\sum_{\mu \in \delta(\sigma_{n})} \alpha_{\mu} =  1$.
By hypothesis that $\pi$ is a scheduler for $\mathcal{A}_\C$, we have that
$\pi(\zip{\sigma}{\tau}) = \sum_{(\mu,\nu) \in \rho(\sigma_{n},\tau_{n})} \xi_{\mu,\nu} \cdot f(\mu,\nu)$,
for some $\xi_{\mu,\nu} \in [0,1]$, such that $\sum_{(\mu,\nu) \in \rho(\sigma_n,\tau_n)} \xi_{\mu,\nu} =  1$. Without loss of generality, we can assume that $\xi_{\mu,\nu} = 0$, whenever
$(\mu,\nu) \notin \rho(\sigma_n,\tau_n)$.
Let $\kappa_{\sigma,\tau} = \Pr[\pi]{(s,t)}{ Y \prec \tau \mid X \prec \sigma }$, then
\begin{align*}
\pi_S(\sigma)(u)
&= \sum_{\tau \in S^{|\sigma|}}
\Big( \kappa_{\sigma,\tau} \sum_{v \in S} \pi(\zip{\sigma}{\tau})(u,v) \Big)
\tag{eq.~\eqref{eq:piS}} \\
&= \sum_{\tau \in S^{|\sigma|}}
\Big( \kappa_{\sigma,\tau} \sum_{v \in S}
\sum_{(\mu,\nu) \in \rho(\sigma_n,\tau_n)} \xi_{\mu,\nu} \cdot f(\mu,\nu)(u,v) \Big)
\tag{$\pi$ scheduler} \\
&= \sum_{\tau \in S^{|\sigma|}}
\Big( \kappa_{\sigma,\tau} \sum_{(\mu,\nu) \in \rho(\sigma_n,\tau_n)} \xi_{\mu,\nu} \sum_{v \in S} f(\mu,\nu)(u,v) \Big)  \\
&= \sum_{\tau \in S^{|\sigma|}}
\Big( \kappa_{\sigma,\tau} \sum_{(\mu,\nu) \in \rho(\sigma_n,\tau_n)} \xi_{\mu,\nu} \cdot \mu(u) \Big)
\tag{$f$ measure-coupling map} \\
&= \sum_{\tau \in S^{|\sigma|}}
\Big( \kappa_{\sigma,\tau} \sum_{\mu \in \delta(\sigma_n)} \sum_{\nu \in \delta(\tau_n)} \xi_{\mu,\nu} \cdot \mu(u) \Big)
\tag{$\rho$ set-coupling map} \\
&= \sum_{\mu \in \delta(\sigma_n)} \Big(
\sum_{\tau \in S^{|\sigma|}} \kappa_{\sigma,\tau} \sum_{\nu \in \delta(\tau_n)} \xi_{\mu,\nu}
\Big) \cdot \mu(u)
\notag
\end{align*}
By letting $\alpha_\mu = \sum_{\tau \in S^{|\sigma|}} \kappa_{\sigma,\tau} \sum_{\nu \in \delta(\tau_n)} \xi_{\mu,\nu}$, we get $\pi_S(\sigma) = \sum_{\mu \in \delta(\sigma_n)} \alpha_{\mu} \cdot \mu$ in the desired form.
Next we show that this is a convex combination, i.e., $\sum_{\mu \in \delta(\sigma_n)} \alpha_{\mu} =  1$.
\begin{align*}
\sum_{\mu \in \delta(\sigma_n)} \alpha_{\mu}
&= \sum_{\mu \in \delta(\sigma_n)} \sum_{\tau \in S^{|\sigma|}} \kappa_{\sigma,\tau} \sum_{\nu \in \delta(\tau_n)} \xi_{\mu,\nu}
\tag{def.\ $\alpha_\mu$} \\
&= \sum_{\tau \in S^{|\sigma|}} \kappa_{\sigma,\tau} \sum_{\mu \in \delta(\sigma_n)} \sum_{\nu \in \delta(\tau_n)} \xi_{\mu,\nu}
\notag  \\
&= \sum_{\tau \in S^{|\sigma|}} \kappa_{\sigma,\tau} \sum_{(\mu,\nu) \in \rho(\sigma_n,\tau_n)} \xi_{\mu,\nu}
\tag{$\rho$ set-coupling map}  \\
&= \sum_{\tau \in S^{|\sigma|}} \kappa_{\sigma,\tau}
\tag{$\sum_{(\mu,\nu) \in \rho(\sigma_n,\tau_n)} \xi_{\mu,\nu} =  1$}  \\
&= \sum_{\tau \in S^{|\sigma|}} \Pr[\pi]{(s,t)}{ Y \prec \tau \mid X \prec \sigma }
\tag{def.\ $\kappa_{\sigma,\tau}$}  \\
&= 1
\tag{probability}
\end{align*}

So, $\pi_S$ and $\pi_T$ are well-defined schedulers for $\mathcal{A}$.
Given the above, we define the relation $R_\C \subseteq \Pi \times \Pi$ on schedulers for
$\mathcal{A}$ by
\begin{equation*}
  R_\C = \{  (\pi_S, \pi_T) \mid  \text{$\pi$ scheduler on $\mathcal{A}_\C$}  \} \,.
\end{equation*}
\change{To better understand the definition of $R_\C$, recall that $\mathcal{A}_\C$ can be interpreted
as the automaton describing the concurrent execution of two copies of $\mathcal{A}$ synchronised according to $\C$. Then, $\pi_S$ and $\pi_T$ can be interpreted as the schedulers on obtained from $\pi$, by respectively taking the left and right projections of the executions of $\mathcal{A}_\C$ as computations on $\mathcal{A}$. The relation $R_\C$ is given as the collection of these pair of projections.

Next we prove that $R_\C$ is a set-coupling for $(\Pi,\Pi)$, that is}
\begin{align*}
\{ \pi_1 \mid \exists \pi_2 \in \Pi.\, (\pi_1,\pi_2) \in R_\C \} = \Pi
&&\text{and} &&
\{ \pi_2 \mid \exists \pi_1 \in \Pi.\, (\pi_1,\pi_2) \in R_\C \} = \Pi \,.
\end{align*}
By definition of $R_\C$, this is equivalent to prove that for an arbitrary
pair of schedulers $\pi_S, \pi_T$ for $\mathcal{A}$ we can find a scheduler $\pi$ for $\mathcal{A}_\C$ such
that~\eqref{eq:piS},~\eqref{eq:piT} hold (hence, $(\pi_S, \pi_T) \in R_\C$).

Let $\sigma = \sigma_0 \dots \sigma_n$ and $\tau = \tau_0 \dots \tau_n$ be a pair of nonempty finite sequences of the same length over $S$, and assume
$\pi_S(\sigma) = \sum_{\mu \in \delta(\sigma_n)} \alpha^\sigma_{\mu} \cdot \mu$ and
$\pi_T(\tau) = \sum_{\nu \in \delta(\tau_n)} \beta^\tau_{\nu} \cdot \nu$, for some
$\alpha^\sigma_{\mu},\beta^\tau_{\nu} \in [0,1]$ such that
$\sum_{\mu \in \delta(\sigma_n)} \alpha^\sigma_{\mu} = 1$ and $\sum_{\nu \in \delta(\tau_n)} \beta^\tau_{\nu} = 1$. We define
\begin{align*}
\pi(\zip{\sigma}{\tau}) = \sum_{(\mu,\nu) \in \rho(\sigma_n,\tau_n)} \xi^{\sigma,\tau}_{\mu,\nu}
\cdot f(\mu,\nu) \,
&&
\text{where } \,
\xi^{\sigma,\tau}_{\mu,\nu} =
\begin{cases}
\alpha^\sigma_{\mu} \cdot \beta^\tau_{\nu} & \text{if $(\mu,\nu) \in \rho(\sigma_n,\tau_n)$} \\
0 & \text{otherwise} \,.
\end{cases}
\end{align*}

By the fact that $\rho(\sigma_n,\tau_n)$ is a set-coupling in $\mathcal{R}(\delta(\sigma_n), \delta(\tau_n))$ and the definition of $\xi^{\sigma,\tau}_{\mu,\nu}$, it is easy to see that for all
$\mu \in \delta(\sigma_n)$ and $\nu \in \delta(\tau_n)$
\begin{equation}
\sum_{\nu \in \delta(\tau_n)} \xi^{\sigma,\tau}_{\mu,\nu} = \alpha^\sigma_{\mu}
\qquad \text{and} \qquad
\sum_{\mu \in \delta(\sigma_n)} \xi^{\sigma,\tau}_{\mu,\nu} = \beta^\tau_{\nu} \,.
\label{eq:setcoupcondition}
\end{equation}
Next we show that~\eqref{eq:piS} holds.
Let $\kappa_{\sigma,\tau} = \Pr[\pi]{(s,t)}{ Y \prec \tau \mid X \prec \sigma }$, then
\begin{align*}
&\pi_S(\sigma)(u)   \\
&= \sum_{\tau \in S^{|\sigma|}}
\kappa_{\sigma,\tau}  \, \pi_S(\sigma)(u)
\tag{convex combination} \\
&= \sum_{\tau \in S^{|\sigma|}}
\kappa_{\sigma,\tau}  \Big(  \sum_{\mu \in \delta(\sigma_n)} \alpha^\sigma_{\mu} \cdot \mu(u)  \Big)
\tag{eq.~\eqref{eq:piS}} \\
&= \sum_{\tau \in S^{|\sigma|}}
\kappa_{\sigma,\tau}  \Big(
\sum_{\mu \in \delta(\sigma_n)} \sum_{\nu \in \delta(\tau_n)} \xi^{\sigma,\tau}_{\mu,\nu} \cdot \mu(u)  \Big)
\tag{by~\eqref{eq:setcoupcondition}} \\
&= \sum_{\tau \in S^{|\sigma|}}
\kappa_{\sigma,\tau}  \Big(
\sum_{\mu \in \delta(\sigma_n)} \sum_{\nu \in \delta(\tau_n)} \xi^{\sigma,\tau}_{\mu,\nu}
 \sum_{v \in S} f(\mu,\nu)(u,v)  \Big)
\tag{$f(\mu,\nu) \in \Omega(\mu,\nu)$} \\
&= \sum_{\tau \in S^{|\sigma|}}
\kappa_{\sigma,\tau}  \Big(
\sum_{(\mu,\nu) \in \rho(\sigma_n,\tau_n)} \xi^{\sigma,\tau}_{\mu,\nu}
 \sum_{v \in S} f(\mu,\nu)(u,v)  \Big)
\tag{$\rho$ set-coupl.\ map} \\
&= \sum_{\tau \in S^{|\sigma|}}
\Big( \kappa_{\sigma,\tau}   \sum_{v \in S}
\sum_{(\mu,\nu) \in \rho(\sigma_n,\tau_n)} \xi^{\sigma,\tau}_{\mu,\nu} \cdot
 f(\mu,\nu)(u,v)  \Big)
\notag \\
&= \sum_{\tau \in S^{|\sigma|}}
\Big( \Pr[\pi]{(s,t)}{ Y \prec \tau \mid X \prec \sigma }
\sum_{v \in S} \pi(\zip{\sigma}{\tau})(u,v) \Big)  \,
\tag{def.\ $\kappa_{\sigma,\tau}$ and $\pi$}
\end{align*}
Equation~\eqref{eq:piT} is proven symmetrically.

\item[Part 2]
We prove $\Pr[\pi]{(s,t)}{} \in \Omega(\Pr[\pi_S]{s}{}, \Pr[\pi_T]{t}{})$ first.
Showing the marginal conditions corresponds to prove that, for all nonempty sequences
$\sigma = \sigma_0 \dots \sigma_n$ and $\tau = \tau_0 \dots \tau_n$ over $S$,
\begin{align*}
\Pr[\pi_S]{s}{X \prec \sigma} = \Pr[\pi]{(s,t)}{X \prec \sigma}
&& \text{and} &&
\Pr[\pi_T]{t}{Y \prec \tau} = \Pr[\pi]{(s,t)}{Y \prec \tau} \,.
\end{align*}
We prove only the equality on the left, as the other is similar.
We proceed by induction on $n \geq 0$.
\begin{itemize}
\item Base case, $n = 0$. Then $\sigma = \sigma_0 \in S$ and
\begin{align*}
\Pr[\pi_S]{s}{X \prec \sigma_0}
&= \dirac_s(\sigma_0)
\tag{def.\ $\Pr[\pi_S]{s}{}$} \\
&= \sum_{\tau_0 \in S} \dirac_s(\sigma_0) \cdot \dirac_t(\tau_0)
\tag{convex combination} \\
&= \sum_{\tau_0 \in S} \dirac_{(s,t)}(\sigma_0,\tau_0)
\tag{def.\ $\dirac_{(s,t)}$} \\
&=\sum_{\tau_0 \in S} \Pr[\pi]{(s,t)}{X \prec \sigma_0, Y \prec \tau_0}
\tag{def.\ $\Pr[\pi]{(s,t)}{}$} \\
&= \Pr[\pi]{(s,t)}{X \prec \sigma_0}
\tag{additivity} \,.
\end{align*}
\item Inductive step, $n \geq 0$. Let $\sigma = \sigma_0 \dots \sigma_n$ and $s' \in S$, then
%\begin{align*}
%\Pr[\pi_S]{s}{X \prec \sigma s'}
%&= \Pr[\pi_S]{s}{X \prec \sigma} \cdot \pi_S(\sigma)(s') \\
%&= \Pr[\pi_S]{s}{X \prec \sigma} \cdot
%\sum_{\tau \in S^{|\sigma|}} \Big(\Pr[\pi]{(\sigma_0,\tau_0)}{Y \prec \tau \mid X \prec \sigma } \cdot
%\sum_{t' \in S} \pi(\zip{\sigma}{\tau})(s',t') \Big) \\
%&= \sum_{\tau \in S^{|\sigma|}}\sum_{t' \in S} \Pr[\pi_S]{s}{X \prec \sigma} \cdot
%\Pr[\pi]{(\sigma_0,\tau_0)}{Y \prec \tau \mid X \prec \sigma } \cdot \pi(\zip{\sigma}{\tau})(s',t') \\
%&= \sum_{\tau \in S^{|\sigma|}}\sum_{t' \in S} \Pr[\pi]{(s,t)}{X \prec \sigma} \cdot
%\Pr[\pi]{(\sigma_0,\tau_0)}{Y \prec \tau \mid X \prec \sigma } \cdot \pi(\zip{\sigma}{\tau})(s',t') \\
%&= \sum_{\tau \in S^{|\sigma|}} \sum_{t' \in S}
%\Pr[\pi]{(s,t)}{X \prec \sigma, Y \prec \tau} \cdot \pi(\zip{\sigma}{\tau})(s',t') \\
%&= \sum_{\tau \in S^{|\sigma|}} \sum_{t' \in S}
%\Pr[\pi]{(s,t)}{X \prec \sigma s', Y \prec \tau t'} \\
%&=\Pr[\pi]{(s,t)}{X \prec \sigma s'} \,.
%\end{align*}

\begin{align*}
&\Pr[\pi_S]{s}{X \prec \sigma s'} \\
&= \Pr[\pi_S]{s}{X \prec \sigma} \cdot \pi_S(\sigma)(s')
\tag{def.\ $\Pr[\pi_S]{s}{}$} \\
&= \Pr[\pi]{(s,t)}{X \prec \sigma} \cdot \pi_S(\sigma)(s')
\tag{inductive hp} \\
&= \Pr[\pi]{(s,t)}{X \prec \sigma} \cdot
\sum_{\tau \in S^{|\sigma|}} \Big(\Pr[\pi]{(s,t)}{Y \prec \tau \mid X \prec \sigma } \cdot
\sum_{t' \in S} \pi(\zip{\sigma}{\tau})(s',t') \Big)
\tag{eq.~\eqref{eq:piS}} \\
&= \sum_{\tau \in S^{|\sigma|}}\sum_{t' \in S} \Pr[\pi]{(s,t)}{X \prec \sigma} \cdot
\Pr[\pi]{(s,t)}{Y \prec \tau \mid X \prec \sigma } \cdot \pi(\zip{\sigma}{\tau})(s',t')
\\
&= \sum_{\tau \in S^{|\sigma|}} \sum_{t' \in S}
\Pr[\pi]{(s,t)}{X \prec \sigma, Y \prec \tau} \cdot \pi(\zip{\sigma}{\tau})(s',t')
\tag{by~\eqref{eq:conditionalprob}} \\
&= \sum_{\tau \in S^{|\sigma|}} \sum_{t' \in S}
\Pr[\pi]{(s,t)}{X \prec \sigma s', Y \prec \tau t'}
\tag{def.\ $\Pr[\pi]{(s,t)}{}$} \\
&=\Pr[\pi]{(s,t)}{X \prec \sigma s'} \,.
\tag{additivity}
\end{align*}
\end{itemize}
The right-marginal condition is proven symmetrically.

Note that the discrepancy $\gamma_1^\C(s,t)$ is the maximal
probability of reaching a state pair $(u,v)$ such that $\ell(u) \neq \ell(v)$
by starting from the state pair $(s,t)$ in $\mathcal{A}_\C$. That is,
\begin{equation}
\gamma_1^\C(s,t) = \sup_\pi \Pr[\pi]{(s,t)}{\ell(X) \neq \ell(Y)} \,,
\label{eq:maxreach}
\end{equation}
where $\pi$ ranges over all schedulers for $\mathcal{A}_\C$.
Thus, from $\Pr[\pi]{(s,t)}{} \in \Omega(\Pr[\pi_S]{s}{}, \Pr[\pi_T]{t}{})$ we have
\begin{align*}
\Pr[\pi_S]{s}{E} &= \Pr[\pi]{(s,t)}{\ell(X) \in E} \\
&\geq \Pr[\pi]{(s,t)}{\ell(X) = \ell(Y), \ell(Y) \in E} \\
& = 1 - \Pr[\pi]{(s,t)}{\ell(X) \neq \ell(Y) \cup \ell(Y) \not\in E} \\
&\geq 1 -  \Pr[\pi]{(s,t)}{\ell(X) \neq \ell(Y)} -  \Pr[\pi]{(s,t)}{\ell(Y) \not\in E} \\
&= \Pr[\pi]{(s,t)}{\ell(Y) \in E} - \Pr[\pi]{(s,t)}{\ell(X) \neq \ell(Y)} \\
&\geq \Pr[\pi]{(s,t)}{\ell(Y) \in E} - \gamma_1^\C(s,t) \\
&= \Pr[\pi_T]{(s,t)}{E} - \gamma_1^\C(s,t) \,.
\end{align*}
From the above, we conclude that $ |\Pr[\pi_S]{s}{\ell(X) \in E} - \Pr[\pi_T]{t}{\ell(Y) \in E}| \leq \gamma^{\C}_1(s,t)$.
\end{description}
The proofs follows immediately by combining Part 1 and 2.
\end{proof}

%%%%%%%%%%%%%

Given Lemma~\ref{lem:variationaldiscrepancy} it is easy to establish the result stated in Theorem~\ref{th:minmaxbound}.
\begin{proof}[Proof of Theorem~\ref{th:minmaxbound}]
Let $d_{\mathbb{R}}(x,y) = | x - y |$ denote the Euclidean distance on the real line and define
$K = \{ \Pr[\pi]{s}{\ell(X) \in E} \mid \pi \in \Pi \}$ and $H = \{ \Pr[\pi]{t}{\ell(X) \in E} \mid \pi \in \Pi \}$.

Then, by~\cite[Lemma~3.2]{Memoli11}
\begin{align*}
\H{d_{\mathbb{R}}}(K,H)
&\geq \max \{  | \sup K - \sup H |, | \inf K - \inf H |  \} \\
&= \max \{  | \mathrm{Max}_s(E) - \mathrm{Max}_t(E) |, | \mathrm{Min}_s(E) - \mathrm{Min}_t(E) |  \} \,.
\end{align*}
Next we show that $\bdist[1](s,t) \geq \H{d_{\mathbb{R}}}(K,H)$.
\begin{align*}
&\H{d_{\mathbb{R}}}(K,H) \\
&= \inf \left\{ \sup_{(\pi,\pi') \in R} |\Pr[\pi]{s}{\ell(X) \in E} - \Pr[\pi']{t}{\ell(X) \in E}|  \Big| R \in \mathcal{R}(\Pi,\Pi) \right\}
\tag{Theorem~\ref{th:hausdorff}} \\
&\leq \inf \{ \gamma_1^\C  \mid \text{ $\C$ coupling structure for $\mathcal{A}$ }  \}
\tag{Lemma~\ref{lem:variationaldiscrepancy}} \\
&= \bdist[1](s,t) \,.
\tag{Theorem~\ref{th:mincoupling}}
\end{align*}
Therefore, $| \mathrm{Max}_s(E) - \mathrm{Max}_t(E) | \leq \bdist[1](s,t)$ and
$| \mathrm{Min}_s(E) - \mathrm{Min}_t(E) |  \leq \bdist[1](s,t)$.
\end{proof}

Another consequence of Lemma~\ref{lem:variationaldiscrepancy} is that the bisimilarity distance provides an upper bound of the Hausdorff lifting of the variational distance between sets of distributions induced by the Markov chains obtained by ranging over all possible schedulers. In the theorem we use $\tv{}$ to denote the
\emph{total variation distance} between probability measures, defined as $\tv{\mu, \nu} = \sup_E |\mu(E) - \nu(E)|$, where $E$ ranges over all measurable subsets.
\begin{thm}%
\label{th:tvbound}
$\H{\tv{}}(\{ \Pr[\pi]{s}{ \ell(X) \in \cdot } \mid \pi \in \Pi \}, \{ \Pr[\pi]{t}{ \ell(X) \in \cdot } \mid \pi \in \Pi \}) \leq \bdist[1](s,t)$.
\end{thm}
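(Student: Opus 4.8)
The plan is to mirror the argument used for Theorem~\ref{th:minmaxbound}, but instead of tracking a single property $E$ we measure the full total variation distance, and we use the scheduler set-coupling $R_\C$ produced by Lemma~\ref{lem:variationaldiscrepancy} more directly. First I would fix $s,t \in S$ and set $K = \{ \Pr[\pi]{s}{\ell(X) \in \cdot} \mid \pi \in \Pi \}$ and $H = \{ \Pr[\pi]{t}{\ell(X) \in \cdot} \mid \pi \in \Pi \}$, viewing them as nonempty subsets of the space of probability measures on $L^\omega$ equipped with the (pseudo)metric $\tv{}$. By M\'emoli's dual characterisation (Theorem~\ref{th:hausdorff}), $\H{\tv{}}(K,H) = \inf \{ \sup_{(\mu,\nu) \in R} \tv{\mu,\nu} \mid R \in \mathcal{R}(K,H) \}$, so it suffices to exhibit, for every coupling structure $\C$, a single set-coupling $R \in \mathcal{R}(K,H)$ all of whose pairs are at $\tv{}$-distance at most $\gamma_1^\C(s,t)$.

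Next I would transport the scheduler set-coupling $R_\C \in \mathcal{R}(\Pi,\Pi)$ supplied by Lemma~\ref{lem:variationaldiscrepancy} along the two surjections $\pi \mapsto \Pr[\pi]{s}{\ell(X) \in \cdot}$ and $\pi' \mapsto \Pr[\pi']{t}{\ell(X) \in \cdot}$ of $\Pi$ onto $K$ and $H$, by forming the pushforward
\[
R = \{ (\Pr[\pi]{s}{\ell(X) \in \cdot},\, \Pr[\pi']{t}{\ell(X) \in \cdot}) \mid (\pi,\pi') \in R_\C \}.
\]
A short check then shows $R \in \mathcal{R}(K,H)$: its left projection is the image, under $\pi \mapsto \Pr[\pi]{s}{\ell(X) \in \cdot}$, of the left projection of $R_\C$, which is all of $\Pi$, hence equals $K$; symmetrically its right projection is $H$.

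Then, for an arbitrary pair $(\mu,\nu) \in R$ arising from some $(\pi,\pi') \in R_\C$, the definition of total variation gives $\tv{\mu,\nu} = \sup_E |\Pr[\pi]{s}{\ell(X) \in E} - \Pr[\pi']{t}{\ell(X) \in E}|$ with $E$ ranging over the measurable subsets of $L^\omega$. Since Lemma~\ref{lem:variationaldiscrepancy} bounds each such difference by $\gamma_1^\C(s,t)$ uniformly in $E$, the supremum is also at most $\gamma_1^\C(s,t)$, so $\sup_{(\mu,\nu) \in R} \tv{\mu,\nu} \leq \gamma_1^\C(s,t)$. Combined with the M\'emoli identity this yields $\H{\tv{}}(K,H) \leq \gamma_1^\C(s,t)$ for every coupling structure $\C$. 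Finally I would take the infimum over all $\C$ and invoke Theorem~\ref{th:mincoupling}.\ref{th:mincoupling1}, that is $\bdist[1] = \sqcap \{ \gamma_1^\C \mid \text{$\C$ coupling structure for $\mathcal{A}$} \}$, to conclude $\H{\tv{}}(K,H) \leq \inf_\C \gamma_1^\C(s,t) = \bdist[1](s,t)$.

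The argument is essentially routine once Lemma~\ref{lem:variationaldiscrepancy} is available; the only point needing a little care is the interchange of suprema in the third step, namely observing that the \emph{uniform} (in $E$) bound furnished by the lemma is exactly what lets the total variation, itself a supremum over all measurable events, inherit the same bound. The verification that the pushforward relation is a genuine set-coupling follows immediately from the surjectivity of the two projection maps, so I do not expect any real obstacle there.
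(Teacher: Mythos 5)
Your proof is correct and takes essentially the same route as the paper's: M\'emoli's dual characterisation (Theorem~\ref{th:hausdorff}), the uniform-in-$E$ bound from Lemma~\ref{lem:variationaldiscrepancy}, and Theorem~\ref{th:mincoupling}.\ref{th:mincoupling1} to pass from $\inf_\C \gamma_1^\C(s,t)$ to $\bdist[1](s,t)$. The only difference is presentational: you explicitly push the scheduler set-coupling $R_\C \in \mathcal{R}(\Pi,\Pi)$ forward to a set-coupling in $\mathcal{R}(K,H)$, whereas the paper applies the dual characterisation directly over $\mathcal{R}(\Pi,\Pi)$ and leaves that identification implicit (if anything, your version spells out a step the paper glosses over).
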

\begin{proof}%[proof of Theorem~\ref{th:tvbound}]
%Let $K = \{ \Pr[\pi]{s}{ \ell(X) \in \cdot } \mid \pi \in \Pi \}$ and $H = \{ \Pr[\pi]{t}{ \ell(X) \in \cdot } \mid \pi \in \Pi \}$. Then
\begin{align*}
%&\H{\tv{}}(K,H) = \\
\H{\tv{}}&(\{ \Pr[\pi]{s}{ \ell(X) \in \cdot } \mid \pi \in \Pi \}, \{ \Pr[\pi]{t}{ \ell(X) \in \cdot } \mid \pi \in \Pi \}) = \\
&= \inf \left\{ \sup_{(\pi,\pi') \in R} \tv{\Pr[\pi]{s}{ \ell(X) \in \cdot }, \Pr[\pi']{t}{ \ell(X) \in \cdot }}   \mid R \in \mathcal{R}(\Pi,\Pi) \right\}
\tag{Theorem~\ref{th:hausdorff}} \\
&= \inf \left\{ \sup_{(\pi,\pi') \in R}    \sup_E | \Pr[\pi]{s}{\ell(X) \in E} - \Pr[\pi']{t}{\ell(X) \in E} |  \Big| R \in \mathcal{R}(\Pi,\Pi) \right\}
\tag{def.\ $\tv{}$} \\
&\leq \inf \{ \gamma_1^\C  \mid \text{ $\C$ coupling structure for $\mathcal{A}$ }  \}
\tag{Lemma~\ref{lem:variationaldiscrepancy}} \\
&= \bdist[1](s,t) \,.
\tag*{(Theorem~\ref{th:mincoupling}) \qedhere}
\end{align*}
\end{proof}

Theorem~\ref{th:tvbound} can be alternatively stated as follows. For any scheduler $\pi$ there exists a scheduler $\pi'$ such that $| \Pr[\pi]{s}{ \ell(X) \in E} -  \Pr[\pi']{t}{\ell(X) \in E} | \leq \bdist[1](s,t)$, for all measurable subsets $E \subseteq L^\omega$.

%\begin{rem}
%Whereas for the total variation distance on trace distributions it is undecidable whether the
%distance is greater than a given threshold~\cite{Kiefer18}, the bisimilarity distance is computable.
%\end{rem}

%%%%%%%%%%%%%%%%%%%%%%%%%%%%%%%%%%%%%%%%%%
%%%%%%%%%%%%%%%%%%%%%%%%%%%%%%%%%%%%%%%%%%

\section{Conclusion and Future Work}%
\label{sec:conclusion}

We presented a novel characterization of the probabilistic bisimilarity distance of Deng et al.~\cite{DengCPP06} as the solution of a simple stochastic game.
Starting from it, we designed algorithms for computing the distances based on Condon's simple policy iteration algorithm. The correctness of Condon's approach relies on the assumption that the input game is stopping. This may not be the case for our probabilistic bisimilarity games when the discount factor is one. We overcame this problem by means of an improved termination condition based on the notion of self-closed relation due to Fu~\cite{Fu12}.
%Notably, the resulting algorithm extends those proposed in~\cite{BacciLM:tacas13,TangB16} for Markov chains. \textcolor{red}{and experiments show that our algorithm, when run on Markov chains, does not suffer in terms of performances from its generality.}

As in~\cite{TangB16}, our simple policy iteration algorithm has exponential worst-case time complexity.
Nevertheless, experiments show that our method can compete in practice with the value iteration algorithm by
Fu~\cite{Fu12} which has theoretical polynomial-time complexity for $\lambda < 1$. To the best of our knowledge, our algorithm is the first practical solution for computing the bisimilarity distance when $\lambda = 1$, performing orders of magnitude faster than the existing solutions based on the existential fragment of the first-order theory of the reals~\cite{ChatterjeeAMR08,ChatterjeeAMR10,ChenHL07}.
%The importance of a practical solution for computing the undiscounted distance is further motivated by its relation with probabilistic model checking. We showed that the bisimilarity distances can be used to bound the variational difference between two states with respect to their maximal (and minimal) probability of satisfying arbitrary $\omega$-regular specifications.

As future work, we plan to improve on the current implementation in the line of~\cite{TangB18}, by exploiting the fact that bisimilar states and probabilistic distance one~\cite{TangB18:concur} can be efficiently pre-computed before starting the policy iteration.
We believe that this would yield a significant cut down in the time required to compute the discrepancy at each iteration which turned out to be the bottleneck of our algorithms.

More efficient algorithms might lead to the speedup of verification tools
for concurrent probabilistic systems, as behavioral distances relate to the satisfiability of logical properties.
For the case of labelled Markov chains, in~\cite{ChenBW12,BacciBLM15} the variational difference between two states with respect to their probability of satisfying linear-time properties (eg., LTL formulas) is shown to be bound by
the (undiscounted) probabilistic bisimilarity distance. In Section~\ref{sec:linearProperties} we showed that a similar result holds for the case of probabilistic automata with additional subtleties that arise by the need of handling the nondeterminism. In light of this relation it would be interesting to develop approximated techniques to cut down the overall model-checking time of probabilistic automata as briefly discussed in Remark~\ref{rem:approxModel-checking}.

We also plan to extend the work on approximated minimization~\cite{BacciBLM17,BacciBLM18} to the case of probabilistic automata and explore the possible relation between the probabilistic bisimilarity distance with  more expressive logics for concurrent probabilistic systems~\cite{ChatterjeeAMR08,ChatterjeeAMR10,Mio12}.

\paragraph*{\textbf{Acknowledgments}} The authors are grateful to the referees for their constructive feedback.

\bibliographystyle{alpha}% the recommended bibstyle
\bibliography{mdpHKdist}

\end{document}

